\documentclass[acmsmall, screen, nonacm]{acmart}
\AtBeginDocument{%
  }

\usepackage{xspace}
\usepackage{caption}
\usepackage{subcaption}
\usepackage{wrapfig}

\newcommand{\E}[1]{\mathrm{\bf E}\left[#1\right]}
\newcommand{\fixedwidth}{FW-Relax\xspace}
\newcommand{\policyone}{FW-CAM\xspace}
\newcommand{\policytwo}{WHAM\xspace}
\newcommand{\HC}{\widebar{Cost}}
\newcommand{\Exp}{\mbox{Exp}}

\newcommand{\kR}[2]{k_i^*\left( \left\{#1\right\},#2 \right)}
\newcommand{\VR}[2]{V_R^*\left( \{#1\},#2 \right)}

\newcommand{\numcore}{n}
\newcommand{\numclass}{m}
\newcommand{\systemindex}{d}
\newcommand{\arrivalasym}[1]{\lambda_{#1}^{(\systemindex)}}
\newcommand{\numcoresasym}{\numcore^{(\systemindex)}}
\newcommand{\budget}{Relaxed\xspace}
\newcommand{\ourproblem}{CAM\xspace}
\newcommand{\inasym}{^{(\systemindex)}}

\newcommand{\transm}{\textbf{Q}}
\newcommand{\initp}{\textbf{p}}
\newcommand{\massv}{\textbf{z}}

\newtheorem{assumption}{Assumption}

\setcopyright{none}
\RenewDocumentEnvironment{wrapfigure}{o m m}
{%
  \begin{figure}[htbp]
  \centering
  \RenewDocumentCommand{\vspace}{s m}{} 
}
{%
  \end{figure}
}

\author{Zhouzi Li}
\email{zhouzil@andrew.cmu.edu}
\affiliation{
  \institution{Carnegie Mellon University}
  \department{Computer Science Department}
  \country{United States}
}

\author{Mor Harchol-Balter}
\email{harchol@cs.cmu.edu}
\affiliation{
  \institution{Carnegie Mellon University}
  \department{Computer Science Department}
  \country{United States}
}

\author{Benjamin Berg}
\email{ben@cs.unc.edu}
\affiliation{
  \institution{UNC Chapel Hill}
  \department{Computer Science Department}
  \country{United States}
}

\begin{document}

\title{Mean field optimal Core Allocation across Malleable jobs}

\begin{abstract}
Modern data centers and cloud computing clusters are increasingly running workloads composed of malleable jobs.  A malleable job can be parallelized across any number of cores, yet the job typically exhibits diminishing marginal returns for each additional core on which it runs.  This can be seen in the concavity of a job's speedup function, which describes the job's processing speed as a function of the number of cores on which it runs.  Examples of malleable jobs include machine learning tasks parallelized across multiple GPUs, and large-scale database queries parallelized across multiple servers. 

Given the prevalence of malleable jobs, several theoretical works have posed the problem of how to allocate a fixed number of cores across a stream of arriving malleable jobs so as to minimize the mean response time across jobs.  
We refer to this as the Core Allocation to Malleable jobs (CAM) problem.
Previous theoretical work has only addressed the CAM problem under the assumption that either all jobs follow the same speedup function, or the speedup functions all follow a very unrealistic, simple form. By contrast, we solve the CAM problem under a highly general setting.
We allow for multiple job classes, each with an arbitrary concave speedup function and holding costs (weight).
Furthermore, unlike prior work, we allow for generally distributed inter-arrival times and job sizes. 

We analyze the CAM problem in the mean field asymptotic regime and derive two distinct mean field optimal policies, FW-CAM and WHAM. 
FW-CAM  is interesting because it demonstrates a new intuition: in the mean field regime, job sizes are not relevant in finding an optimal policy.  
WHAM (Whittle Allocation for Malleable jobs) is interesting because it is asymptotically optimal and also serves as a good heuristic even outside of the asymptotic regime.
Notably, none of the policies previously proposed in the literature are mean field optimal when jobs may follow different speedup functions.
\end{abstract}

\keywords{Core allocation; Speedup functions; Mean field regime; Parallelizable jobs scheduling; Scheduling theory; Queueing theory; Malleable jobs; Whittle policy}


\maketitle

\section{Introduction}

Modern data centers and cloud computing clusters are increasingly tasked with hosting workloads composed of \textit{malleable} jobs, such as machine learning training tasks~\cite{jayaram2023sia, qiao2021pollux}, large-scale database queries~\cite{leis2014morsel,wagner2021self}, and data center computing tasks~\cite{tumanov2016tetrisched,delgado2018kairos}. Unlike traditional single-server jobs, malleable jobs can utilize multiple processor cores to accelerate their execution. Furthermore, unlike multi-server jobs that require a specific number of cores to run, malleable jobs do not have hard constraints on the degree of parallelism; a malleable job can execute on {\em any} number of cores, and can change its degree of parallelism as it runs. Given a stream of malleable jobs that arrive to a system over time, a system operator must allocate a fixed pool of $n$ cores among the arriving jobs to optimize system performance (e.g., the mean response time, where a job's response time is the time from when it arrives until it completes). Designing an efficient \textit{Core Allocation Policy} for this setting remains a significant open challenge.

Motivated by these applications, this paper studies the problem of \textit{Core Allocation to Malleable jobs} (referred as \ourproblem later in this paper). In this problem, we are given $n$ cores and a stream of malleable jobs. The goal is to, at all times, determine how many cores to assign to each job -- the number of cores assigned to a job can change over time.   The objective is to minimize the (weighted) mean response time (i.e., time-average total holding cost, see Section~\ref{sec:setting}). For ease of exposition, we refer to the objective as mean response time throughout the introduction. 

\subsection{Speedup functions and their effect on running time}
\begin{wrapfigure}{r}{.4\textwidth}
        \centering
    \includegraphics[width=0.35\textwidth]{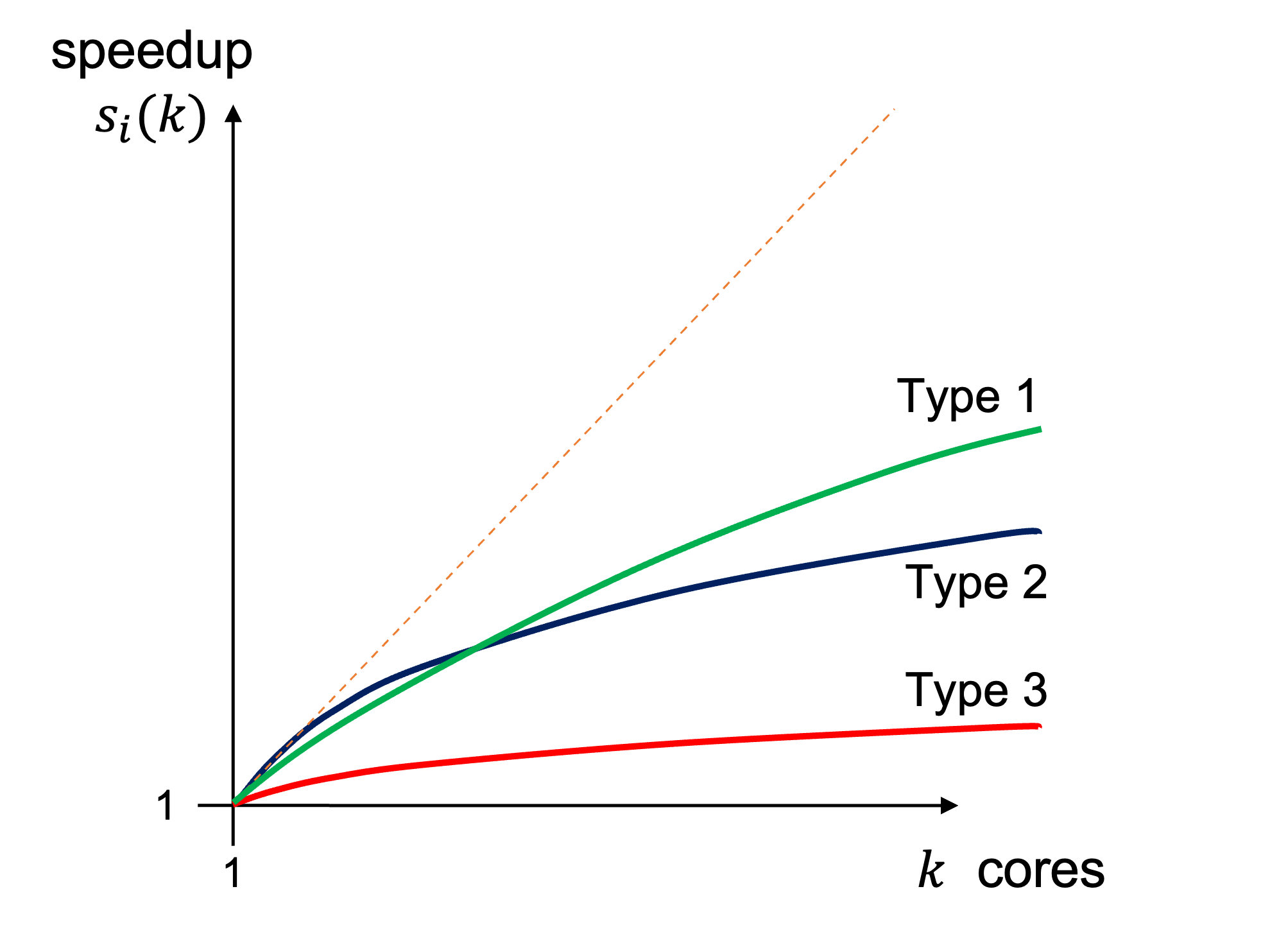}
        \caption{Examples of speedup functions.  }
        \label{fig:speedups}   
\end{wrapfigure}
A defining characteristic of this problem is the diminishing marginal returns associated with parallel execution.
A job's \emph{speedup function}, $s(k)$, describes how much faster a job runs on $k$ cores compared to a single core.
Job speedup functions are typically concave and sublinear (see Figure~\ref{fig:speedups} for some examples).  Hence running a job on 4 cores may only increase its speed by a factor of 2.

The \ourproblem problem is complex because there are multiple classes of jobs, where every job class is associated with a different speedup function.  Specifically, $s_i(k)$ denotes the speedup function of a class-$i$ (see Section~\ref{sec:setting} for the formal definition).   Additionally, every class of job can have a different {\em size} distribution.  The {\em size} of a job is the inherent work associated with the job, which we measure as the job's running time on a single core.
The {\em running time} of a job on $k$ cores is determined by its size, speedup function, and the number of cores, $k$, as explained in Section~\ref{sec:setting}.

\subsection{System Load vs. Effective Load}

In the \ourproblem problem, as more cores are allocated to a single job, the overall system efficiency (defined as the total inherent work completed per unit of core time) decreases.  To formalize this, we introduce a distinction between system load and effective load.

For an $\numclass$-class M/G/$\numcore$ system ($\numcore$ cores), the  {\em system load}, $\rho$ is defined as 
$$\rho = \frac{\sum_{i=1}^{\numclass}\lambda_i \E{X_i}}{\numcore},$$
where $\lambda_i$ is the arrival rate of class-$i$ jobs, $X_i$ is a random variable denoting the job size of a class-$i$ job, and $\numcore$ is the number of cores.  The system load is important in determining system performance.

While the system load can be defined exactly the same way for  the \ourproblem problem, the system load alone is no longer sufficient to capture the queueing behavior of the system.  
This is because parallelization increases the "effective work" of a job due to sublinear speedup.
For instance, if a job runs on 4 cores but achieves only a $2\times$ speedup, it consumes twice the core-seconds compared to the case when it is run on 1 core. In other words, the job's ``effective work'' is double its inherent work. Consequently, the \textit{effective load} on the system is strictly greater than the inherent system load, and the system may become unstable even when the system load is smaller than 1. Thus, essentially, the queueing behavior of a system is determined by the {\em effective load}, which depends on both the system load and the core allocation policy. 
This concept of effective load is central to our analysis in Section~\ref{sec: FW Cap} and is further illustrated in Section~\ref{sec: FW CAM intuition}.

\subsection{Asymptotic Regimes}

We are motivated to study the \ourproblem problem in the realm of asymptotic regimes because of its complexity.  Note that the \ourproblem problem is strictly more challenging than the classical M/G/$\numcore$ scheduling problem (where jobs are not parallelizable and each occupies a single server).  The optimal scheduling policy for the M/G/$\numcore$ remains open except under asymptotic regimes (see Section~\ref{sec:prior} for more discussion). Hence, in order to build intuition for a good allocation policy for \ourproblem, we turn to asymptotic analysis.

We limit our attention to the wide class of regimes where both the arrival rate and number of cores go to infinity while the job sizes stay fixed.  In traditional M/G/$\numcore$ systems, the regimes are differentiated by how the arrival rate scales compared to the number of cores.  For example, in the {\em mean field} regime~\cite{Dai_2010,gast2017refined,gast2012mean,gast2010mean}, the arrival rate scales proportionately to the number of cores; hence the system load $\rho$ is held fixed. This is a good representation for real-world large-scale systems which operate under reasonable (not too high) loads. By contrast, in non-mean-field regimes, the arrival rate scales faster than the number of cores, causing the system load to approach $1$\cite{halfin1981heavy,gupta2019load}. 

In the \ourproblem problem, running jobs on more than 1 core causes the effective load to be higher than the system load.  Hence for the \ourproblem problem, regimes where the system load $\rho$ approaches $1$ are not interesting because they do not allow jobs to be parallelized.  
Specifically, given jobs that follow strictly sublinear speedup functions, if any constant fraction of jobs run on more than 1 core, the effective load is a constant factor larger than the system load (which is already approaching 1), resulting in an  effective load greater than $1$ and system instability. We formalize this intuition in Proposition~\ref{lemma: app asym}.
This observation compels us to focus on the \textit{mean field} asymptotic regime (see Section \ref{sec:setting}), where the system load remains constant while the number of cores scales.

\subsection{Our Approaches and Impacts}

We derive two distinct mean-field optimal policies, \textbf{\policyone} and \textbf{\policytwo}, using fundamentally different methodologies.

\textbf{\policyone (Definition~\ref{def:FW CAM}):} The first policy, \policyone (Fixed-Width policy for \ourproblem), leverages the effective load intuition. Recall that in the mean field regime, the {\em system load $\rho$} is fixed.  The key idea of \policyone is to select a degree of parallelization that pushes the {\em effective load} towards $1$, in a Sub-Halfin-Whitt manner~\cite{halfin1981heavy}, which, we will show, avoids queueing.  Section~\ref{sec: FW CAM intuition} details this idea.

While conceptually simple, \policyone reveals a critical insight: to achieve mean field optimality, a policy should treat all jobs in the same class (i.e., having the same speedup function) equally, allocating cores independently of the job's remaining size. Any policy that fails to do this is {\bf not} mean field optimal (see Proposition~\ref{prop:app no job size}). This contradicts the classical scheduling intuition of "prioritizing short jobs" (e.g., \cite{berg2021hesrpt,berg2024asymptotically}).

However, while \policyone is asymptotically optimal, we find that its performance degrades outside the asymptotic limit (if $\numcore$ is small). To address this, we derive a second policy based on the Whittle Index~\cite{whittle1988restless}, which is mean field optimality \emph{and} provides robust performance for small $\numcore$.

\textbf{\policytwo (Definition~\ref{def:policytwo}):} Our second policy, WHAM (WHittle Allocation policy for Malleable jobs), applies the Whittle policy framework to the \ourproblem problem. Whittle policies, introduced in the seminal work \cite{whittle1988restless}, are known for their asymptotic optimality and strong empirical performance for \emph{bandit problems} (see Section~\ref{sec:prior queue and bandit} for prior work). 

However, to apply the Whittle policy to queueing problems, one must translate the queueing problem into a bandit problem.  Then, deriving the Whittle policy for the corresponding bandit problem is usually intricate
(e.g., \cite{larranaga2014index,aalto2024whittle}). We provide background on the Whittle policy in Section~\ref{sec: sec 6}. 

A simplified definition of our \policytwo policy for minimizing mean response time appears below.

\begin{definition}[Informal, \policytwo for minimizing the mean response time]
Given $\numcore$ cores, at every moment of time,
    \begin{itemize}
        \item If the number of jobs is larger than $\numcore$, allocate 1 core to each of the $\numcore$ jobs with smallest remaining size.

        \item Otherwise, for each job $j$ in the system, with job class $i_j$, find the largest possible allocation $k_j$ such that the value of $f_{i_j}(k_j):=\frac{s_{i_j}'(k_j)}{s_{i_j}(k_j)-ks_{i_j}'(k_j)}$ is equal for every job and the total number of cores used across jobs is less than $n$.
\end{itemize}
\end{definition}

Notably, \policytwo exhibits a "two-mode" structure. In the mean field regime, the first mode (number of jobs $> n$) rarely occurs, aligning with the \policyone insight that job size information is not needed for a mean field optimal policy. However, outside of the asymptotic regime, both modes are essential. Simulations show that \policytwo has strong performance even outside of the mean field regime, and its performance converges to the asymptotic optimum quickly (see Figure~\ref{fig:converge}).  


\subsection{Our Contributions}

The contributions of this paper are as follows: 
\begin{itemize}
    \item We derive two mean field optimal policies, FW-CAM and WHAM, for the \ourproblem problem under extremely general settings. In particular, we allow for generally-distributed inter-arrival times, generally-distributed job sizes, multiple job classes with distinct speedup functions, and a different holding cost for each class.
    \item (Section~\ref{sec: FW Cap}) Our first policy, \policyone, leverages the concept of effective load to derive a mean field optimal policy, where all jobs in the same class get the same allocation. 
    \item (Section~\ref{sec: sec 6}) Our second policy, \policytwo, serves as a bridge between theory and practice. It is not only mean field optimal (subject to some mild conditions) but also demonstrates superior empirical performance, converging rapidly to the optimal limit. It serves as a strong heuristic for general \ourproblem settings.
    \item Prior work on the \ourproblem problem has been largely limited to only {\em one} speedup function or a narrow class of speedup functions (see Section~\ref{sec:prior}).  
    Even for the simple setting of just a single speedup function, our analysis provides novel insights challenging the conventional wisdom of prioritizing short jobs. 
    \item Finally, we believe the derivation of the Whittle policy for the \ourproblem problem (\policytwo) is of independent theoretical interest, particularly the emergence of the two-mode behavior. 
\end{itemize}

\section{Prior Work}
\label{sec:prior}

We review the prior work in three categories: 
work on relevant scheduling problems that differ from \ourproblem, prior work on the \ourproblem problem, and work applying bandit theory to queueing.

\subsubsection*{Scheduling problems relevant to, but different from \ourproblem}


Even under traditional multi-server scheduling, where jobs are not parallelizable and each job occupies only a single server, the optimal policy is still widely open. For M/G/$n$ systems, the only known optimality result is that SRPT-$n$ is optimal in the conventional heavy traffic regime~\cite{grosof2019srpt}. However, it has recently been shown that SRPT-$n$ is {\em not} optimal under non-limiting load~\cite{grosof2024bounds}, and this line of work is still active.

Another active line of work on multi-server scheduling involves {\em multiserver jobs} (e.g. ~\cite{grosof2022optimal,chen2025improving}).  These jobs are parallelizable, but not malleable.  Specifically, in~\cite{grosof2022optimal,chen2025improving} each job requires a fixed given number of cores specific to that job. It is hard to translate intuitions learned from the multiserver job setting to the \ourproblem setting where jobs are malleable and can run on {\em any }number of cores.

\subsubsection*{\ourproblem: Core Allocation Policies for Malleable Jobs}
A variety of work from the system community considers scheduling malleable jobs \cite{lin2018model,qiao2021pollux,jayaram2023sia,tumanov2016tetrisched,wagner2021self}.
However, all of these systems rely on simple heuristic policies with no theoretical guarantees.

The \ourproblem problem has been studied by the performance modeling community over the past decade, but most of this prior work on     \ourproblem~\cite{berg2017towards,berg2021optimal, berg2021hesrpt} assumes that all jobs follow a {\em single} speedup function. 
While some work considers multiple speedup functions~\cite{berg2020optimal,berg2022case,berg2024asymptotically}, this work assumes that the speedup functions follow a {\em rooftop} shape: class-$i$ jobs are perfectly parallelizable up to $k_i$ cores, but get no further benefit from $> k_i$ cores.
Even within this restrictive setting of rooftop speedup functions, very few results are known.
The closest work to ours is~\cite{berg2024asymptotically}.
However that work focuses on asymptotic regimes where $\rho \to 1$, which are not meaningful when considering strictly sublinear speedup functions (See Proposition~\ref{lemma: app asym}). 

Finally, we note that all the prior work on the \ourproblem problem assumes that job sizes are exponentially distributed and that the arrivals follow a Poisson process (the only exception is  \cite{berg2021hesrpt}, which considers a clearing system). 
Our work allows for generally distributed, independently sampled (GI) job size distributions and inter-arrival times.

\subsubsection*{Using Bandit Theory to solve Scheduling problems}
\label{sec:prior queue and bandit}

The latter half of the paper will leverage a bandit model. We do not consider a bandit problem in online learning sense; instead, we consider Markovian bandits, a special case of Markov Decision Processes (MDPs) (see Section~\ref{sec:background bandit} for more background on Markovian bandits).
Throughout the rest of this paper, ``bandits'' refers to Markovian bandits. 

There is a subarea of queueing theory that first transforms various scheduling problems into bandit problems and then leverages bandit theory to solve the scheduling problem, e.g., \cite{aalto2009gittins,scully2021gittins,larranaga2014index,aalto2024whittle,yu2024strongly}. 
While we aim to apply this technique, the translation of the \ourproblem problem into a bandit problem introduces unique challenge: Unlike most scheduling problems, in \ourproblem one not only needs to decide which job to run, but also needs to decide how many cores to run the job on.
Thus, traditional bandits, which only support a binary action for each job (either "serve" or "queue"), cannot model the \ourproblem problem.
The \ourproblem translates most cleanly to the recently-introduced ``multi-gear'' bandit framework ~\cite{nino2022multi}, which allows multiple actions allowed for each job.

\subsubsection*{The Whittle policy for bandit problems}
\label{sec: prior whittle}

The \policytwo policy, which we will derive in Section~\ref{sec: sec 6}, is a Whittle policy~\cite{whittle1988restless}.  Variants of Whittle policies have been shown to be asymptotically optimal in various bandit settings (e.g. \cite{whittle1988restless,weber1990,ayesta2021computation}); we provide background Whittle policies in Section~\ref{sec: background whittle}. Even outside of asymptotic regimes, the Whittle index is known to be a good heuristic for bandit problems \cite{nino2023markovian}.

Deriving the Whittle policy for a bandit problem is difficult in general, and deriving the Whittle policy for the \ourproblem problem involves two particular challenges:
First, because the \ourproblem problem is a multi-gear bandit, it necessitates a more complex Whittle formulation~\cite{nino2022multi}.
Second, while our objective is to minimize the time-average total holding cost, the objective of the multi-gear bandit problem in \cite{nino2022multi} is to maximize total reward. This difference in objectives requires us to apply the ``vanishing discount factor" technique~\cite{Whittle_2005}, which adds greatly to the complexity of our analysis.

\section{Core Allocation to Malleable jobs: Our Problem Setting}
\label{sec:setting}

We consider an $\numclass$-class GI/GI/$\numcore$ system with $\numcore$ homogeneous cores. 

\subsection{Job Classes}
There are $\numclass$ classes of jobs.  
Each class has its own arrival process, job size distribution, speedup function and holding cost, all to be defined below.

\paragraph{Arrival process: }
The arrival process of class $i$ is defined by i.i.d. inter-arrival times drawn from distribution $A_i$. We assume that $A_i$ has mean $\frac{1}{\lambda_i}$ and a finite second moment. Define $\lambda:=\sum_{i = 1}^\numclass \lambda_i$ to be the total arrival rate of jobs. 

\paragraph{Job size distribution: }
The {\em size} of a job is defined as the time needed to complete a job on a single core. We can think of this as the {\em inherent work} associated with the job. The sizes of class $i$ jobs are i.i.d. random variables, $X_i$, with $\E{X_i}=1/\mu_i$. We assume that $X_i$ has a finite third moment for each class, $i$. Throughout, we assume job sizes are known, but surprisingly, the exact size of each job turns out to be useless in the mean field regime.

\paragraph{Speedup function:}

A job's running time depends on the number of cores the job runs on and on its {\em speedup function}.
We define $s_i: \mathbb{R}^+\to \mathbb{R}^+$ to be the {\em speedup function} of class $i$ jobs. If a class $i$ job with size $x$ runs on $k$ cores, then its running time becomes
$$ \mbox{Running time on } k \mbox{ cores} = \frac{x}{s_i(k)}.$$

Note that we assume that jobs can be assigned fractional cores, but a job cannot be assigned between 0 and 1 core. 
We make the following mild assumptions on the speedup functions: For any job class $i$, the speedup function $s_i(\cdot)$ should fulfill the following properties:
\begin{itemize} 
\item $s_i(1)=1$.
    \item Sub-linearity: $s_i(k)\leq k$ for $k\geq1$.
    \item Not perfectly elastic: $\exists k>1$ such that $s_i(k)<k$.
    \item Monotonicity: for any $1\leq k_1<k_2$, we have that $s_i(k_1)\leq s_i(k_2)$.
    \item Concavity: for any $1\leq k_1<k_2$, we have that $s_i(k_1)/k_1 \geq s_i(k_2)/k_2$.
    \item Smoothness: $s_i'(k)$ exists for all $k>1$.
\end{itemize}

Although real-world speedup functions may not be perfectly concave, in practice, this is dealt with by using the monotonic, concave hull of an empirically measured speedup function \cite{Arpan2024}.

\subsection{Performance Metrics}
 \paragraph{Holding Cost: }
Every class $i$ job is associated with a \textit{holding cost} $c_i$. The holding cost of a job specifies the \emph{rate} that cost is incurred per second that the job is in the system, i.e., the job costs $c_i$ dollars per second from when it arrives until it departs.

 \paragraph{Goal: }The goal is to minimize the time-average total holding cost incurred, which we denote by $\HC$. 
 Mathematically, let $T_i$ be the stationary response time of class-$i$ jobs. Then each class-$i$ job contributes an expected total holding cost of $c_i \cdot \E{T_i}$. Thus, the {\em time-average} total holding cost contributed by all class-$i$ jobs is $\lambda_i c_i \E{T_i}$. Hence the minimization objective of our CAM problem, the (normalized) time-average total holding cost, can be expressed as below\footnote{In this paper, we assume ergodicity and the existence of a stationary distribution of the system.}:
 \begin{equation}
     \HC:= \frac{1}{\lambda}\sum_{i=1}^\numclass \lambda_i c_i\E{T_i}.
     \label{eq:HC}
 \end{equation}
 Here the normalization (by $\frac{1}{\lambda}$) does not change the optimal policy. It is only for convenience in the analysis of the asymptotic regime. 

\paragraph{No Preemption Overhead: } We assume that there is no overhead when preempting or changing a job's allocation. This simplification is also assumed in most prior theory works, e.g., \cite{berg2020optimal, berg2024asymptotically,berg2022case,Arpan2024,berg2021hesrpt}).

\paragraph{Stability considerations:}  While we do not derive exact necessary and sufficient conditions for stability in the \ourproblem problem, we will prove that both of our two policies, \policyone and \policytwo stabilize the system in the mean field regime (see Lemma~\ref{lemma: no queueing} and Lemma~\ref{lemma:wham stability}).  



\subsection{The Mean Field Regime}
The main results of this paper consider the {\em mean field scaling regime}.
In the mean field regime, the number of classes, job size distributions and speedup functions are held fixed, and the arrival rate is scaled proportionally with the number of cores.
Formally, a sequence of systems indexed by $\systemindex$ are considered as $d\to\infty$. Let $\lambda_i^{(\systemindex)}$ be the arrival rate of type-$i$ jobs in the $\systemindex^{th}$ system, and let  $\numcore^{(\systemindex)}$ be the number of cores in the $\systemindex^{th}$ system. Then in the mean field regime, $\arrivalasym{i}=\systemindex\cdot \lambda_i$ and $\numcoresasym=\systemindex\cdot \numcore.$ 
More precisely, inter-arrival times in the $\systemindex^{th}$ system are sampled from $A_i\inasym:=\frac{A_i}{\systemindex}$. In this way, the mean inter-arrival time scales, but the coefficient of variation of the inter-arrival time distribution stays constant. 
Importantly, in the mean field regime, the {\em system load} stays constant as $d$ is increased. Mathematically, the system load in the $d^{th}$ system is defined as $$\rho\inasym := \frac{1}{\numcore\inasym} \sum_{i=1}^\numclass \lambda_i\inasym \cdot \E{X_i},$$ and is the same for all $\systemindex.$

We refer to the \ourproblem problem in the $\systemindex^{th}$ system as $\ourproblem \inasym$.
A policy $\pi$ with cost $\HC^\pi_{\ourproblem\inasym}$ (as defined in \eqref{eq:HC}) is considered {\em mean field optimal} for the \ourproblem problem if, for any other policy $\pi'$ with cost $\HC^{\pi'}_{\ourproblem\inasym}$ we have
$$\lim_{d\to\infty}\frac{\HC^\pi_{\ourproblem\inasym}}{\HC^{\pi'}_{\ourproblem\inasym}} \leq 1.$$

\section{Preliminaries: A relaxed version of the $\ourproblem$ problem}
\label{sec:relax}
In this section, we introduce a relaxed version of the \ourproblem problem, which serves as a lower bound on $\HC$ (Defined in \eqref{eq:HC}) for the \ourproblem problem.  
In the \budget problem, we drop the hard constraint that at every moment of time only $\numcore$ cores are available. Instead, we allow the total number of cores to fluctuate at our control, so long as the time-average number of cores is no more than $\numcore$. 
This relaxation is commonly used in queueing/bandits theory to prove asymptotic optimality~\cite{Verloop_2016,weber1990,whittle1988restless}.

Obviously, the cost for the \budget problem is a lower bound for the \ourproblem problem cost, since the constraint of a fixed number of cores is relaxed. Mathematically, denote the optimal (normalized) time-average total holding cost of the \ourproblem problem by $\HC_{\ourproblem}^*$  and the optimal (normalized) time-average total holding cost of the \budget problem by $\HC_{\budget}^*$. Then we have that 
\begin{equation}
    \HC_{\ourproblem}^* \geq \HC_{\budget}^*.
    \label{eq:budget is lower bound}
\end{equation}




By relaxing the constraint on the number of cores used, the \budget problem becomes a simple convex optimization problem that can be solved using standard techniques.
In fact, ~\cite{li2024rentgpusbudget} has independently solved the \budget problem in a different context.  For brevity, we restate their solution here and defer the reader to \cite{li2024rentgpusbudget} for a full proof.

\begin{theorem}[Theorem 1 in \cite{li2024rentgpusbudget}]
    \label{thm: optimal policy for relax}
    The optimal policy for the \budget problem is: Whenever a class-$i$ job arrives in the system, $k_i$ cores are immediately allocated to it until it completes, where $k_i$ is the solution to the following convex optimization problem
    \begin{equation}
        \begin{aligned}
        & \underset{k_i} {\text{minimize}}
        & & 
        \sum_{i=1}^\numclass \frac{\lambda_i}{\lambda}\frac{ \E{X_i} c_i }{s_i(k_i)}
        \\
        & \text{subject to}
        & & 
        \sum_{i=1}^\numclass\frac{\lambda_i \E{X_i} k_i}{s_i(k_i)}\leq \numcore,\\
        & 
        & & 
        k_i\geq 1.\\
        \end{aligned}
    \label{eq:online opt}
\end{equation}
\end{theorem}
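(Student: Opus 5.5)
The plan is to show two things: that the convex program \eqref{eq:online opt} is a lower bound on the cost of any policy for the \budget problem, and that the static policy described in the statement meets this bound. The key observation is that once the instantaneous core constraint is dropped, there is no queueing, so each job's response time depends only on how cores are allocated to that job over its lifetime.

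\textbf{Step 1: reduce to a single job.} Fix any admissible policy. For a class-$i$ job of size $x$, let $k(t)\ge 1$ be its allocation while it is in service, and let $T$ be its response time (which may include periods with zero cores). The job completes when $\int s_i(k(t))\,dt = x$, and it consumes core-time $C=\int k(t)\,dt$. Write $\bar k$ for the time-average of $k(t)$ over the service periods and $\tau\le T$ for the total service duration.
\begin{itemize}
\item By concavity of $s_i$ and Jensen's inequality, $x\le \tau\, s_i(\bar k)$.
\item The core-time satisfies $C=\tau\bar k$.
\end{itemize}
Hence $T\ge x/s_i(\bar k)$ and $C\ge x\bar k/s_i(\bar k)$. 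The second bound uses that $s_i(k)/k$ is nonincreasing; equivalently, $k/s_i(k)$ is nondecreasing.

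\textbf{Step 2: aggregate and handle dependence on size.} The allocation may depend on $x$ and on time, so $\bar k$ is a random variable correlated with $x$. Averaging over jobs, I define for class $i$ an effective allocation $k_i$ by
\[
\E{X_i}/s_i(k_i)=\E{X_i/s_i(\bar k)}.
\]
I then need to show that the average core usage is at least $\lambda_i\E{X_i}k_i/s_i(k_i)$. This step requires the convexity of the map $u\mapsto g_i(u)$, where $u=1/s_i(k)$ is the normalized time and $g_i(u)=k/s_i(k)$ gives the cores consumed per unit work. Concavity of $s_i$ makes $g_i$ convex as a function of $1/s_i$: this is the standard time–core tradeoff being convex.

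With $g_i$ convex, Jensen's inequality applied under the size-biased measure $x\,dP/\E{X_i}$ gives the core bound. By the law of large numbers and renewal–reward over the stationary regime, the time-average number of cores is then $\sum_i\lambda_i\E{X_i}k_i/s_i(k_i)\le\numcore$. The normalized cost is at least $\sum_i \frac{\lambda_i}{\lambda}c_i\E{X_i}/s_i(k_i)$. So every policy's cost is bounded below by the value of \eqref{eq:online opt}.

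\textbf{Step 3: achievability and convexity.} The static policy that allocates $k_i^*$ cores from arrival onward has response time exactly $X_i/s_i(k_i^*)$. Its average core usage equals the constraint's left-hand side, so it is feasible and attains the bound. Convexity of the program follows after the change of variables $u_i=1/s_i(k_i)$: the objective is linear in $u_i$ and the constraint is $\sum\lambda_i\E{X_i}g_i(u_i)\le\numcore$ with $g_i$ convex. This convexity is what justifies calling \eqref{eq:online opt} a convex optimization problem.

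\textbf{Main obstacle.} I expect the hardest part to be Step 2: rigorously establishing convexity of $g_i$ from only the stated assumptions (monotonicity, concavity, differentiability), and handling allocations that vary with time and with job size. I would first verify $g_i''\ge0$ by differentiating in $u$ using $s_i''\le0$. For nonsmooth edge cases, I would fall back on a perspective-function argument.
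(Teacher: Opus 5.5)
Your proof is correct, but it takes a different route from the one the paper relies on. The paper gives no proof here and defers to Theorem~1 of \cite{li2024rentgpusbudget}. Its proof of Proposition~\ref{prop:app no job size} shows how that argument goes (via Lemma~1 of the cited work): it aggregates over job--time pairs rather than per job. Let $h_i$ be the fraction of time a class-$i$ job in service runs on $k$ cores, and set $k'=\int k\,h_i(k)\,dk$. Work conservation and Little's law then give
\[
\E{T_i}\ge \frac{\E{X_i}}{\int s_i h_i}\ge \frac{\E{X_i}}{s_i(k')},
\qquad
\text{class-$i$ core usage}=\frac{\lambda_i\E{X_i}k'}{\int s_i h_i}\ge \frac{\lambda_i\E{X_i}k'}{s_i(k')} .
\]
So the static allocation $k'$ dominates the given policy in both response time and core usage, using a single application of Jensen to $s_i$.

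You instead apply Jensen per job. This forces you to recombine heterogeneous per-job averages $\bar k$ under a size-biased weighting, which is exactly why you need the extra lemma that $g_i(u)=u\,s_i^{-1}(1/u)$ is convex. The lemma is true: $g_i$ is the perspective of the convex function $s_i^{-1}$, and $g_i''(u)=(s_i^{-1})''(1/u)/u^3\ge 0$. If in Step~2 you defined $k_i$ as the time-weighted average $\E{C}/\E{\tau}$ instead of matching response times, Jensen over the job--time measure would give $\E{X_i}\le \E{\tau}\,s_i(k_i)$ directly, and your ``main obstacle'' would disappear. In exchange, your route is pathwise per job. Also, the convexity of $g_i$ it needs is the same fact that makes \eqref{eq:online opt} convex under $u_i=1/s_i(k_i)$, which the paper only asserts by citation, so your lemma does double duty.

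Three small points. First, in Step~1 the bound $C\ge x\bar k/s_i(\bar k)$ follows immediately from $C=\tau\bar k$ and $\tau\ge x/s_i(\bar k)$; monotonicity of $s_i(k)/k$ is not needed. Second, in Step~2 the time-average core usage is \emph{at least} $\sum_i\lambda_i\E{X_i}k_i/s_i(k_i)$ and at most $\numcore$, and that pair of inequalities is what makes $(k_i)$ feasible. Third, both routes need genuine concavity of $s_i$ for Jensen. The setting's bullet labelled ``Concavity'' only asserts that $s_i(k)/k$ is nonincreasing, which is weaker and would not suffice: without concavity, splitting a class between few and many cores can beat every static allocation. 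You are right to invoke true concavity, as the cited work does.
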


We refer to this optimal solution to the \budget problem as \fixedwidth (where FW is short for Fixed-Width), because every job of class $i$ gets the same finite number of cores, $k_i$\footnote{Here $k_i$ must be finite because the speedup functions are not perfectly elastic.}.  Intuitively, \eqref{eq:online opt} follows from the following observations: 
\begin{enumerate}
    \item Since each class $i$ job gets $k_i$ cores, we have that $\E{T_i}=\frac{\E{X_i}}{s_i(k_i)}$. Thus the objective of \eqref{eq:online opt} is just $\frac{1}{\lambda}\sum_{i=1}^\numclass \lambda_i c_i \E{T_i}$, which is the (normalized) time-average total holding cost, $\HC$.
    \item Similarly, the expected total number of {\em core-seconds} that a class $i$ job uses is $k_i\E{T_i}=\frac{\E{X_i}k_i}{s_i(k_i)}$, thus $\sum_{i=1}^\numclass\frac{\lambda_i \E{X_i} k_i}{s_i(k_i)}$ represents the time-average number of cores used.
\end{enumerate}

Let $\HC_{\budget}^{\fixedwidth}$ denote the (normalized) time-average total holding cost of the \budget problem under policy \fixedwidth. Then by Theorem~\ref{thm: optimal policy for relax} we have that 
\begin{equation}
    \HC_{\budget}^*=\HC_{\budget}^{\fixedwidth}.
\end{equation}

We now apply Theorem \ref{thm: optimal policy for relax} in the mean field regime.
Here, we have that $\arrivalasym{i} = \systemindex\cdot \lambda_i$ and $\numcore\inasym=\systemindex\cdot \numcore.$ Thus for any $\systemindex$, \eqref{eq:online opt} remains the same and provides a lower bound on the \ourproblem.
Formally, we define the following notation for the optimal solution to the optimization problem ~\eqref{eq:online opt}.
\begin{definition}
    \label{def:kR and VR}
    Consider the convex optimization~\eqref{eq:online opt}  parameterized by the set of arrival rates $\{\lambda_i\}$ and number of cores $\numcore$. Define its optimal solution to be $\kR{\lambda_i}{\numcore}$ and its optimal value of the objective to be $\VR{\lambda_i}{\numcore}$.
\end{definition}

We observe that for all $d$,
\[\VR{\lambda_i\inasym}{\numcore\inasym} = \VR{\lambda_i}{\numcore}.\]
Moreover, by Theorem~\ref{thm: optimal policy for relax}, we have that 
\[\HC_{\budget\inasym}^*=\HC_{\budget\inasym}^{\fixedwidth}=\VR{\lambda_i\inasym}{\numcore\inasym}.\]
Thus, together with \eqref{eq:budget is lower bound}, we can get the lower bound for the \ourproblem problem in mean field:
\begin{equation}
    \label{eq: lower bound}
    \HC_{\ourproblem\inasym}^*\geq \HC_{\budget\inasym}^* = \VR{\lambda_i\inasym}{\numcore\inasym}=\VR{\lambda_i}{\numcore}.
\end{equation}

\section{The first Mean Field optimal policy: \policyone}
\label{sec: FW Cap}

In this section, we present the first mean field optimal policy for the \ourproblem problem, \policyone. In Section~\ref{sec: FW CAM intuition} we first develop the by intuition behind our policy. Next in Section~\ref{sec: FW CAM def} we present our \policyone policy. Finally in Section~\ref{sec: FW CAM proof} we present the proof of its mean field optimality.

\subsection{Intuition behind \policyone}
\label{sec: FW CAM intuition}

While the \budget problem has a simple optimal policy (\fixedwidth), the optimal policy for the \ourproblem problem is not clear at all. The main difficulty is that, in the \ourproblem problem, we only have a fixed number of cores, and thus jobs may queue.
By contrast, there is no queueing under the optimal policy in the \budget setting where we can occasionally use more than $n$ cores.

Fortunately, it is well-known that for many problems, queueing disappears in the mean field limit (e.g., \cite{atar2010cmu,van1995dynamic,stolyar2004maxweight}).  We therefore hope to define a variant of the \fixedwidth policy in the \ourproblem setting which performs similarly to \fixedwidth in the \budget setting as the probability of queueing vanishes.
However, in the \ourproblem problem, applying the \fixedwidth policy directly will be a disaster --- not only will queueing happen, but the system will even be unstable. 
The key intuition lies in the distinction between {\em system load} and {\em effective load}.

\paragraph{System load and effective load: }
For a better understanding of the queueing behavior in the \ourproblem problem, we need to distinguish between the {\em system load} and the {\em effective load}. Note that in the mean-field regime, the \textit{system load} $\rho\inasym:= \frac{1}{\numcore\inasym}\sum_{i}^{\numclass} \lambda_i\inasym \E{X_i}$  stays constant as $\systemindex\to \infty$. However, when jobs are parallelized, because of the sub-linear nature of the speedup functions, the ``inherent work'' done by each core per unit time is smaller than the case when jobs are run on only one core. In other words, while a class-$i$ job of size $x$ requires $x$ core-seconds to complete on a single core, the total number of core-seconds used to complete the job on $k$ cores is $k\cdot \frac{x}{s_i(k)}$.
We refer to the number of core-seconds a job consumes as the job's \emph{effective work}.
We can then define \emph{effective load} in terms of each job's effective work rather than its inherent work (job size).
Because the time-average number of cores demanded must be smaller than the number of cores, $n$, it is necessary for effective load to be smaller than 1 for the system to be stable. 

To derive the effective load under \fixedwidth, note that each each class-$i$ job gets $k_i=\kR{\lambda_i}{\numcore}$ cores. As a result, in expectation each class-$i$ job contributes $\frac{\E{X_i}k_i}{s_i(k_i)}$ to the total effective work. Hence the total effective load is 
\[\text{Total Effective Load}=\frac{1}{\numcore} \sum_{i=1}^\numclass \lambda_i \frac{\E{X_i}k_i}{s_i(k_i)} \]

Note that $k_i$ is defined by solving optimization problem~\eqref{eq:online opt}. Moreover, by monotonicity of the speedup functions, the constraint inequality in \eqref{eq:online opt} will be tight for the optimal solution $k_i$, giving
\[\sum_{i=1}^\numclass \lambda_i \frac{\E{X_i}k_i}{s_i(k_i)}=\numcore.\]
This means that the effective load under the \fixedwidth policy becomes 1, the resulting queueing system is unstable, and the queueing time could blow up to infinity. 

\paragraph{\textbf{Our solution:}} To find a mean field optimal policy, we want to mimic the \fixedwidth policy in the \ourproblem problem while keeping the effective load below 1. Our main insight is that, while \fixedwidth sets the effective load to 1, a mean field optimal policy for the \ourproblem problem can get similar performance by carefully letting the effective load approach 1 as the system scales. 

To accomplish this, we define the \policyone policy (see Definition~\ref{def:FW CAM}).
Roughly speaking, under \policyone, a class-$i$ job is allocated $\kR{\lambda_i}{\numcore - \numcore^\beta}$ cores, where $\beta\in(0.75,1)$. 
As a result, the effective load in the $\systemindex^{th}$ system scales as $1-\omega(\systemindex^{-0.25})$. 
This scaling regime lies within the Sub-Halfin-Whitt regime\footnote{The Sub-Halfin-Whitt regime refers to systems where load scales as  $1-\omega(\systemindex^{-0.5})$. We enforce a stronger scaling regime ($1-\omega(\systemindex^{-0.25})$) in order to apply the results in \cite{li2025simple}.}.  
For a variety of queueing systems with non-parallelizable jobs, it is well-known that the mean queueing time goes to 0 in this regime \cite{li2025simple}.  
We will show that the mean queueing also goes to $0$ in the \ourproblem system. 

\subsection{Defining \policyone}
\label{sec: FW CAM def}
\policyone first divides the total number of cores $\numcore$ among the $m$ classes, creating a pool of size $n_i$ for the $i$th class to use. \policyone then chooses how to allocate these $n_i$ cores to the stream of class-$i$ jobs for each class. This two-stage design is not strictly necessary, but simplifies our analysis.  We formally define \policyone as follows.
\begin{definition}[\policyone]
\label{def:FW CAM}
Let $k_i:=\kR{\lambda_i}{n-n^\beta}$, where $0.75<\beta<1$ (where $k_i^*$ is defined in Defintion~\ref{def:kR and VR}).  For each class $i$, define the effective load to be $r_i:= \lambda_i\cdot \frac{\E{X_i}\cdot k_i}{s_i(k_i)}$, and let the number of cores used for class $i$ be
\[\numcore_i := \numcore \cdot \frac{r_i}{\sum_{j=1}^\numclass r_j}.\]
Whenever a class $i$ job arrives, if the total number of cores used by existing class $i$ jobs is no more than $\numcore_i-k_i$, the new job 
is allocated $k_i$ cores until it completes; otherwise, the new job queues. Whenever a class $i$ job completes, if there are queueing class $i$ jobs, the released $k_i$ cores are assigned to the class-$i$ job that has waited the longest; otherwise, if there are no queueing class $i$ jobs, the released cores idle even if there are queueing jobs from other classes.
\end{definition}

Observe that the policy FW-CAM is simple both to define and to compute numerically via a convex optimization problem.   

The following lemma shows that the value of $k_i$ used in \policyone ($\kR{\lambda_i}{\numcore - {\numcore}^\beta}$) converges to that used in \fixedwidth ($\kR{\lambda_i}{\numcore}$) in the mean field limit. This lemma will be used in Lemma~\ref{lemma: no queueing} and Theorem~\ref{thm:mean field optimal}.

\begin{lemma}
For any class $i$,
    $\lim_{\systemindex\to\infty}\kR{\lambda_i\inasym}{\numcore\inasym - {\left(\numcore\inasym\right)}^\beta}= \kR{\lambda_i}{\numcore}$.
    \label{lemma: k converges}
\end{lemma}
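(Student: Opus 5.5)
The plan is to exploit the scale-invariance of the optimization problem~\eqref{eq:online opt} and then use continuity of its optimal solution in the resource level.

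\textbf{Rescaling.} Since $\lambda_i\inasym=\systemindex\lambda_i$ and $\numcore\inasym=\systemindex\numcore$, dividing the constraint of~\eqref{eq:online opt} by $\systemindex$ gives
\[
\kR{\lambda_i\inasym}{\numcore\inasym-(\numcore\inasym)^\beta}=\kR{\lambda_i}{\numcore_\systemindex},
\qquad
\numcore_\systemindex:=\numcore-\systemindex^{\beta-1}\numcore^\beta .
\]
The objective's prefactor $\lambda_i\inasym/\lambda\inasym=\lambda_i/\lambda$ is unchanged. Since $\beta<1$, we have $\numcore_\systemindex\uparrow\numcore$. So the lemma reduces to continuity from the left of $b\mapsto \kR{\lambda_i}{b}$ at $b=\numcore$.

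\textbf{Convex reformulation.} I would change variables to $u_i:=s_i(k_i)$. Write $g_i(u):=k/s_i(k)$ evaluated at $k=s_i^{-1}(u)$; this is the core-seconds per unit work. The objective becomes $\sum_i a_i/u_i$ with $a_i=\lambda_i c_i\E{X_i}/\lambda>0$, which is strictly convex in $u$. The constraint becomes $\sum_i \lambda_i\E{X_i}\, g_i(u_i)\le b$.

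By concavity of $s_i$, the map $k=s_i^{-1}(u)$ is convex and increasing in $u$, so $g_i(u)=s_i^{-1}(u)/u$ should be nondecreasing. Convexity of $g_i$ is what the paper implicitly uses when it calls the problem convex, and I would take it from Theorem~\ref{thm: optimal policy for relax}. Because the objective is strictly convex and the feasible set is convex, the optimizer $u^*(b)$ is unique.

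\textbf{Continuity via Berge.} Feasibility and compactness need checking. For $b$ near $\numcore$, the feasible set is nonempty provided the system load satisfies $\rho<1$, which is implicit since $k_i=1$ must be feasible. It is bounded because $s_i$ is not perfectly elastic: $k/s_i(k)\to\infty$ whenever $k\to\infty$ or when $s_i$ saturates, so large $u_i$ are excluded.

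The constraint correspondence $b\mapsto\{u: \sum_i \lambda_i\E{X_i}\, g_i(u_i)\le b\}$ is upper hemicontinuous, since it is a closed sublevel set. It is also lower hemicontinuous at $b=\numcore$, because a Slater point exists: for instance $u_i=1$ gives load $\rho\numcore<\numcore$. Berge's maximum theorem then makes the argmin correspondence upper hemicontinuous, and uniqueness turns this into continuity of $u^*(b)$.

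Finally, $s_i^{-1}$ is continuous on the relevant range. This follows because, under the monotonicity and concavity assumptions, $s_i$ is strictly increasing wherever the optimum lies (flat parts are never chosen beyond their left end). Hence $k_i^*(b)=s_i^{-1}(u_i^*(b))$ is continuous, and the limit follows.

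\textbf{Main obstacle.} The delicate step is lower hemicontinuity together with uniqueness in the original $k$ variables. If $s_i$ has a flat region, several $k_i$ give the same $u_i$, and the argmin in $k$ is not unique. I would handle this by noting that any optimizer must take the smallest $k$ with a given $s_i(k)$, since otherwise cores are wasted and the tight constraint could be relaxed to lower the objective. This makes $k_i^*$ the well-defined left-inverse. I would then argue continuity of this left-inverse on the range where $u_i^*$ lives, which is strictly below $\sup s_i$ by the tightness of the constraint.
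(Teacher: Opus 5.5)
Your route is the paper's route. You divide the constraint by $\systemindex$ to get budget $\numcore-\numcore^\beta\systemindex^{\beta-1}\uparrow\numcore$, then appeal to continuity of the optimizer of \eqref{eq:online opt} in its budget. The paper simply asserts this from ``continuity and convexity after a change of variables''; your Berge argument tries to make it rigorous.

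\textbf{The gap.} The step your argument rests on is false: the substitution $u_i=s_i(k_i)$ does not convexify the problem. Here $g_i(u)=s_i^{-1}(u)/u$ is a convex function divided by $u$, which need not be convex. For $s_i(k)=k^{2/3}$ you get $g_i(u)=\sqrt{u}$. Take two such classes of equal weight. The points $(1,U)$ and $(U,1)$ are both feasible with the constraint tight, but their midpoint violates it, since $2\sqrt{(1+U)/2}>1+\sqrt{U}$ for $U>1$. So ``strictly convex objective over a convex set'' is not available. Theorem~\ref{thm: optimal policy for relax} does not supply convexity of $g_i$, and your Slater argument for lower hemicontinuity, which needs a convex constraint, loses its footing too. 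Uniqueness is exactly what turns upper hemicontinuity of the argmin into convergence, so this is the crux of the proof, not a side detail.

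\textbf{The repair.} Use $t_i=1/s_i(k_i)$, the response time per unit work, which is presumably the change of variables the paper alludes to.
\begin{itemize}
\item The objective becomes linear, and $k_i\ge 1$ becomes $t_i\le 1$.
\item The constraint term $k_i/s_i(k_i)=t_i\,s_i^{-1}(1/t_i)$ is the perspective of the convex function $s_i^{-1}$, so it is convex in $t_i$.
\item Slater's condition now legitimately holds: $t_i=1$ uses $\rho\numcore<\numcore$, which gives lower hemicontinuity.
\end{itemize}
Because the objective is now only linear, uniqueness must come from strict convexity of $t\mapsto t\,s_i^{-1}(1/t)$. Its second derivative is $(s_i^{-1})''(1/t)/t^3$, so you need strict concavity of $s_i$, e.g.\ Assumption~\ref{assumption:si''} or the strict concavity assumed in Proposition~\ref{prop:app no job size}.

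\textbf{Why the extra assumptions are needed.} Take two identical classes with $s(k)=(k+1)/2$, which satisfies every assumption of Section~\ref{sec:setting}. The constraint term is $2-t_i$, which is affine.
\begin{itemize}
\item For $\rho\in(1/2,1)$, every $(t_1,t_2)\in(0,1]^2$ with $t_1+t_2=4-2/\rho$ is optimal, so $k_i^*$ is not even well defined.
\item The same example breaks your compactness claim: $k/s(k)=2k/(k+1)\to 2$. For $\rho<1/2$ the infimum is approached only as $k_i\to\infty$, and no optimizer exists.
\end{itemize}
You therefore also need a growth condition such as $s_i(k)/k\to 0$. The paper's footnote asserting that $k_i$ is finite makes the same tacit assumption.

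With strict concavity and that growth condition, your Berge argument goes through in the $t$ variables. You only need uniqueness at the limit budget $\numcore$, not along the sequence. Under strict concavity $s_i$ is strictly increasing, so your flat-region discussion is no longer needed.
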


\begin{proof}[Proof Sketch]
Note that both the left hand side and the right hand side of the equality are solutions to the optimization problem~\eqref{eq:online opt} (with different parameters). Moreover, we have 
\[\kR{\lambda_i\inasym}{\numcore\inasym - {\left(\numcore\inasym\right)}^\beta} = \kR{\frac{\lambda_i\inasym}{\systemindex}}{\frac{\numcore\inasym - {\left(\numcore\inasym\right)}^\beta}{\systemindex}}=\kR{\lambda_i}{\numcore - \numcore^\beta d^{\beta-1}}.\]
    The lemma then follows from the continuity and convexity of ~(\ref{eq:online opt}).  See Appendix~\ref{app: k converge}.
\end{proof}

\subsection{The mean field optimality of \policyone}
\label{sec: FW CAM proof}

We now prove the mean field optimality of \policyone. We use 
Lemma~\ref{label: reduction of system under MF-FW} and Lemma~\ref{lemma: no queueing} to show that the expected queueing time of all classes goes to 0 under \policyone. This leads to the proof of the main theorem, Theorem~\ref{thm:mean field optimal}.

We begin with Lemma \ref{label: reduction of system under MF-FW}, which shows that the system under \policyone can be viewed as $\numclass$ independent FCFS (first-come-first-serve) GI/GI/$\numcore_i'$ systems where $\numcore_i'=\lfloor \frac{\numcore_i}{k_i} \rfloor$.
\begin{lemma}
Let $\numcore_i'=\lfloor \frac{\numcore_i}{k_i} \rfloor$. Then
the response time of class $i$ jobs under policy \policyone is the same as that in the following single-class GI/GI/$\numcore_i'$ system:
Inter-arrival times are distributed as $A_i$, and job sizes are distributed as $\frac{X_i}{s_i(k_i)}$. 
\label{label: reduction of system under MF-FW}
\end{lemma}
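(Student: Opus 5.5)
The plan is a sample-path coupling: I will show that the class-$i$ subsystem under \policyone evolves exactly like a FCFS GI/GI/$\numcore_i'$ queue, because the \policyone dynamics of class $i$ are a deterministic function of the class-$i$ arrival times and sizes only. Then I will identify the state variables of the two systems and check that they obey the same transitions.

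The first step is to isolate class $i$. By Definition~\ref{def:FW CAM}, the admission and queueing decisions for class-$i$ jobs depend only on the number of cores currently used by class-$i$ jobs and on the class-$i$ queue. Released class-$i$ cores are never given to other classes; they idle if no class-$i$ job waits. Symmetrically, other classes never take cores from the class-$i$ pool $\numcore_i$. Hence the class-$i$ trajectory is determined solely by its own inter-arrival sequence, drawn from $A_i$, and its own job sizes, drawn from $X_i$.

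The second step is to discretize the pool into slots. Every class-$i$ job in service holds exactly $k_i$ cores, fixed for its whole lifetime. The admission rule lets a new job start iff the cores in use are at most $\numcore_i - k_i$. If $j$ jobs are in service, this condition reads $jk_i\le \numcore_i-k_i$, i.e.\ $j+1\le \numcore_i/k_i$, i.e.\ $j+1\le\lfloor \numcore_i/k_i\rfloor=\numcore_i'$ since $j+1$ is an integer. So a new arrival is served immediately iff fewer than $\numcore_i'$ class-$i$ jobs are in service, and otherwise it waits. On a completion, the $k_i$ released cores go to the longest-waiting class-$i$ job, which is exactly FCFS with $\numcore_i'$ servers. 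Each job runs at constant speed $s_i(k_i)$ from start to finish, so its service duration is $X_i/s_i(k_i)$, and these durations are i.i.d.\ and independent of arrivals.

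Finally, I couple the two systems on the same arrival times and sizes and argue by induction over the ordered event times (arrivals and completions). The number in service, the FCFS queue contents, and the residual service times coincide in both systems. It follows that each job's start time and departure time, and hence its response time, are identical pathwise, and so they are identical in distribution.

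The main obstacle is the integer rounding step: I must justify that the threshold rule ``used $\le \numcore_i-k_i$'' corresponds exactly to $\numcore_i'$ servers even when $\numcore_i$ and $k_i$ are fractional. I also need to make sure the invariant ``every class-$i$ job in service holds exactly $k_i$ cores'' holds, so that the core count is always $k_i$ times the number in service. Both follow from the observation above that occupancy is always an integer multiple of $k_i$. Edge cases with simultaneous events can be handled by any fixed tie-breaking rule, applied identically in both systems.
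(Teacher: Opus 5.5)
Your proposal is correct and uses the same coupling as the paper. Class $i$'s pool of $\numcore_i$ cores is split into $\numcore_i'=\lfloor \numcore_i/k_i\rfloor$ chunks of $k_i$ cores, each chunk acting as an FCFS server of speed $s_i(k_i)$, and classes evolve independently because cores never cross pools; you make explicit what the paper leaves implicit, namely that the admission rule $jk_i\le \numcore_i-k_i$ is equivalent to $j+1\le \numcore_i'$ and that a nonempty class-$i$ queue forces all $\numcore_i'$ chunks to be busy.
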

\begin{proof}
    The proof is straightforward. Under policy \policyone,  each job class sees and independent system and each class-$i$ job occupies $k_i$ cores. Thus, the running time of each class-$i$ job follows the distribution $\frac{X_i}{s_i(k_i)}$. The $n_i$ nodes assigned to class $i$ are divided into $n'_i$ chunks, creating a natural coupling between the system under \policyone and the GI/GI/$\numcore_i'$ system described above.
\end{proof}

We now borrow the following lemma (Lemma~\ref{lemma:goldberg}) from \cite{li2025simple} to prove that the queueing time under \policyone goes to 0 when $\systemindex\to\infty$ (Lemma~\ref{lemma: no queueing}). 

\begin{lemma}[Corollary 3 from \cite{li2025simple}]
For a GI/GI/$\numcore$ queue with inter-arrival time $A$ and job size distribution $S$, suppose: {\em (i)} $\E{A^2}<\infty$, {\em (ii)} $\E{S^3}<\infty$, {\em (iii)} $\frac{1}{\E{A}}<\numcore\frac{1}{\E{S}}$, {\em (iv)} $\numcore(1-\rho)^2\geq 10^6\left(\E{(\frac{A}{\E{A}})^2}\right)^2$, and {\em (v)}
a steady-state distribution of number of job in the system exists, then 
\[\E{\text{steady-state number of queueing jobs}}\leq \frac{a_1+a_2}{1-\rho}\cdot 1000(\numcore(1-\rho)^2)^{-0.5},\]
where $a_1,a_2$ are constant with respect to $\numcore$.
\label{lemma:goldberg}
\end{lemma}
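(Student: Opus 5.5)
Since this is Corollary~3 of \cite{li2025simple}, one option is simply to invoke it. If I were to prove it directly, I would use a steady-state drift (basic adjoint relationship) argument in the style of Braverman--Dai. The state is $(Z, R_A, \{R_S^{(j)}\}_{j\le \min(Z,\numcore)})$, where $Z$ is the number of jobs in system, $R_A$ is the residual inter-arrival time and $R_S^{(j)}$ are the residual service times of jobs in service. Write $\rho=\E{S}/(\numcore\E{A})$, which is $<1$ by (iii). The target is $\E{(Z-\numcore)^+}$, the mean number of queueing jobs.

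\textbf{Step 1: a first-moment identity for idle servers.} Apply the stationarity equation to the test function $f=Z$, corrected by the residual clocks: $Z - R_A/\E{A} + \sum_j R_S^{(j)}/\E{S}$. Its drift is zero in steady state, which gives
\[
\E{\min(Z,\numcore)}=\numcore\rho, \qquad\text{equivalently}\qquad \E{(\numcore-Z)^+}=\numcore(1-\rho).
\]
So on average there are $\numcore(1-\rho)$ idle servers; this is the ``slack'' that must prevent queueing.

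\textbf{Step 2: a second-moment (Lyapunov) bound.} Apply the same machinery to a quadratic test function in the centered queue $(Z-\numcore)^+$, again with residual-clock corrections, for example
\[
\bigl((Z-\numcore)^+ - R_A/\E{A} + \textstyle\sum_j R_S^{(j)}/\E{S}\bigr)^2 .
\]
When $Z\ge \numcore$, the drift of $(Z-\numcore)^+$ is $-\numcore(1-\rho)/\E{A}$ per unit time, up to fluctuations. The second moments of the jumps are controlled by $\E{(A/\E{A})^2}$ from (i) and by $\E{S^2}$, with $\E{S^3}$ from (ii) needed to bound the residual-service moments that appear. The expected outcome is
\[
\E{(Z-\numcore)^+}\ \lesssim\ \frac{a_1+a_2}{1-\rho}\;\Pr(Z\ge \numcore)+\text{lower-order terms}.
\]

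\textbf{Step 3: bound the probability of queueing.} Show $\Pr(Z\ge\numcore)\lesssim (\numcore(1-\rho)^2)^{-1/2}$. The idea is to compare the busy-server count with the infinite-server GI/GI/$\infty$ queue. That count has mean $\numcore\rho$ and variance $O(\numcore)$, with the variance constant depending on the squared coefficients of variation of $A$ and $S$. All servers are busy only if this count deviates above its mean by $\numcore(1-\rho)$, which is $\sqrt{\numcore(1-\rho)^2}$ standard deviations. A Chebyshev- or Berry--Esseen-type bound, using (ii), then gives the rate $(\numcore(1-\rho)^2)^{-1/2}$. Condition (iv) is used to make $\numcore(1-\rho)^2$ large enough that the remainder terms from Step~2 are dominated, which is where the constants $10^6$ and $1000$ come from. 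Assumption (v) justifies setting all drifts to zero.

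\textbf{Main obstacle.} The hard part is Step~3 made rigorous: turning the infinite-server heuristic into an actual bound on $\Pr(Z\ge\numcore)$ for the $\numcore$-server system, since queueing feeds back into the busy-server process. I would first try to avoid explicit coupling. The plan is to use a second test function, a smoothed version of $(\numcore-Z)^+$ with residual corrections, to bound $\E{((\numcore-Z)^+-\numcore(1-\rho))^2}=O(\numcore)$ directly from the stationarity equation. Combining this variance bound with the mean identity of Step~1 gives the needed bound on $\Pr(Z\ge \numcore)$.
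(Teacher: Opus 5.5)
Your first option, simply invoking Corollary~3 of \cite{li2025simple}, is exactly what the paper does. The paper gives no proof of this lemma; its only work is checking hypotheses (i)--(v) for the FW-CAM subsystems in Appendix~\ref{app:verification}.

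The direct argument you sketch is not a proof as it stands. Step~1 is fine: it is Little's law for the servers, $\E{\min(Z,n)}=\E{S}/\E{A}=n\rho$. But Step~3, which carries the whole content of the lemma, has no working mechanism.

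\textbf{Step~3.} The infinite-server comparison points the wrong way. Couple the FCFS $n$-server queue and the GI/GI/$\infty$ queue on the same arrival and service times. Every job departs no earlier in the $n$-server system, so $Z(t)\ge N_\infty(t)$ pathwise and $\Pr(Z\ge n)\ge\Pr(N_\infty\ge n)$. That is a lower bound on the delay probability, not an upper bound.

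Your fallback is an $O(n)$ bound on $\E{((n-Z)^+-n(1-\rho))^2}$ from an unspecified ``smoothed'' test function. This restates what must be proved rather than arguing it. If you had it, Chebyshev would even give $\Pr(Z\ge n)=O((n(1-\rho)^2)^{-1})$, more than the lemma claims, so that unproved bound is doing all the work. With $n$ generally distributed residual service clocks, the stationarity equation for any quadratic test function produces mixed moments it does not determine. Examples are $\E{\min(Z,n)\sum_j R_S^{(j)}}$ and cross terms $R_S^{(j)}R_S^{(k)}$ between servers, and nothing in your sketch controls them.

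Li and Goldberg's main bounds of this type are obtained differently. They dominate the stationary queue by the Gamarnik--Goldberg bounding process: roughly, a supremum over $t$ of the arrival count minus a superposition of $n$ renewal processes with inter-event times distributed as $S$. They then prove explicit tail bounds for suprema of random walks and pooled renewal processes. That is how they obtain an \emph{upper} bound despite the feedback you identify, and it is where their moment assumptions and explicit constants enter.

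\textbf{Step~2.} The test function $\bigl((Z-n)^+-R_A/\E{A}+\sum_jR_S^{(j)}/\E{S}\bigr)^2$ is not centered. The term $\sum_jR_S^{(j)}/\E{S}$ is of order $n$ even when there is no queue, so its stationarity equation is an identity among $O(n^2)$ terms whose cancellation again requires the uncontrolled mixed moments.

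The residual correction also tracks $(Z-n)^+$ only when all servers are busy. For $Z<n$, departures and service starts move the correction but not the queue, and you never bound the resulting terms. The drift of the queue when all servers are busy is $1/\E{A}-n/\E{S}=-n(1-\rho)/\E{S}$, not $-n(1-\rho)/\E{A}$.

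The target inequality $\E{(Z-n)^+}\lesssim\frac{C}{1-\rho}\Pr(Z\ge n)$ is itself a nontrivial Erlang-C-type statement for GI/GI/$n$. It is not the routine output of a drift computation.

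\textbf{Constants.} For the paper's purposes you would not need $10^6$ and $1000$ exactly. Lemma~\ref{lemma: no queueing} only uses a bound of order $\frac{1}{1-\rho}(n(1-\rho)^2)^{-1/2}$, valid once $n(1-\rho)^2$ exceeds some threshold. It also needs $a_1,a_2$ to depend only on the scale-free moments $\E{(A/\E{A})^2}$ and $\E{(S/\E{S})^3}$, which stay fixed along the mean-field sequence because $S_i^{(d)}/\E{S_i^{(d)}}=X_i/\E{X_i}$. A different proof would have to deliver that, with its own threshold in place of (iv). Your remark that (iv) is ``where $10^6$ and $1000$ come from'' is not supported by a BAR argument.
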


\begin{lemma}[No queueing for each class]
\label{lemma: no queueing}
For any class $i$, the expected queueing time of class-$i$ jobs under \policyone goes to 0 when $\systemindex\to \infty$. Hence the system is stable in mean field.
\end{lemma}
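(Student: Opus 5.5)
By Lemma~\ref{label: reduction of system under MF-FW}, class-$i$ jobs in the $\systemindex^{th}$ system see a FCFS GI/GI/$\numcore_i'^{(\systemindex)}$ queue. Its inter-arrival times are $A_i\inasym = A_i/\systemindex$ and its job sizes are $S_i\inasym := X_i/s_i(k_i\inasym)$, where $k_i\inasym = \kR{\lambda_i\inasym}{\numcore\inasym-(\numcore\inasym)^\beta}$. The plan is to verify hypotheses (i)--(v) of Lemma~\ref{lemma:goldberg} for this queue, with constants uniform in $\systemindex$. The lemma then shows that the mean number of queueing jobs tends to $0$, and Little's law turns this into a statement about the mean queueing time. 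To simplify the scaling, I would first rescale time by $\systemindex$. Queueing counts are unchanged by this, the inter-arrival time becomes $A_i$, and the job size becomes $\systemindex S_i\inasym$. Equivalently, I can keep the original time scale and note that hypotheses (i)--(iv) depend only on normalized moments.

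\textbf{Moments and load.} Condition (i) holds because $\E{(A_i/\systemindex)^2}<\infty$; the normalized second moment $\E{(A/\E{A})^2}$ does not depend on $\systemindex$. For (ii), note $s_i(k)\ge 1$ for $k\ge1$ by monotonicity, so $\E{(S_i\inasym)^3}\le \E{X_i^3}<\infty$. By Lemma~\ref{lemma: k converges}, $k_i\inasym\to k_i^\infty:=\kR{\lambda_i}{\numcore}$, so every moment of $S_i\inasym$ converges and stays bounded. This should keep the constants $a_1,a_2$ of Lemma~\ref{lemma:goldberg} bounded in $\systemindex$. (I would check in \cite{li2025simple} that they depend only on these normalized moments.) Next comes the load of the class-$i$ subsystem. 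Let $r_i\inasym=\lambda_i\inasym\E{X_i}k_i\inasym/s_i(k_i\inasym)$. Tightness of the constraint in \eqref{eq:online opt} gives $\sum_j r_j\inasym = \numcore\inasym-(\numcore\inasym)^\beta$. Hence $\numcore_i\inasym = \numcore\inasym r_i\inasym/\sum_j r_j\inasym$, and the load is
\[
\rho_i\inasym=\frac{\lambda_i\inasym \E{S_i\inasym}}{\numcore_i'^{(\systemindex)}}
=\frac{r_i\inasym}{k_i\inasym\lfloor \numcore_i\inasym/k_i\inasym\rfloor}.
\]
Ignoring the floor, this equals $r_i\inasym/\numcore_i\inasym = 1-(\numcore\inasym)^{\beta-1}$. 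The floor loses at most one server out of $\Theta(\systemindex)$, because $k_i\inasym$ is bounded. So $1-\rho_i\inasym = \Theta(\systemindex^{\beta-1})$, which in particular gives (iii).

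\textbf{Applying the bound.} The number of servers is $\numcore_i'^{(\systemindex)}=\Theta(\systemindex)$, so
\[
\numcore_i'(1-\rho_i)^2=\Theta(\systemindex^{2\beta-1})\to\infty,
\]
and (iv) holds for large $\systemindex$. Condition (v) is the standing existence assumption. Lemma~\ref{lemma:goldberg} then bounds the expected number of queueing jobs by
\[
O\bigl(\systemindex^{1-\beta}\cdot \systemindex^{-(2\beta-1)/2}\bigr)=O\bigl(\systemindex^{3/2-2\beta}\bigr),
\]
which tends to $0$ exactly because $\beta>0.75$. By Little's law, the mean queueing time equals this quantity divided by $\lambda_i\inasym=\systemindex\lambda_i$, so it also tends to $0$. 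It tends to $0$ even after multiplying by $\systemindex$, which matters if one works in the rescaled time units. Stability follows from $\rho_i\inasym<1$ for every class together with the finite-queue bound.

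\textbf{Main obstacle.} The delicate step is the load computation. It must absorb both the floor in $\numcore_i'$ and the fact that $k_i\inasym$ varies with $\systemindex$, while still guaranteeing $1-\rho_i\inasym\ge c\,\systemindex^{\beta-1}$ with a uniform constant. A second point is confirming that the constants $a_1,a_2$ from \cite{li2025simple} stay bounded along the sequence. I would handle that using the uniform moment bounds on $S_i\inasym$ and the convergence from Lemma~\ref{lemma: k converges}.
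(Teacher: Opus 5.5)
Your proposal is correct and follows essentially the same route as the paper. Both reduce to the GI/GI/$\numcore_i'$ queue via Lemma~\ref{label: reduction of system under MF-FW}, verify the hypotheses of Lemma~\ref{lemma:goldberg} using $\numcore_i'=\Theta(\systemindex)$ and $1-\rho_i\inasym=\Theta(\systemindex^{\beta-1})$, bound the mean number of queueing jobs by $\Theta(\systemindex^{1.5-2\beta})$, and finish with Little's law. Your added care about the floor and about $a_1,a_2$ staying bounded is a sound refinement: only the inequality $\sum_j r_j\inasym\le \numcore\inasym-(\numcore\inasym)^\beta$ is needed, not tightness, and since $S_i\inasym$ is a constant multiple of $X_i$, its normalized moments do not depend on $\systemindex$ at all.
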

\begin{proof}
By Lemma~\ref{label: reduction of system under MF-FW}, the mean queueing time of class-$i$ jobs is equal to the mean queueing time in the GI/GI/$n_i'$ system described in Lemma~\ref{label: reduction of system under MF-FW}. Thus, we only need to show that the mean queueing time in the GI/GI/$n_i'$ system goes to 0 when $\systemindex\to \infty$. We define the following notation for the GI/GI/$n_i'$ system in mean field: $A_i\inasym:=\frac{A_i}{d}, S_i\inasym:=\frac{X_i}{s_i(k_i\inasym)}$, and the number of cores $(n_i')\inasym:=\left\lfloor \numcore\inasym \cdot \frac{r_i\inasym}{\sum_{j=1}^\numclass r_j\inasym} \cdot \frac{1}{k_i\inasym} \right\rfloor$, where $r_i\inasym:=\lambda_i\inasym \E{S_i\inasym}\cdot k_i\inasym$. 

Since we know that $k_i\inasym=\kR{\lambda_i\inasym}{\numcore\inasym-\left(\numcore\inasym\right)^\beta}$ is the solution of the optimization problem~\eqref{eq:online opt}, we have that the constraint inequality holds, i.e., 
\begin{equation*}
    \sum_{i=1}^\numclass \lambda_i\inasym \E{S_i\inasym} k_i\inasym\leq \numcore\inasym-\left(\numcore\inasym\right)^\beta,
\end{equation*}
which is equivalently:
\begin{equation}
    \sum_{i=1}^\numclass r_i\inasym\leq \numcore\inasym-\left(\numcore\inasym\right)^\beta.
    \label{eq:FW CAM total effective load}
\end{equation}

It remains to verify all the conditions in Lemma~\ref{lemma:goldberg} so that we can apply the inequality in Lemma~\ref{lemma:goldberg}.  We provide this verification in Appendix~\ref{app:verification}.

Since we have verified all the conditions, by Lemma~\ref{lemma:goldberg}, we have that the expected number of queueing class-$i$ jobs (denoted by $\E{N_Q\inasym}$) is bounded by the following inequality:
\begin{align*}
    \E{N_Q\inasym}&\leq \frac{a_1+a_2}{1-\rho_i\inasym}\cdot 1000(\left(\numcore_i'\right)\inasym(1-\rho_i\inasym)^2)^{-0.5} \\
    &=\Theta(d^{1-\beta}) \cdot \Theta(d^{0.5-\beta}) \\
    &=\Theta(d^{1.5-2\beta}).
\end{align*}

Since $\beta>0.75$, we have that $\lim_{\systemindex\to \infty}\E{N_Q\inasym}\to 0$. By Little's Law, the expected queueing time of class-$i$ jobs also goes to 0. 
\end{proof}

Finally, we can prove our main theorem, which shows the mean field optimality of \policyone.

\begin{theorem}
\label{thm:mean field optimal}
    Our policy \policyone is mean field optimal.
\end{theorem}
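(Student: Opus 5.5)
The plan is to sandwich the cost of \policyone between the lower bound \eqref{eq: lower bound} and an upper bound that converges to the same value $\VR{\lambda_i}{\numcore}$. Since \eqref{eq: lower bound} gives $\HC^{\pi'}_{\ourproblem\inasym}\geq \HC^*_{\ourproblem\inasym}\geq \VR{\lambda_i}{\numcore}$ for every policy $\pi'$ and every $\systemindex$, it suffices to show
\[\limsup_{\systemindex\to\infty}\HC^{\policyone}_{\ourproblem\inasym}\leq \VR{\lambda_i}{\numcore}.\]
Then the ratio $\HC^{\policyone}_{\ourproblem\inasym}/\HC^{\pi'}_{\ourproblem\inasym}$ is at most $\HC^{\policyone}_{\ourproblem\inasym}/\VR{\lambda_i}{\numcore}$. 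This converges to at most $1$ because $\VR{\lambda_i}{\numcore}>0$: each term $\E{X_i}c_i/s_i(k_i)$ is positive, and $k_i$ is finite.

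First I decompose the response time of a class-$i$ job under \policyone. By Lemma~\ref{label: reduction of system under MF-FW}, a class-$i$ job in system $\systemindex$ waits some queueing time $T_{Q,i}\inasym$ and then runs on exactly $k_i\inasym=\kR{\lambda_i\inasym}{\numcore\inasym-(\numcore\inasym)^\beta}$ cores. Hence
\[\E{T_i\inasym}=\E{T_{Q,i}\inasym}+\frac{\E{X_i}}{s_i(k_i\inasym)}.\]
Substituting into \eqref{eq:HC} and using $\lambda_i\inasym/\lambda\inasym=\lambda_i/\lambda$ gives
\[\HC^{\policyone}_{\ourproblem\inasym}=\sum_{i=1}^\numclass \frac{\lambda_i}{\lambda}c_i\E{T_{Q,i}\inasym}+\sum_{i=1}^\numclass\frac{\lambda_i}{\lambda}\frac{c_i\E{X_i}}{s_i(k_i\inasym)}.\]

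Next I take limits term by term. By Lemma~\ref{lemma: no queueing}, each $\E{T_{Q,i}\inasym}\to 0$, and there are finitely many classes with fixed $c_i$, so the first sum vanishes. For the second sum, Lemma~\ref{lemma: k converges} gives $k_i\inasym\to\kR{\lambda_i}{\numcore}$. Since $s_i$ is differentiable on $k>1$, it is continuous there; at $k=1$ it is continuous from the right by monotonicity together with concavity. Also $s_i\geq 1$ on $[1,\infty)$, so $1/s_i$ is continuous there. Therefore the second sum converges to $\sum_i \frac{\lambda_i}{\lambda}\frac{c_i\E{X_i}}{s_i(\kR{\lambda_i}{\numcore})}=\VR{\lambda_i}{\numcore}$.

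The main obstacle is this continuity step. If some optimal $k_i=1$, I need continuity of $s_i$ at $1$ from the right. Concavity of $s_i$ (assumed in the paper's sense of $s_i(k)/k$ being nonincreasing) combined with monotonicity gives $1\leq s_i(k)\leq k$ near $1$, and this squeeze yields $s_i(k)\to 1$ as $k\downarrow 1$. A second subtlety is uniqueness of the optimizer, which is implicit in Definition~\ref{def:kR and VR}. If it is not unique, I would argue directly that the optimal value $\VR{\cdot}{\cdot}$ is continuous in the core budget, and note that the second sum equals $\VR{\lambda_i}{\numcore-\numcore^\beta\systemindex^{\beta-1}}$, which converges to $\VR{\lambda_i}{\numcore}$.
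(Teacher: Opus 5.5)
Your proposal is correct and follows essentially the same route as the paper. You split the class-$i$ response time into a queueing part, which vanishes by Lemma~\ref{lemma: no queueing}, and a service part $\E{X_i}/s_i(k_i\inasym)$, which converges via Lemma~\ref{lemma: k converges}; the cost therefore tends to $\VR{\lambda_i}{\numcore}$, matching the lower bound \eqref{eq: lower bound}. Your extra care only makes explicit details the paper's short proof leaves implicit: right-continuity of $s_i$ at $1$, positivity of $\VR{\lambda_i}{\numcore}$ for the ratio, and the observation that the service-time sum equals the optimal value at budget $\numcore-\numcore^\beta\systemindex^{\beta-1}$, which sidesteps uniqueness of the optimizer.
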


\begin{proof}
    By Lemma~\ref{lemma: k converges}, as $\systemindex\to \infty$, the number of cores that a class $i$ job is allocated converges to $\kR{\lambda_i}{\numcore}$.   
    Hence
    the expected service time of class-$i$ jobs converges to $\frac{\E{X_i}}{s_i(k_i^*)}$. Moreover, by Lemma~\ref{lemma: no queueing}, the expected queueing time of class-$i$ jobs converges to 0. Thus, the expected response time of class-$i$ jobs converges to $\frac{\E{X_i}}{s_i(k_i^*)}$, and the (normalized) time-average total holding cost under \policyone converges to
    \[\lim_{d\to \infty} \HC_{\ourproblem\inasym}^{\policyone} = \frac{1}{\lambda}\sum_{i=1}^\numclass \lambda_ic_i\frac{\E{X_i}}{s_i(k_i^*)} =\VR{\lambda_i}{\numcore},\]
    which is the mean field lower bound for the \ourproblem problem established in \eqref{eq: lower bound}.
\end{proof}

\section{A ``Better'' Mean-Field Optimal policy: \policytwo}
\label{sec: sec 6}

In Section~\ref{sec: FW Cap}, we prove the mean field optimality of \policyone. However, outside the mean field regime, \policyone may make frequent mistakes. For example, if the number of jobs in the system is very small, \policyone may leave a lot of cores idle. On the other hand, if there are too many jobs in the system, \policyone allocates many cores to a subset of jobs, while forcing other jobs to queue. Intuitively, outside the mean field regime, a better policy should take the {\em current number of jobs in the system} into consideration. Hence, a natural question is: \emph{Is there a better policy that incorporates the current number of jobs in the system, while maintaining the mean field optimality?}

Our answer to this question is yes!  We will derive a policy, called \policytwo, which is both mean field optimal and also outperforms \policyone outside the mean field limit. 
Our results require one additional technical assumption on the speedup functions.\footnote{Assumption~\ref{assumption:si''} is not necessary for the mean field optimality of \policyone.} 

\begin{assumption}[twice differentiable speedups]
\label{assumption:si''}
    For any class $i$, assume that the speedup function is twice differentiable and $s_i''(k)<0$ for any $k>1$.
\end{assumption}

The remainder of the section is organized as follows:
\begin{enumerate}
    \item (Section~\ref{sec:background bandit} to Section~\ref{sec:gittins}) The \textbf{derivation} of \policytwo (Definition~\ref{def:policytwo}): We first provide the background on bandit problems in Section~\ref{sec:background bandit} and the background on the Whittle policy in Section~\ref{sec: background whittle}. Then, in Section~\ref{sec:bandit setting}, we define the bandit formulation for the \ourproblem problem. Our goal is to derive the Whittle policy for this bandit formulation. Then in Section~\ref{sec:whittle setup}, we show how to set up the Whittle approach for our bandit formulation. Finally in Section~\ref{sec:gittins}, we derive the Whittle policy, \policytwo. 
    \item (Section~\ref{sec: policy two proof}) The \textbf{proof} of the mean field optimality of \policytwo (Theorem~\ref{thm:mean field optimality of policytwo exp})
\end{enumerate}


\subsection{Background on the (Markovian) Bandit theory}
\label{sec:background bandit}

A (Markovian) bandit problem is a special case of a Markov decision problem (MDP). A bandit problem has multiple {\em arms}, where each arm corresponds to a Markov process. At every moment of time, an agent selects an action for each arm, and a cost is incurred according to the states of the arms. The goal is to minimize the (discounted or time-average) cumulative cost (across all arms) over time.

While there are many different versions of bandit problems, we choose to provide only the definition of the bandit formulation most relevant to us, i.e., the continuous-time multi-gear bandit.\footnote{Here we generalize the multi-gear bandits introduced in~\cite{nino2022multi} to allow for continuous decision time, continuous state space, continuous action space and the arrival and departure of arms.}

\paragraph{The Arms (Markov Decision Processes):} A bandit problem has multiple arms, where
each arm $i$ is associated with a continuous-time MDP. The MDP for arm $i$ is defined by a state space $\mathcal{S}_i$ and an action space $\mathcal{A}_i$. At any time $t$, the decision-maker selects an action $a_i(t) \in \mathcal{A}_i$ (the ``gear''), which determines the stochastic evolution of the state of arm $i$ according to its MDP. 

\paragraph{The Resource Constraint:}
The defining feature of the ``multi-gear'' bandit problem is the coupling of these independent arms through a shared resource constraint at every moment of time. Each action $a \in \mathcal{A}_i$ consumes a specific amount of resource, denoted by $w_i(a)$. The system is endowed with a total resource capacity of $n$ at any moment of time. Mathematically, at every moment of time, $t$, the actions across all arms must satisfy the hard capacity constraint:
\begin{equation}
    \sum_{\text{arm $i$ in system}} w_i(a_i(t)) \le n.
\end{equation}

\paragraph{The Objective:}
At any time, $t$, each arm $i$ incurs cost at a rate $C_i(x_i(t), a_i(t))$ that depends on its current state $x_i(t)$ and the chosen action $a_i(t)$. The goal is to find a policy which selects an action for each arm at every moment of time to minimize the total cost across all arms over time. We consider two standard objectives:

\begin{itemize}
    \item \textbf{Discounted Cost:} The cumulative cost is discounted by a factor $\alpha > 0$ over an infinite horizon. The objective is to minimize:
    \begin{equation}
    \label{eq:eq8}
        \mathbb{E} \left[ \int_{0}^{\infty} e^{-\alpha t} \sum_{\text{arm $i$ in system}} C_i(x_i(t), a_i(t)) \, dt \right]
    \end{equation}
    
    \item \textbf{Time-Average Cost:} The objective is to minimize the long-run time-average cost:
    \begin{equation}
    \label{eq:eq9}
        \lim_{T \to \infty} \frac{1}{T} \mathbb{E} \left[ \int_{0}^{T} \sum_{\text{arm $i$ in system}}C_i(x_i(t), a_i(t)) \, dt \right]
    \end{equation}
\end{itemize}

While our primary goal in the CAM problem is minimizing the time-average total holding cost, the discounted framework is essential for the derivation of the Whittle index policy. We discuss this point in Section~\ref{sec:bandit setting}.

\subsection{Background on the Whittle policy}
\label{sec: background whittle}
 In this section, we provide background on the Whittle policy~\cite{whittle1988restless}. For the ease of illustration, we directly explain the (generalized) Whittle approach for discounted multi-gear bandits~\cite{nino2022multi}. The Whittle approach consists of four steps: 

\noindent\textbf{First}, we relax the hard constraint that the total amount of resource used is no more than $n$ at any moment of time. This hard constraint is relaxed to a total discounted usage constraint. Mathematically, suppose the total resource usage across all arms at time $t$ is $W(t)$, then the hard constraint is $W(t)\leq n$. Whittle relaxes the hard constraint to the following constraint:
    \[\int_{0}^\infty e^{-\alpha t} W(t) dt \leq \frac{n}{\alpha},\]
    where $\alpha>0$ is the discount factor.
    The bandit problem with this relaxed constraint (call this the "relaxed bandit") is then studied to obtain a heuristic for the original bandit.

\noindent\textbf{Second}, Lagrange multipliers are used to solve the relaxed bandit. As a result, the relaxed bandit problem can be decomposed into independent {\em single-arm bandit problems}, where there is no constraint on the resource usage, but every unit of resource usage increases the \emph{service cost} rate by $\ell$ ($\ell$ serves as the Lagrange multiplier).  This step puts a ``price'' on the resource usage, yielding a single-armed bandit problem parameterized by $\ell$ for each arm. The hope is that, through controlling the ``price'', the decision-maker can control the total resource usage across all the arms. 

\noindent\textbf{Third}, one solves each single-armed bandit problem for every value of service cost $\ell$. Intuitively, if the service cost $\ell$ is very low, one would want to use a greater amount of resource for each arm; but if the service cost $\ell$ is very high, one would want to use less resource. A property formalizing this intuition is called ``indexability'' (\cite{whittle1988restless}), which is a necessary condition for the Whittle policy to be well-defined. Intuitively, with indexability, the resource usage of any single arm is monotonically influenced by the service cost; this guarantees the existence of the ``market-clearing'' service cost defined below.

\noindent\textbf{Fourth}, going back to the multi-armed bandit problem, the Whittle policy is defined as follows: at every moment of time, select a ``market-clearing" service cost $\ell$ such that if every single arm acts optimally under the service cost $\ell$ (in the corresponding single-arm bandit problem), the total amount of resource used across all arms is exactly $\numcore$. 

\noindent\textbf{Remarks:}
The asymptotic optimality of the Whittle policy for bandit problems has been shown in prior work~\cite{whittle1988restless,weber1990}. In bandit theory, the asymptotic regime is typically defined in a way that the number of arms and the amount of resource capacity grow proportionally to infinity.

Usually in traditional bandit problems where only binary actions are supported, the Whittle policy is described in the form of assigning each arm an index and always activating the arms with the highest index (e.g., \cite{aalto2024whittle,larranaga2014index}). However, in our bandit formulation, the Whittle policy cannot be defined by an index function because each arm has multiple actions.  Hence,
    we intentionally call our policy ``the Whittle policy'' instead of ``the Whittle index policy.''

\subsection{Our Multi-arm Bandit formulation}
\label{sec:bandit setting}

In our bandit formulation, we model each job by an arm. The state of an arm is the remaining size of the corresponding job. 
The cost of an arm is the holding cost of the corresponding job.
The action of an arm represents the number of cores assigned to the corresponding job; this is also the resource consumed by this action. 
The resource constraint is that we cannot assign more than $\numcore$ cores at any moment of time across the arms, and the objective is to minimize the time-average total holding cost across all jobs.
Mathematically, we define the following continuous time bandit:

\begin{definition}[Our Time-average multi-armed bandit formulation]
\label{def:MAB}
The \ourproblem problem can be translated into the following bandit problem: 

    \noindent\textbf{Arm State:} $x\in\mathbb{R}\cup \bot$. Each arm represents a job, and its state represents the remaining inherent work on the job. If $x=\bot$, the arm is marked as completed and leaves the system. 
    
    \noindent\textbf{Arrival and departure of arms:} The inter-arrival time of class-$i$ arms follows the distribution $A_i$; Any class-$i$ arm arrives with the state following distribution $X_i$, representing the inherent work of the job. Any arm in state $\bot$ leaves the system.
    
    \noindent\textbf{Cost:} At every moment of time, for any class-$i$ arm in the system, if its state is not $\bot$, it incurs cost at a rate of $c_i$ per unit time (this corresponds to the holding cost of the job).
    
    \noindent\textbf{Actions:} $k\geq 1$ or $k=0$. At every moment of time, the decision-maker has to choose an action for each arm. The action for an arm represents the number of cores allocated to the corresponding job. 
    
    \noindent\textbf{State transition:} For any class-$i$ arm, if its action is $k=0$, its state stays the same. Otherwise, if its action is $k>1$, its state decreases with rate $s_i(k)$.
    
    \noindent\textbf{Resource Constraint:} At every moment of time, the sum of the actions of all arms is no more than $\numcore$. 
    
    \noindent\textbf{Objective:} The goal is to  minimize the time-average total holding cost \eqref{eq:eq9}. Let $C(t)$ be the total cost across all arms in the system at time $t$.
\end{definition}

We also define a version of the above problem that optimizes discounted cost, \eqref{eq:eq8}. 
This discounted formulation does not correspond to the \ourproblem problem, but is used to derive the Whittle policy. 
First, we will derive the Whittle policy for the discounted cost problem. Next, we will obtain the Whittle policy for the time-average bandit by taking the limit of the discounted cost problem as the discount factor $\alpha$ goes to 0.\footnote{Roughly speaking, you can think about first multiplying equation~\eqref{eq:eq8} by $\alpha$ and then taking the limit as $\alpha \to 0$.  That limit converges to the time-average cost given in \eqref{eq:eq9}.}   This ``{\em vanishing discount factor}'' approach is standard~\cite{li2025improvinggeneralizedcmurule,larranaga2014index,Whittle_2005}.

\begin{definition}[Our discounted multi-armed bandit formulation]
\label{def: discount bandit}
    The discounted bandit formulation (parameterized with the discount factor $\alpha>0$) is defined similarly to Definition~\ref{def:MAB}, except for the Cost and Objective:

    \noindent\textbf{Cost:} At every moment of time, for any class-$i$ arm in the system, if its state is not $\bot$, it incurs cost at a rate of $\alpha c_i$ per unit time.

    \noindent\textbf{Objective:} The goal is to minimize the total discounted holding cost, \eqref{eq:eq8}. Again, let $C(t)$ be the total cost across all arms in the system at time $t$.
\end{definition}

\subsection{Setting up the Whittle approach for our bandit formulation}
\label{sec:whittle setup}
We now set up the Whittle approach for our bandit formulation. 
We start by defining the single-armed bandit problem for our discounted bandit formulation (Definition~\ref{def: discount bandit}).
\begin{definition}[Single-Armed Bandit]
\label{def: SAB}
    The single-armed bandit problem (parameterized with discount factor $\alpha>0$ and service cost $\ell>0$) corresponding to the discounted multi-armed bandit problem in Definition~\ref{def: discount bandit} is defined as follows. There is only one arm representing a single class-$i$ job, where:
    \begin{itemize}
    \item Arm State $x\in\mathbb{R}\cup \bot$, which represents the remaining inherent work on the corresponding job. If $x=\bot$, the job is marked as completed and leaves the system.
    \item Cost: At every moment of time, if the arm state is not $\bot$, it incurs a cost rate of $\alpha \cdot c_i$.
    \item Action $k\geq 1$ or $k=0$: At every moment of time, an action is chosen for this arm, representing the number of cores ($k$) allocated to the corresponding job. A service cost rate of $\ell\cdot k$ is incurred.
    \item State transition: If the action is $k=0$, the arm state stays the same (job is not served). Otherwise the state decreases with rate $s_i(k)$.
    \item Objective: The goal is to  minimize the discounted total cost
    \[ \int_0^\infty \left(\alpha c_i \mathbf{1}[X(t)\neq\bot] + \ell \cdot k(t)\right) e^{-\alpha t}dt, \]
    where $\mathbf{1}[X(t)\neq\bot]$ is an indicator variable which is 1 when the arm state at time $t$ is not $\bot$, and $k(t)$ is the action taken at time $t$. 
\end{itemize}
\end{definition}

Based on the single-armed bandit problem above, the indexability (of the discounted multi-armed bandit problem) is defined as follows. 
\begin{definition}[Indexability]
\label{def:indexability}
Given a discount factor $\alpha>0$,
define the {\em optimal action} for state $x$ in the single-armed bandit problem (parameterized by discount factor $\alpha$ and service penalty $\ell$) to be $k^*(x,\ell)$.
The discounted multi-armed bandit problem parameterized by the discount factor $\alpha$ is called {\em indexable} if for any state $x$, $k^*(x, \ell)$ is decreasing with $\ell$ and $\lim_{\ell\to 0} k^*(x,\ell)\to \infty$.  That is, when the service cost is lower, the optimal number of cores allocated is higher.
\end{definition}

To derive the Whittle policy, we assume indexability holds (this assumption is made in other works as well, e.g., \cite{ansell2003whittle}).
We will only use the assumption of indexability in defining our Whittle policy -- see Definition~\ref{def: whittle}.  The mean field optimality of our Whittle policy does not rely on indexability -- see Theorem~\ref{thm:mean field optimality of policytwo exp}.  

Assuming indexability holds for any discount factor $\alpha$, the Whittle policy dynamically sets the service cost $\ell$ to ensure that the total number of cores used at any moment of time is no more than $\numcore$. Mathematically, 
the Whittle policy for the discounted problem is defined as follows:
\begin{definition}[The Whittle policy for discounted bandits]
    \label{def: whittle}
     At every moment of time, given a set of jobs (arms) in the system, the Whittle policy calculates a ``market-clearing" service cost $\ell^*$.  Specifically, $\ell^*$ is the smallest value such that if all arms act optimally with respect to $\ell^*$ (i.e., each arm independently takes the optimal action in the corresponding single-arm bandit problem), the total number of cores used is no more than $\numcore$.
    Then for each arm, the action selected by the Whittle policy is just its optimal action for the single-arm bandit problem with service penalty $\ell^*$.

\end{definition}

\subsection{Our policy \policytwo}
\label{sec:gittins}
We now derive \policytwo (Definition~\ref{def:policytwo}), the closed form of the Whittle policy for our time-average multi-armed bandit formulation (Definition~\ref{def:MAB}). 
Note that this involves {\em first} deriving the Whittle policy for the discounted bandit problem (Definition~\ref{def: whittle} and Definition~\ref{def: discount bandit}), and {\em then} taking the vanishing discount factor limit ($\alpha \to 0$) to obtain the Whittle policy for the time-average problem. 

The Whittle policy for the discounted bandit problem does not have a closed form expression. Fortunately, since we only care about the case when $\alpha\to 0$ to derive \policytwo, we can simplify the problem by using little-$o$ notation.\footnote{We say any term $x\in o(\alpha)$ if $\lim_{\alpha\to 0} \frac{x}{\alpha}=0$. Similarly, $x\in o(1)$ if $\lim_{\alpha\to 0} x=0$. Finally, we say $x\in \Theta(\alpha)$ if $\lim_{\alpha\to 0} \frac{x}{\alpha}=c$ for some constant $c\neq 0$, and say $x\in \Theta(1)$ if $\lim_{\alpha\to 0} x=c$ for some constant $c\neq 0$.} 
In the remainder of this section, we first prove Lemma~\ref{lemma:single arm}
to derive the optimal policy for the single-armed bandit problem when $\alpha\to 0$. Lemma~\ref{lemma:single arm} is then used to derive the Whittle policy for the discounted bandit problems (Theorem~\ref{thm:discount whittle}). Our policy \policytwo (Definition~\ref{def:policytwo}) comes from Theorem~\ref{thm:discount whittle}.

We first define the function, $f_i$, which crops up repeatedly in our analysis.
\begin{definition}[$f_i$]
\label{def:fi}
For any class $i$, define
\begin{equation}
    f_i(k):=\frac{s_i'(k)}{s_i(k)-ks_i'(k)}.
\end{equation}
Here the derivative of $s_i$ at 1 is defined as $s_i'(1):= \lim_{\epsilon\to 0^+} \frac{s_i(1+\epsilon)-1}{\epsilon}$. Further we define $f_i^{-1}$ to be the inverse function of $f_i$ (the inverse exists because $f_i$ is strictly decreasing, see Lemma~\ref{lemma:f decreasing}). 
\end{definition}

We now derive the optimal solution of the single-armed bandit problem as $\alpha\to 0$. Notably, the optimal policy behaves differently for different scalings of the service cost $\ell$. At this point it could be confusing why we focus on these scalings, but it will be clear in Theorem~\ref{thm:discount whittle}.

\begin{lemma}[The optimal policy for the single-armed problem]
\label{lemma:single arm}
    When $\alpha\to 0$ and $\ell\to \Theta(1)$, the optimal policy for the single-armed bandit problem at arm state $x$ is 
    \begin{equation*}
    \begin{cases}
        \text{Choose action } k=1 & \text{if } \ell < \frac{c_i}{x}+o(1) \\
        \text{Choose action } k=0 & \text{otherwise.}
    \end{cases}
\end{equation*}
When $\alpha\to 0$ and $\ell\to \Theta(\alpha)$, the optimal policy for the single-armed bandit problem at arm state $x$ is
\begin{equation*}
    \begin{cases}
        \text{Choose action } k=1 & \text{if } \frac{\ell}{\alpha c_i}\geq f_i(1) \\
        \text{Choose action } k=f_i^{-1}\left( \frac{\ell}{\alpha c_i}\right) + o(1) & \text{otherwise.}
    \end{cases}
\end{equation*}

\end{lemma}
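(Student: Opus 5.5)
The plan is to exploit the fact that the single-armed problem of Definition~\ref{def: SAB} is \emph{deterministic}: once an action path $k(t)$ is fixed, the state evolves as $\dot x=-s_i(k(t))$. The arm completes at the first time $T$ with $\int_0^T s_i(k(t))\,dt=x$. I would first reduce to a small class of policies, then compute their costs exactly, and finally expand those costs in the two scalings of $\ell$.

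\emph{Step 1 (never idle, then serve).} Suppose a policy idles for time $\tau$ and then follows a continuation policy $\pi$ from the same state $x$, whose cost from time $\tau$ onward is $V_\pi(x)$. Its total cost is
\[(1-e^{-\alpha\tau})c_i + e^{-\alpha\tau}V_\pi(x).\]
This is a convex combination of $c_i$ (the cost of idling forever) and $V_\pi(x)$. So the optimum is attained either by idling forever, with cost $c_i$, or by serving from time $0$ without interruption. The same argument applied at any intermediate state rules out idle periods in the middle of service.

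\emph{Step 2 (constant allocation).} For an always-serving policy, I would reparametrize time by work done, $dw=s_i(k)\,dt$, so that
\[\text{cost}=\int_0^x e^{-\alpha t(w)}\,\frac{\alpha c_i+\ell k(w)}{s_i(k(w))}\,dw.\]
As $\alpha\to0$ we have $e^{-\alpha t(w)}=1-O(\alpha t(w))$, and to leading order the integrand is pointwise minimized by the $k$ minimizing $g(k):=(\alpha c_i+\ell k)/s_i(k)$. Hence a constant allocation is optimal up to the stated $o(\cdot)$ corrections.

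For constant $k$ the cost is exactly
\[\Bigl(c_i+\tfrac{\ell k}{\alpha}\Bigr)\Bigl(1-e^{-\alpha x/s_i(k)}\Bigr),\]
which is to be compared with $c_i$ for $k=0$.

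\emph{Step 3 (the two scalings).}

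If $\ell=\Theta(1)$, the cost of serving at constant $k$ is $\ell k x/s_i(k)+O(\alpha)$. Since $k/s_i(k)$ is nondecreasing by concavity, this is minimized at $k=1$, giving cost $\ell x+o(1)$. Comparing with $c_i$, we serve at $k=1$ iff $\ell<c_i/x+o(1)$.

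If $\ell=\theta\alpha$ with $\theta=\Theta(1)$, serving costs $\alpha x(c_i+\theta k)/s_i(k)+o(\alpha)\to0<c_i$, so we always serve. We must then minimize $h(k)=(c_i+\theta k)/s_i(k)$ over $k\ge1$. The first-order condition is $\theta(s_i-ks_i')=c_i s_i'$, i.e.\ $f_i(k)=\theta/c_i=\ell/(\alpha c_i)$. Because $f_i$ is strictly decreasing (Lemma~\ref{lemma:f decreasing}), $h$ is unimodal. Therefore the optimum is the boundary point $k=1$ when $\ell/(\alpha c_i)\ge f_i(1)$, and $f_i^{-1}(\ell/(\alpha c_i))$ otherwise, up to $o(1)$.

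\emph{Main obstacle.} The hard step is making Step 2 rigorous uniformly in $\alpha$. One must control the discount factor $e^{-\alpha t(w)}$ inside the integral and exclude policies that use unbounded $k$. Such policies are ruled out because $h(k)\to\infty$ as $k\to\infty$ by non-perfect elasticity. I would first bound the optimal $k$ on a compact set independent of $\alpha$. I would then show that the exact discounted optimizer converges to the minimizer of the limiting objective using continuity and unimodality of $h$, or else verify the resulting value function against the HJB equation $\alpha V=\min_k[\alpha c_i+\ell k-s_i(k)V'(x)]$.
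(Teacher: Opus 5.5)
Your proposal is correct and follows the paper's proof. Both arguments reduce to a constant action, compute the exact cost $(c_i+\ell k/\alpha)(1-e^{-\alpha x/s_i(k)})$ and compare it with $c_i$, then split into the $\ell=\Theta(1)$ and $\ell=\Theta(\alpha)$ cases using monotonicity of $k/s_i(k)$ and of $f_i$; the only difference is that you justify the constant action by reparametrizing by work and minimizing $(\alpha c_i+\ell k)/s_i(k)$ pointwise, while the paper replaces $k(t)$ by its discount-weighted average and invokes concavity.

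One correction to your obstacle paragraph: non-perfect elasticity does not force $h(k)\to\infty$. For example, $s_i(k)=k/2+1-1/(2k)$ satisfies all the assumptions, yet $k/s_i(k)\to 2$. You do not need that claim, though. For $k\ge 1$ you have $t(w)\le x$, so $e^{-\alpha t(w)}\in[e^{-\alpha x},1]$ uniformly. Moreover, $h$ is increasing beyond $f_i^{-1}(\ell/(\alpha c_i))$ whenever that point exists, and the statement presupposes that it does.
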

\begin{proof}

By concavity of the speedup function, the optimal action should not change throughout the job's life. Otherwise, picking the time-average number of cores leads to a better policy (see Lemma~\ref{lemma: app single arm} for a detailed proof). Hence, we only need to focus on the total cost of each action. 

    The total discounted cost for action $k=0$ is just 
    \begin{equation}
        \int_0^\infty \alpha e^{-\alpha s} c_i ds = c_i.
    \end{equation}
    For any action $k\geq 1$, the total discounted cost is 
    \begin{align}
        \int_0^{\frac{x}{s_i(k)}} e^{-\alpha s} (\alpha \cdot c_i + \ell \cdot k) ds \nonumber = \left(c_i + \frac{\ell k}{\alpha} \right) \cdot \left(1-e^{-\alpha \frac{x}{s_i(k)}}\right) \nonumber&= \left(c_i + \frac{\ell k}{\alpha} \right)\left( \alpha \frac{x}{s_i(k)} + o(\alpha) \right) \nonumber \\
        &= \frac{\ell k x}{s_i(k)} + \frac{c_i x}{s_i(k)} \alpha + o(\alpha)+o(\ell).
        \label{eq:eq12}
    \end{align}

    \textbf{Case 1: $\alpha\to 0$, $\ell=\Theta(1)$.} By \eqref{eq:eq12}, the optimal policy depends on the term $\frac{\ell k x}{s_i(k)}$. This term is increasing with $k$. Thus we only need to compare $\ell x +o(1)$ (which is the total discounted cost of action $k=1$) with $c_i$ (the total discounted cost of action $k=0$). Thus the optimal policy is 
    \begin{equation*}
    \begin{cases}
        \text{Choose action } k=1 & \text{if } \ell < \frac{c_i}{x}+o(1) \\
        \text{Choose action } k=0 & \text{otherwise.}
    \end{cases}
\end{equation*}

\textbf{Case 2: $\alpha\to 0$, $\ell=\Theta(\alpha)$.} In this case, by \eqref{eq:eq12} the total discounted cost of any action $k\geq 1$ is of order $\Theta(\alpha)$, while the total discounted cost of action $k=0$ is $c_i$, which is of order $\Theta(1)$. Thus to minimize the cost, the optimal action should be some $k\geq 1$. 

Note that by \eqref{eq:eq12} the total discounted cost of action $k\geq 1$ is
\[\left(\frac{kx}{s_i(k)}\cdot \frac{\ell}{\alpha}+\frac{c_ix}{s_i(k)}\right) \cdot \alpha+o(\alpha).\]
Thus to minimize the total discounted cost, the optimal $k$ minimizes 
\[\frac{k}{s_i(k)}\cdot \frac{\ell}{\alpha}+\frac{c_i}{s_i(k)}+o(1).\]
The derivative of the expression above without the $o(1)$ term is:
\begin{align*}
    \frac{\ell}{\alpha} \cdot  \frac{s_i(k)-ks_i'(k)}{(s_i(k))^2} + \frac{(-1)c_i s_i'(k)}{(s_i(k))^2}=\frac{c_i}{(s_i(k))^2}\cdot (s_i(k)-ks_i'(k)) \cdot \left(\frac{\ell}{\alpha c_i} - f_i(k)\right).
\end{align*}

Note that the function $f_i(k)$ is strictly decreasing with $k$ (see Lemma~\ref{lemma:f decreasing}). Thus if $\frac{\ell}{\alpha c_i}\geq f_i(1)$, the total cost is minimized at $k=1$; otherwise it is minimized at $k=f_i^{-1}(\frac{\ell}{\alpha c_i})+o(1)$. 
\end{proof}

Now we can derive the Whittle policy for our discounted multi-armed bandit. Note that a key step in deriving the Whittle policy (Definition~\ref{def: whittle}) is to pick the ``market-clearing'' service cost $\ell^*$. Using Lemma~\ref{lemma:single arm}, we can now show that the market-clearing $\ell^*$ scales differently with $\alpha$ depending on the current number of jobs in the system.

\begin{theorem}
\label{thm:discount whittle}
    Assuming indexability, the Whittle policy for the discounted multi-arm bandit (Definition~\ref{def: discount bandit}) for $\alpha\to 0$ is: At every moment of time, 
    \begin{itemize}
        \item if the number of jobs is larger than the number of cores, allocate 1 core to each of the $\numcore$ jobs with the largest value of $\left(\frac{\text{holding cost}}{\text{remaining job size}}+o(1)\right)$.
        \item otherwise, choose the smallest $\ell^*$ such that if every class-$i$ job gets $g_i(\ell^*)$ cores, the total number of cores used is no more than $\numcore$, where $g_i(\ell)$ is defined as 
    \begin{equation*}
        g_i(\ell):=
    \begin{cases}
        1 & \text{if } \ell\geq f_i(1)+o(1), \\
        f_i^{-1}\left( \frac{\ell}{c_i} \right)+o(1)& \text{otherwise,}
    \end{cases}
\end{equation*}
where $f_i$ and $f_i^{-1}$ are defined in Definition~\ref{def:fi}. Then, every class-$i$ job gets $g_i(\ell^*)$ cores.
    \end{itemize}
\end{theorem}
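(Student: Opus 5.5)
The plan is to apply Definition~\ref{def: whittle} directly, using Lemma~\ref{lemma:single arm} to describe each arm's optimal action as a function of $\ell$. Indexability ensures that the total core usage $\sum_j k^*(x_j,\ell)$ is nonincreasing in $\ell$, so the market-clearing $\ell^*$ is well defined. The only question is which scaling regime of $\ell$, relative to $\alpha$, contains $\ell^*$. The two regimes of Lemma~\ref{lemma:single arm} are $\ell=\Theta(1)$ and $\ell=\Theta(\alpha)$, and the proof splits according to whether the number of jobs $N$ exceeds $\numcore$.

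\textbf{Case $N\le \numcore$.} I first claim that $\ell^*=O(\alpha)$. If $\ell=\Theta(1)$, every arm with $\ell<c_i/x+o(1)$ takes one core and the rest take zero. If instead $\ell=o(1)$ but $\ell/\alpha\to\infty$, then redoing the cost expansion \eqref{eq:eq12} shows that the cost of action $k\ge1$ is $\ell k x/s_i(k)+O(\alpha)=o(1)$, which beats the cost $c_i$ of action $0$. Moreover, $\frac{\ell k x}{s_i(k)}$ is increasing in $k$ and dominates, so every arm takes exactly one core. Either way at most $N\le\numcore$ cores are used, so feasibility never forces $\ell$ up to these scales.

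Now write $\ell=\tilde\ell\alpha$. By Lemma~\ref{lemma:single arm}, a class-$i$ arm takes $g_i(\tilde\ell)$ cores: this is $1$ if $\tilde\ell/c_i\ge f_i(1)$ and $f_i^{-1}(\tilde\ell/c_i)+o(1)$ otherwise, independent of $x$. (The theorem's threshold ``$\ell\ge f_i(1)$'' should be read in the normalized units $\tilde\ell/c_i$; I would state this normalization explicitly.) Since $f_i$ is strictly decreasing (Lemma~\ref{lemma:f decreasing}), each $g_i$ is nonincreasing in $\tilde\ell$. It tends to $1$ for large $\tilde\ell$, giving total usage $N\le\numcore$, so feasibility is attainable. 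As $\tilde\ell\to0$ it grows without bound, because $f_i(k)\to0$ as $k\to\infty$; this follows from non-perfect elasticity together with concavity. Hence the smallest feasible $\tilde\ell$ is exactly the $\ell^*$ in the statement, and each class-$i$ job receives $g_i(\ell^*)$ cores.

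\textbf{Case $N>\numcore$.} At any $\ell=O(\alpha)$, every arm takes at least one core, for $N>\numcore$ cores in total, which is infeasible. For $\alpha\ll\ell=o(1)$, the analysis above again gives all arms one core, which is also infeasible. So $\ell^*=\Theta(1)$, where Lemma~\ref{lemma:single arm} says an arm is active (with one core) exactly when $c_i/x+o(1)>\ell$. The usage is then the number of arms whose index $c_i/x+o(1)$ exceeds $\ell$. The smallest $\ell$ keeping this count at most $\numcore$ is the $\numcore$-th largest index, up to tie-breaking, so the Whittle policy serves the $\numcore$ jobs with largest $c_i/x+o(1)$ on one core each.

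I expect the main obstacle to be rigor at the boundaries between regimes. This means justifying that no intermediate scaling of $\ell$ (between $\Theta(\alpha)$ and $\Theta(1)$) yields a smaller feasible price, and controlling the $o(1)$ error terms uniformly over the finitely many arms so that the comparisons and the order of ties are preserved in the limit. I would handle this by carrying the exact discounted cost $(c_i+\ell k/\alpha)(1-e^{-\alpha x/s_i(k)})$ rather than its expansion, and checking monotonicity in $k$ directly in each regime.
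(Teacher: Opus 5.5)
Your proposal is correct and follows the same route as the paper's proof. Both split on the number of jobs $N$ versus $\numcore$, apply Lemma~\ref{lemma:single arm} in the $\ell=\Theta(1)$ and $\ell=\Theta(\alpha)$ regimes, and read off the market-clearing price. You are more careful than the paper, which simply asserts that $\ell^*=\Theta(1)$ or $\ell^*=\Theta(\alpha)$ without ruling out intermediate scalings $\alpha\ll\ell\ll 1$. Your normalization remark is also right: the threshold should read $\ell/c_i\ge f_i(1)$, consistent with the condition $\frac{\ell}{\alpha c_i}\ge f_i(1)$ in Lemma~\ref{lemma:single arm}.

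Two small corrections.

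\textbf{Off-by-one in the market-clearing price.} Under the convention of Lemma~\ref{lemma:single arm}, an arm is active iff $\ell<c_i/x+o(1)$. So the smallest price at which at most $\numcore$ arms are active is the $(\numcore+1)$-th largest index, not the $\numcore$-th. At the $\numcore$-th largest index only $\numcore-1$ arms are active. Your conclusion about which jobs are served is unaffected.

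\textbf{The claim that $f_i(k)\to 0$ is false.} It does not follow from non-perfect elasticity and concavity. Take $s(k)=\frac{k+1}{2}+\frac14\left(1-e^{1-k}\right)$. It satisfies all the assumptions, including Assumption~\ref{assumption:si''}. Yet $s'(k)\to\frac12$ and $s(k)-ks'(k)\to\frac34$, so $f(k)\to\frac23$.

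What you actually need still holds, namely that total usage exceeds $\numcore$ as $\tilde\ell\to 0$. It is built into the indexability assumption: Definition~\ref{def:indexability} requires $k^*(x,\ell)\to\infty$ as $\ell\to 0$. It also follows directly: once $\tilde\ell/c_i<\inf_k f_i(k)$, the derivative computed in the proof of Lemma~\ref{lemma:single arm} is negative for every $k$, so no finite action is optimal and such prices are infeasible anyway. In such cases $f_i^{-1}$, and hence $g_i$, is only defined for $\tilde\ell/c_i\in(\inf_k f_i(k),\,f_i(1))$, a restriction the statement itself glosses over.
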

\begin{proof}
    By Definition~\ref{def: whittle}, the Whittle policy is determined by selecting the value of $\ell$ such that the total used cores is equal to $\numcore$. 
    Now we discuss two cases depending on the number of jobs.

    \textbf{Case 1: number of jobs $\mathbf{\geq \numcore}$.} 
    In this case, as $\alpha\to 0$, if $\ell^*\in \Theta(\alpha)$, by Lemma~\ref{lemma:single arm} we know that each job wants at least 1 core, which will exceed our total limit $\numcore$. Thus, we should make the service cost higher to decrease the total core usage. It turns out that the correct $\ell^*$ should scale as $\ell^*=\Theta(1)$. By Lemma~\ref{lemma:single arm}, whether a job with holding cost $c$ and remaining inherent work $x$ wants 1 or 0 cores depends on whether $\ell^*>\frac{c}{x}+o(1)$. In this case, by selecting the market-clearing $\ell^*$, the resulting policy allocates 1 core to each of the first $\numcore$ jobs with the largest $c/x+o(1)$.

 $x\in o(1)$ if $ \lim_{\alpha\to 0}\frac{x}{1} \to 0$

    \textbf{Case 2: number of jobs $\mathbf{<\numcore}$.}
    In this case, the market-clearing $\ell^*$ for the problem with discount factor $\alpha$ scales as $\ell^*\in \Theta(\alpha)$.
    Let $\lim_{\alpha\to 0} \frac{\ell^*}{\alpha} = \gamma$.
    By Lemma~\ref{lemma:single arm}, we have that every class-$i$ job gets 1 core if $\frac{\ell^*}{\alpha}\geq f_i(1)$, and gets $f_i^{-1}(\frac{\ell^*}{\alpha c_i}) + o(1)$ cores otherwise. Given that $\frac{\ell^*}{\alpha} = \gamma + o(1)$, each job gets $g_i(\gamma)$ cores, and hence we get the proof.
\end{proof}

Finally, the Whittle policy for the time-average multi-arm bandit problem (Definition~\ref{def:MAB}) can be obtained by taking the limit $\alpha\to 0$ on the policy from Theorem~\ref{thm:discount whittle} (thus all $o(1)$ terms vanish).

\begin{definition}[\policytwo]
\label{def:policytwo}
    At every moment of time,
    \begin{itemize}
        \item If the number of jobs is larger than the number of cores, allocate 1 core to each of the $\numcore$ jobs with largest value of $\frac{\text{holding cost}}{\text{remaining job size}}$.
        \item Otherwise, choose the smallest $\ell^*$ such that if every class-$i$ job gets $g_i(\ell^*)$ cores, the total number of cores used is no more than $\numcore$, where $g_i(\ell)$ is defined as
        \begin{equation*}
        g_i(\ell):=
    \begin{cases}
        1 & \text{if } \ell\geq f_i(1), \\
        f_i^{-1}\left( \frac{\ell}{c_i}\right)& \text{otherwise,}
    \end{cases}
\end{equation*}
where $f_i$ and $f_i^{-1}$ are defined in Definition~\ref{def:fi}. Then, every class-$i$ job gets $g_i(\ell^*)$ cores.
    \end{itemize}
\end{definition}

\subsection{Proving mean field optimality}
\label{sec: policy two proof}

In this section, we prove the mean field optimality of \policytwo (Definition~\ref{def:policytwo}) subject to some mild conditions. 
We make one more technical assumption to simplify this proof: we assume that job sizes are phase-type distributions.
This assumption is mild because phase-type distributions can approximate any distribution arbitrarily closely (Theorem~4.2 in \cite{asmussen2003applied}). Without such an assumption, to capture the dynamics of the system in mean field, one would need to use a {\em measure valued fluid approximation} (e.g., \cite{kaspi2011law}), which is beyond the scope of this paper.
For ease of presentation, we present our proof below for the case when job sizes are exponential. Specifically, the job size of a class-$i$ job follows an exponential distribution with rate $\mu_i$, i.e., $X_i \sim \Exp(\mu_i)$. The proof for phase type distributions follows a very similar roadmap, and is deferred to Appendix~\ref{app:ph type}.

We begin by noting that \policytwo always stabilizes the system in mean field. 
\begin{lemma}[Stability of \policytwo]
\label{lemma:wham stability}
    The system under \policytwo is stable in mean field.
\end{lemma}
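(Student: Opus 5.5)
The plan is a work-conserving comparison: show that whenever there is a backlog, WHAM completes inherent work at the maximal possible rate, so that in the exponential setting the number of jobs is dominated by a stable multi-server system of system load $\rho<1$.

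\textbf{Key property of WHAM.} When the number of jobs $N(t)$ exceeds $\numcore$, WHAM allocates exactly one core to each of $\numcore$ jobs. All $\numcore$ cores are then busy, and each runs at speed $s_i(1)=1$. Inherent work is therefore depleted at rate $\numcore$, which is the maximum achievable rate, since $s_i(k)\le k$.

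When $N(t)\le\numcore$, every job receives $g_i(\ell^*)\ge 1$ cores, so no job waits. This second mode therefore never builds a backlog.

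\textbf{Step 1: Lyapunov drift in the backlogged region.} With exponential sizes $X_i\sim\Exp(\mu_i)$, the state is the vector of per-class job counts $(N_1,\dots,N_\numclass)$. The ``largest $c/x$'' rule becomes a memoryless choice among jobs, and the arrivals are handled by the usual GI/GI supplementary variables. I would use the Lyapunov function $V=\sum_i N_i/\mu_i$, which is the expected total remaining inherent work.

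In the region $N>\numcore$, the expected drift of $V$ is
\[
\sum_i \lambda_i\inasym/\mu_i-\numcore\inasym=-(1-\rho)\numcore\inasym<0.
\]
This holds because each served class-$i$ job completes at rate $\mu_i$ and removes $1/\mu_i$ of expected work.

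In the region $N\le \numcore$, $V$ is bounded by $\numcore\max_i 1/\mu_i$, so this region is a compact set. A Foster--Lyapunov criterion (for GI arrivals, a fluid-limit argument in the Dai style) then gives positive recurrence, i.e. stability for every $\systemindex$ and hence in mean field.

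\textbf{Main obstacle.} The delicate part is the general inter-arrival times together with the priority rule, which makes the process non-Markov in the job counts alone. I would handle this with a fluid-model argument. In any fluid limit, whenever the scaled number of jobs exceeds the scaled core count, total fluid work decreases at rate at least $(1-\rho)\numcore$, so the fluid empties in finite time. Dai's theorem then yields stability.

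Because the drift rate $(1-\rho)$ is independent of $\systemindex$, the conclusion holds uniformly in $\systemindex$. This uniformity is what is used later to argue that the first mode (more than $\numcore$ jobs) occurs rarely in mean field.
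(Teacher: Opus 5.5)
Your key observation is correct, and it is the same fact the paper relies on. With at least $n$ jobs, all $n$ cores run at speed $s_i(1)=1$; with at most $n$ jobs, nobody waits. The paper turns this into a sample-path comparison with a GI/GI/$n$ queue and cites that queue's stability. Your Foster--Lyapunov route breaks at the claim that, under exponential sizes, the ``largest $c/x$'' rule is a memoryless choice, so that per-class counts form a Markov state. In the paper's model sizes are known, and WHAM's first mode ranks jobs by $c_i/x$ using \emph{realized} remaining sizes, so the state must include those sizes. Conditional on the counts, the $n$ jobs in service are exactly those selected for small remaining size, so they do not complete at rate $\mu_i$. Given the full state, completions are deterministic, and $V=\sum_i N_i/\mu_i$ can drift upward over short intervals even when $N>n$. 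Hence $\sum_i\lambda_i/\mu_i-n$ is not a valid drift for WHAM. Switching to the actual remaining work $W$ keeps the sound half of your idea: $W$ drifts at $-(1-\rho)n$ whenever $N\ge n$, for any size distribution. But then $\{N<n\}$ is no longer a bounded set, since a few jobs can carry arbitrarily large work. On that set the drain rate $\sum_j s_{i_j}(g_{i_j}(\ell^*))$ can be far below $\rho n$, so Foster's criterion does not apply.

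Your fluid fallback has the same hole. In Dai's scaling ($n$ fixed, initial state of size $r\to\infty$) the scaled core count is $0$. The case your argument omits is a fluid state with positive work but zero scaled job count: $O(1)$ jobs with $O(r)$ residual sizes. These are exactly the jobs WHAM's $c/x$ rule starves whenever $N>n$. Dai's theorem is also formulated for head-of-the-line disciplines, which this preemptive, size-based rule is not. In the mean-field scaling, where the scaled core count is $n$, the fluid does not empty at all (it tends to the ODE fixed point), and limits as $d\to\infty$ do not give positive recurrence of each finite system. The missing ingredient is that in the second mode \emph{every} job present, however large, gets at least one core. For example, let $\tau$ be the last time before $t$ with $N(\tau)<n$. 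WHAM drains work at rate $n$ on $(\tau,t]$, so $W(t)\le W^{(1)}(t)+(n-1)\max_j x_j(\tau)$. Here $W^{(1)}$ is the work of a speed-$n$ single server fed the same arrivals (load $\rho<1$), and the maximum is over the fewer than $n$ jobs present at $\tau$. Finally, your last paragraph overreaches. A drift of $-(1-\rho)n^{(d)}$ on $\{N>n^{(d)}\}$ says nothing about the stationary probability of that set vanishing, because the drift on $\{N\le n^{(d)}\}$ can be of order $+d$. That rarity is not part of this lemma and would need a separate concentration argument.
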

\begin{proof}
    [Proof Sketch:] We compare the system under \policytwo with a GI/GI/n system via a coupling argument. For any sample path, jobs under \policytwo complete strictly sooner than they do in the GI/GI/n system, because the job receives more cores under \policytwo than it does in the GI/GI/n system at every moment in time.  Hence, the number of jobs in system under \policytwo is stochastically smaller than the number of jobs in the GI/GI/n system.
    Because the GI/GI/n system is stable~\cite{li2025simple}, the \policytwo system must also be stable.
\end{proof}

In the mean field regime, the limiting dynamics of the queueing system can be captured by an ordinary differential equation (ODE)~\cite{kurtz1970solutions,gast2017refined}.
Notably, even for a GI arrival process with rate $\lambda_i$, the first-order drift is the same as for a Poisson arrival process with rate $\lambda_i$ (Theorem 3 in ~\cite{whittle1988restless}). Thus in the ODE the drift is simply $\lambda_i$. If we denote the normalized number of class-$i$ jobs in the system by $z_i$, then the \policytwo system in the mean field limit can be described by the following ODE: 
\begin{align}
    &\dot z_i = \lambda_i - z_i\cdot \mu_i \cdot s_i(g_i(\ell)), \label{eq:ODE exp}
\end{align}
where $\ell$ is determined by $\sum_{i=1}^\numclass z_i\cdot g_i(\ell) = \numcore$ and $g_i$ is defined in Definition~\ref{def:policytwo}.

Here, $\lambda_i$ is the arrival rate of class-$i$ jobs, $z_i \cdot \mu_i$ is the total completion rate of class-$i$ jobs (if running on 1 core), and $s_i(g_i(\ell))$ is the speedup that any class-$i$ job gets. Finally, by the definition of \policytwo\footnote{In the mean field limit, the number of jobs in system is never larger than the number of cores. }, $\ell$ is picked such that the total number of cores used equals $\numcore$, which is $\sum_{i=1}^\numclass z_i \cdot g_i(\ell) = \numcore$.

We can now state our main theorem. The mean field optimality of \policytwo relies on the assumption that ODE \eqref{eq:ODE exp} has a global attractor. Similar assumptions of the existence of a global attractor are also made in prior works (e.g., \cite{weber1990,Verloop_2016}, see \cite{Verloop_2016} for a detailed discussion)\footnote{Although we conjecture that ODE \eqref{eq:ODE exp} does have a global attractor, the proof is highly non-trivial even for the case when job sizes are exponential. We leave the verification of this conjecture to future work.}.  

\begin{theorem}[mean field optimality of \policytwo]
    \label{thm:mean field optimality of policytwo exp}
    Assuming ODE~\eqref{eq:ODE exp} has a global attractor, the \policytwo policy is mean field optimal. 
\end{theorem}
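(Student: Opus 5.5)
Since the lower bound \eqref{eq: lower bound} already gives $\HC^*_{\ourproblem\inasym}\geq \VR{\lambda_i}{\numcore}$, the plan is to show that the normalized cost under \policytwo converges to $\VR{\lambda_i}{\numcore}$. By Little's law, the normalized cost is $\frac{1}{\lambda}\sum_i c_i \E{N_i\inasym}/d$, where $N_i\inasym$ is the stationary number of class-$i$ jobs. So it suffices to show that $\E{N_i\inasym}/d \to z_i^*$, where $z^*$ is the equilibrium of ODE~\eqref{eq:ODE exp}, and that $\frac{1}{\lambda}\sum_i c_i z_i^* = \VR{\lambda_i}{\numcore}$.

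\textbf{Step 1 (the fixed point is the relaxed optimum).} At a fixed point of \eqref{eq:ODE exp} with market-clearing $\ell$, write $k_i := g_i(\ell)$. The fixed-point equation gives $z_i^* = \lambda_i/(\mu_i s_i(k_i))$, and the clearing condition gives
\[\sum_i \frac{\lambda_i \E{X_i} k_i}{s_i(k_i)} = \numcore,\]
which is the tight constraint in \eqref{eq:online opt}. I then check the KKT conditions of \eqref{eq:online opt}. With multiplier $\eta$ on the core constraint, stationarity in $k_i$ reads
\[c_i s_i'(k_i) = \eta\,\bigl(s_i(k_i) - k_i s_i'(k_i)\bigr),\]
that is, $f_i(k_i) = \eta/c_i$ whenever $k_i > 1$, and $f_i(1) \leq \eta/c_i$ when $k_i = 1$. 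This is exactly the definition of $g_i(\ell)$ with $\ell = \eta$. The problem is convex, so KKT is sufficient, and hence $k_i = \kR{\lambda_i}{\numcore}$. It follows that $\frac{1}{\lambda}\sum_i c_i z_i^* = \sum_i \frac{\lambda_i}{\lambda}\frac{c_i\E{X_i}}{s_i(k_i)} = \VR{\lambda_i}{\numcore}$. Along the way I need that $\sum_i z_i^* < \numcore$, so that WHAM's second mode is the active one near $z^*$. This follows because every $k_i \geq 1$ and the constraint is tight, and at least one $k_i > 1$ or the constraint is slack. I also need that $\ell \mapsto \sum_i z_i g_i(\ell)$ is monotone and continuous, which follows from Lemma~\ref{lemma:f decreasing} and Assumption~\ref{assumption:si''}; this makes the market-clearing $\ell(z)$ Lipschitz in $z$ on the region where jobs number fewer than $\numcore$.

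\textbf{Step 2 (convergence of the stationary distribution).} Under exponential sizes, with $z\inasym = N\inasym/d$, the scaled process is a density-dependent process whose drift equals the right-hand side of \eqref{eq:ODE exp} in the region $\sum_i z_i < \numcore$. For a GI arrival process, the first-order drift is the same as for Poisson arrivals, as noted before the theorem. Kurtz's theorem then gives convergence on compact time intervals to the ODE trajectory, provided the drift is Lipschitz near $z^*$; the first mode and the boundary $\sum_i z_i = \numcore$ only matter away from $z^*$. Combining this with the global attractor assumption via the standard argument (as in \cite{gast2017refined} or the Benaïm--Le Boudec theorem), every limit point of the stationary measures is invariant for the ODE and therefore concentrated on $z^*$. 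This gives $z\inasym \Rightarrow z^*$ in distribution.

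\textbf{Step 3 (upgrading to convergence of means).} Weak convergence does not give $\E{N_i\inasym}/d \to z_i^*$ by itself; I need uniform integrability of $N_i\inasym/d$, which I expect to be the main obstacle. My first attempt is the coupling in Lemma~\ref{lemma:wham stability}. It dominates the total number of jobs under \policytwo by that of a GI/GI/$\numcoresasym$ FCFS system with load $\rho<1$. For that system, Lemma~\ref{lemma:goldberg} bounds the expected queue length, and the number of jobs in service is at most $\numcoresasym$. Together these give $\sup_d \E{(\sum_i N_i\inasym/d)^{2}} < \infty$, possibly after using a second-moment version of the bound or a Lyapunov drift argument with $V = \|z\|^2$. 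That yields uniform integrability.

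Putting the steps together, the cost under \policytwo converges to $\VR{\lambda_i}{\numcore}$, which matches the lower bound \eqref{eq: lower bound}, so \policytwo is mean field optimal.
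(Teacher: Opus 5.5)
Your argument is essentially the paper's. The paper also solves $\dot z_i=0$ to get $z_i=\lambda_i\E{X_i}/s_i(g_i(\ell))$, fixes $\ell^*$ through the clearing equation $\sum_i\lambda_i g_i(\ell)\E{X_i}/s_i(g_i(\ell))=\numcore$, identifies $g_i(\ell^*)$ with the KKT solution of \eqref{eq:online opt} (Lemma~\ref{lemma:appendix D}), and then simply asserts the passage from the ODE attractor to the stationary cost, which your Steps 2--3 try to justify.

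Three smaller points. First, \eqref{eq:online opt} is convex only after the change of variables $W_i=k_i/s_i(k_i)$; with $s_i(k)=\sqrt{k}$ the feasible set is not convex in $k$. Second, Step 3 needs no second moment: $\E{N_Q\inasym}/\systemindex\to 0$ already makes the dominating quantity $\numcore+N_Q\inasym/\systemindex$ uniformly integrable. Third, the genuinely delicate step is Step 2, not Step 3: because \policytwo's size-aware first mode is not described by \eqref{eq:ODE exp}, you must first show that limiting stationary laws are carried by a compact subset of $\{\sum_i z_i<\numcore\}$ before the Bena\"{\i}m--Le Boudec argument applies.
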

\begin{proof}
Suppose ODE~\eqref{eq:ODE exp} has a global attractor, then the global attractor must be a stationary point. We next prove that there exists a unique stationary point for ODE~\eqref{eq:ODE exp}, and the normalized time-average total holding cost for the stationary point is $\VR{\lambda_i}{\numcore}$ (Definition~\ref{def:kR and VR}). Note that if this statement holds, we have that \policytwo is mean field optimal because its time-average total holding cost converges to the lower bound ~\eqref{eq: lower bound}.

    For the stationary point $z_i$, since $\dot{z}_i=0$, we have that $z_i \mu_i s_i(g_i(\ell))= \lambda_i.$
    Thus we have that 
    \begin{equation}
        z_i = \frac{\lambda_i}{s_i(g_i(\ell))}\E{X_i}.
        \label{eq:14 exp}
    \end{equation}
    Now we can find the stationary point by solving the equation $\sum_{i=1}^\numclass z_i\cdot g_i(\ell)=\numcore$,
    which is equivalently
    \begin{equation}
        \sum_{i=1}^\numclass \frac{\lambda_i g_i(\ell)}{s_i(g_i(\ell))}\E{X_i} = n.
        \label{eq:eq15 exp}
    \end{equation}

    Note that $g_i(\ell)$ is strictly decreasing for any $\ell< f_i(1)$. Also $\frac{k}{s_i(k)}$ is strictly increasing with $k$ (by Assumption~\ref{assumption:si''}). Thus we have that the left hand side of \eqref{eq:eq15 exp} is strictly decreasing with $\ell<\max_i f_i(1)$. Moreover, for any $\ell\geq \max_i f_i(1)$, we have $g_i(\ell)=1$ for any $i$, and the left hand side is equal to $\sum_{i}\lambda_i \E{X_i}<n$. This means there exists a unique $\ell^*<\max_i f_i(1)$ solving \eqref{eq:eq15 exp}. Then, the stationary point $z_i$ can be uniquely solved from \eqref{eq:14 exp}.
    
    Since the dynamics of the \policytwo system in mean field are captured by ODE~\eqref{eq:ODE exp} whose global attractor is $z_i$, we have that in the limit, any class-$i$ job is assigned $g_i(\ell^*)$ cores. Thus
    the mean response time of a class-$i$ job is $\frac{\E{X_i}}{s_i(g_i(\ell^*))}$, and the time-average total holding cost in the limit is
    \[\lim_{\systemindex\to\infty} \HC_{\ourproblem\inasym}^{\policytwo} = \frac{1}{\lambda}\sum_{i=1}^\numclass \frac{c_i\lambda_i\E{X_i}}{s_i(g_i(\ell^*))}.\]
    One can use Lagrange multipliers to solve the optimization problem~\eqref{eq:online opt} and get $\VR{\lambda_i}{\numcore} = \frac{1}{\lambda}\sum_{i=1}^\numclass \frac{c_i\lambda_i\E{X_i}}{s_i(g_i(\ell^*))}$ (Lemma~\ref{lemma:appendix D}). Thus we have that \policytwo is mean field optimal by \eqref{eq: lower bound}.
\end{proof}

\section{Simulation and Evaluation}
\label{sec:simulation}

We now evaluate the performance of our proposed policies, \policyone and \policytwo, through numerical simulations. Our evaluation focuses on two aspects: (1) verifying the mean field optimality of our policies and comparing their convergence rates, and (2) demonstrating the superiority of our policies (particularly \policytwo) against standard heuristics commonly used in the literature.

\paragraph{Convergence rates.}
Figure~\ref{fig:converge} plots the time-average total holding cost as a function of $n$ for each of our policies. As guaranteed by our theoretical results, both \policyone and \policytwo converge to the theoretical lower bound (Equation~\eqref{eq: lower bound}) as $n \to \infty$. However, \policytwo converges significantly faster than \policyone, achieving near-optimal performance even when the number of cores is small. 

\paragraph{Comparison to heuristics.}
We compare our policies against existing heuristics in the literature. It is important to note that prior policies---including EQUI~\cite{edmonds1999scheduling}, GREEDY~\cite{berg2017towards}, and  more complex heuristics such as HELL~\cite{lin2018model}, KNEE~\cite{lin2018model} or A-heSRPT~\cite{berg2021hesrpt}---can easily be shown to \emph{not} be mean field optimal. 
To visualize this optimality gap, we implement two representative heuristics:

    \noindent\textbf{EQUI}~\cite{edmonds1999scheduling}: A fairness-based policy that, at every moment, divides the total available cores equally among all active jobs in the system.

    \noindent\textbf{GREEDY}~\cite{berg2017towards}: An efficiency-based policy that, at every moment, maximizes the total rate the system completes inherent work. This heuristic is also applied in real systems (e.g., \cite{qiao2021pollux,jayaram2023sia}).

Figure~\ref{fig:others} compares \policytwo against EQUI and GREEDY. The results demonstrate that while \policytwo converges to the theoretical optimum, the heuristic policies do not. Furthermore, \policytwo consistently outperforms these heuristics for any number of cores, demonstrating its effectiveness even outside of the mean field limit.


\begin{figure}[t]
    \centering
    \begin{subfigure}{0.4\textwidth}
        \centering
        \includegraphics[width=\textwidth]{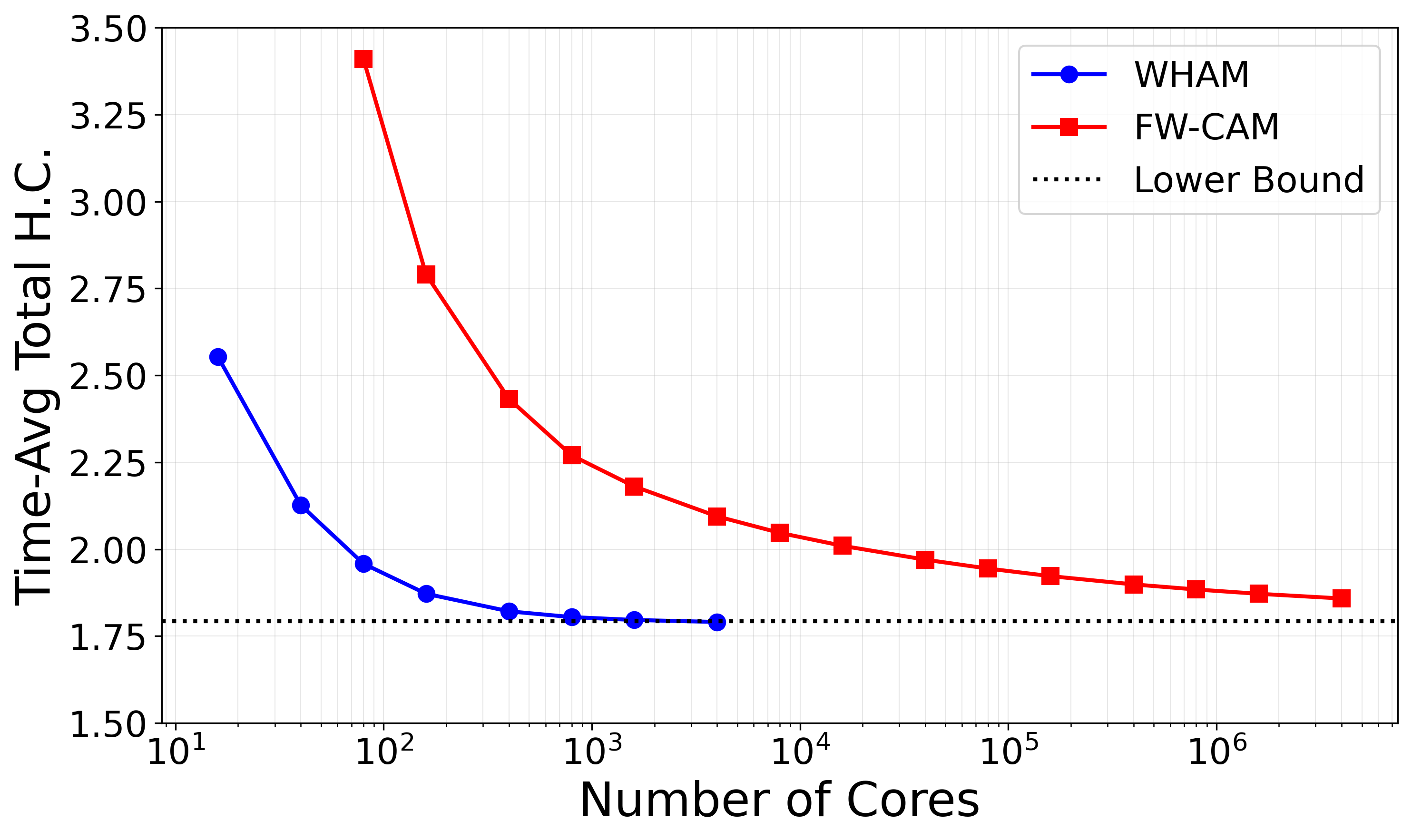}
        \caption{\policyone vs. \policytwo}
        \label{fig:converge}
    \end{subfigure}
    \qquad
    \begin{subfigure}{0.4\textwidth}
        \centering
        \includegraphics[width=\textwidth]{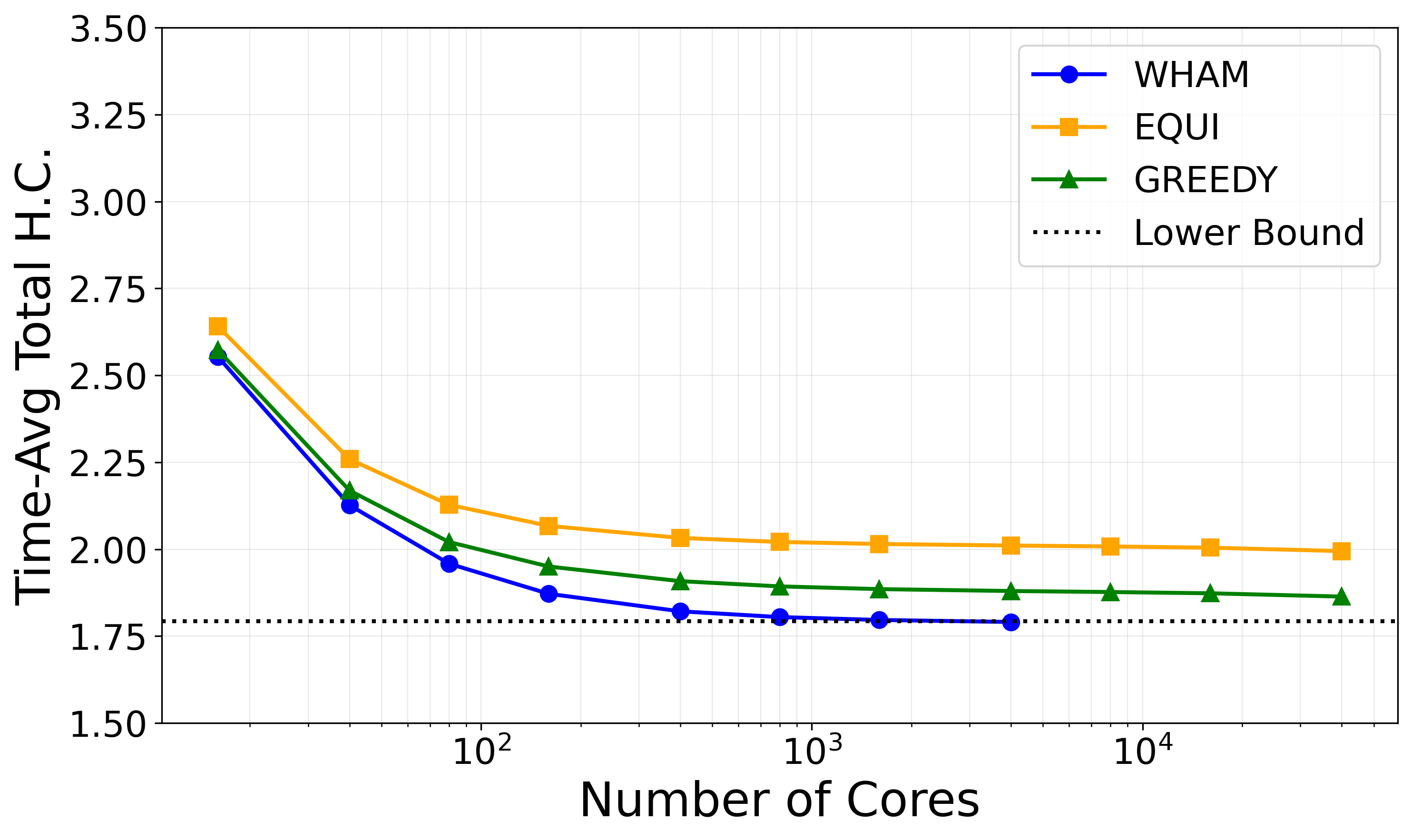}
        \caption{\policytwo vs. EQUI and GREEDY  }
        \label{fig:others}
    \end{subfigure}
    
    \caption{Performance evaluation of core allocation policies. The workload consists of three classes with various speedup functions and phase-type job size distributions.  See Appendix \ref{app:sim} for details. 
    As the number of cores scales, the arrival rate of each class scales proportionally to keep the system load fixed at $\mathbf{\rho=0.25}$.}
    \label{fig:simulation_results}
    \vspace{-.2in}
\end{figure}



\section{Conclusion and Discussion}
\label{sec:conclusion}

Almost a decade ago, \cite{berg2017towards} conjectured that the GREEDY policy, which equates the derivative of the speedup functions for every job in the system, is near-optimal for scheduling malleable jobs that follow different speedup functions.
However, this question has remained open until now.
By deriving the \policytwo policy, we can see that the correct idea in the mean field regime is to instead equate the function $f_i(k)$ (Definition~\ref{def:fi}) for every job in the system.
While the derivative of the speedup function figures prominently in $f_i(k)$, it turns out that optimal policy must consider not only the marginal benefit a job receives from each core, but also the job's current speedup.

This paper's insights come from studying the mean field regime.  In this regime, we see that, in fact, none of the policies in the literature for the \ourproblem problem are mean field optimal.  By contrast, this paper derives two different policies, \policyone and \policytwo, which are both mean field optimal. 

\bibliographystyle{ACM-Reference-Format}
\bibliography{bib}


\begin{thebibliography}{48}


\ifx \showCODEN    \undefined \def \showCODEN     #1{\unskip}     \fi
\ifx \showISBNx    \undefined \def \showISBNx     #1{\unskip}     \fi
\ifx \showISBNxiii \undefined \def \showISBNxiii  #1{\unskip}     \fi
\ifx \showISSN     \undefined \def \showISSN      #1{\unskip}     \fi
\ifx \showLCCN     \undefined \def \showLCCN      #1{\unskip}     \fi
\ifx \shownote     \undefined \def \shownote      #1{#1}          \fi
\ifx \showarticletitle \undefined \def \showarticletitle #1{#1}   \fi
\ifx \showURL      \undefined \def \showURL       {\relax}        \fi
\providecommand\bibfield[2]{#2}
\providecommand\bibinfo[2]{#2}
\providecommand\natexlab[1]{#1}
\providecommand\showeprint[2][]{arXiv:#2}

\bibitem[Aalto(2024)]%
        {aalto2024whittle}
\bibfield{author}{\bibinfo{person}{Samuli Aalto}.} \bibinfo{year}{2024}\natexlab{}.
\newblock \showarticletitle{Whittle index approach to the multi-class queueing systems with convex holding costs and {IHR} service times}.
\newblock \bibinfo{journal}{\emph{Mathematical Methods of Operations Research}} \bibinfo{volume}{100}, \bibinfo{number}{3} (\bibinfo{year}{2024}), \bibinfo{pages}{603--634}.
\newblock


\bibitem[Aalto et~al\mbox{.}(2009)]%
        {aalto2009gittins}
\bibfield{author}{\bibinfo{person}{Samuli Aalto}, \bibinfo{person}{Urtzi Ayesta}, {and} \bibinfo{person}{Rhonda Righter}.} \bibinfo{year}{2009}\natexlab{}.
\newblock \showarticletitle{On the Gittins index in the M/G/1 queue}.
\newblock \bibinfo{journal}{\emph{Queueing Systems}} \bibinfo{volume}{63}, \bibinfo{number}{1} (\bibinfo{year}{2009}), \bibinfo{pages}{437}.
\newblock


\bibitem[Ansell et~al\mbox{.}(2003)]%
        {ansell2003whittle}
\bibfield{author}{\bibinfo{person}{PS Ansell}, \bibinfo{person}{Kevin~D Glazebrook}, \bibinfo{person}{Jos{\'e} Nino-Mora}, {and} \bibinfo{person}{M O'Keeffe}.} \bibinfo{year}{2003}\natexlab{}.
\newblock \showarticletitle{Whittle's index policy for a multi-class queueing system with convex holding costs}.
\newblock \bibinfo{journal}{\emph{Mathematical Methods of Operations Research}} \bibinfo{volume}{57}, \bibinfo{number}{1} (\bibinfo{year}{2003}), \bibinfo{pages}{21--39}.
\newblock


\bibitem[Asmussen(2003)]%
        {asmussen2003applied}
\bibfield{author}{\bibinfo{person}{S{\o}ren Asmussen}.} \bibinfo{year}{2003}\natexlab{}.
\newblock \bibinfo{booktitle}{\emph{Applied probability and queues}}.
\newblock \bibinfo{publisher}{Springer}.
\newblock


\bibitem[Atar et~al\mbox{.}(2010)]%
        {atar2010cmu}
\bibfield{author}{\bibinfo{person}{Rami Atar}, \bibinfo{person}{Chanit Giat}, {and} \bibinfo{person}{Nahum Shimkin}.} \bibinfo{year}{2010}\natexlab{}.
\newblock \showarticletitle{The c$\mu$/$\theta$ rule for many-server queues with abandonment}.
\newblock \bibinfo{journal}{\emph{Operations Research}} \bibinfo{volume}{58}, \bibinfo{number}{5} (\bibinfo{year}{2010}), \bibinfo{pages}{1427--1439}.
\newblock


\bibitem[Ayesta et~al\mbox{.}(2021)]%
        {ayesta2021computation}
\bibfield{author}{\bibinfo{person}{Urtzi Ayesta}, \bibinfo{person}{Manu~K Gupta}, {and} \bibinfo{person}{Ina~Maria Verloop}.} \bibinfo{year}{2021}\natexlab{}.
\newblock \showarticletitle{On the computation of {Whittle’s} index for {Markovian} restless bandits}.
\newblock \bibinfo{journal}{\emph{Mathematical Methods of Operations Research}}  \bibinfo{volume}{93} (\bibinfo{year}{2021}), \bibinfo{pages}{179--208}.
\newblock


\bibitem[Berg et~al\mbox{.}(2017)]%
        {berg2017towards}
\bibfield{author}{\bibinfo{person}{Benjamin Berg}, \bibinfo{person}{Jan-Pieter Dorsman}, {and} \bibinfo{person}{Mor Harchol-Balter}.} \bibinfo{year}{2017}\natexlab{}.
\newblock \showarticletitle{Towards optimality in parallel scheduling}.
\newblock \bibinfo{journal}{\emph{Proceedings of the ACM on Measurement and Analysis of Computing Systems}} \bibinfo{volume}{1}, \bibinfo{number}{2} (\bibinfo{year}{2017}), \bibinfo{pages}{1--30}.
\newblock


\bibitem[Berg and Harchol-Balter(2021)]%
        {berg2021optimal}
\bibfield{author}{\bibinfo{person}{Benjamin Berg} {and} \bibinfo{person}{Mor Harchol-Balter}.} \bibinfo{year}{2021}\natexlab{}.
\newblock \showarticletitle{Optimal scheduling of parallel jobs with unknown service requirements}.
\newblock In \bibinfo{booktitle}{\emph{Handbook of Research on Methodologies and Applications of Supercomputing}}. \bibinfo{publisher}{IGI Global}, \bibinfo{pages}{18--40}.
\newblock


\bibitem[Berg et~al\mbox{.}(2020)]%
        {berg2020optimal}
\bibfield{author}{\bibinfo{person}{Benjamin Berg}, \bibinfo{person}{Mor Harchol-Balter}, \bibinfo{person}{Benjamin Moseley}, \bibinfo{person}{Weina Wang}, {and} \bibinfo{person}{Justin Whitehouse}.} \bibinfo{year}{2020}\natexlab{}.
\newblock \showarticletitle{Optimal resource allocation for elastic and inelastic jobs}. In \bibinfo{booktitle}{\emph{Proceedings of the 32nd ACM Symposium on Parallelism in Algorithms and Architectures}}. \bibinfo{pages}{75--87}.
\newblock


\bibitem[Berg et~al\mbox{.}(2024)]%
        {berg2024asymptotically}
\bibfield{author}{\bibinfo{person}{Benjamin Berg}, \bibinfo{person}{Benjamin Moseley}, \bibinfo{person}{Weina Wang}, {and} \bibinfo{person}{Mor Harchol-Balter}.} \bibinfo{year}{2024}\natexlab{}.
\newblock \showarticletitle{Asymptotically optimal scheduling of multiple parallelizable job classes}.
\newblock \bibinfo{journal}{\emph{arXiv preprint arXiv:2404.00346}} (\bibinfo{year}{2024}).
\newblock


\bibitem[Berg et~al\mbox{.}(2021)]%
        {berg2021hesrpt}
\bibfield{author}{\bibinfo{person}{Benjamin Berg}, \bibinfo{person}{Rein Vesilo}, {and} \bibinfo{person}{Mor Harchol-Balter}.} \bibinfo{year}{2021}\natexlab{}.
\newblock \showarticletitle{heSRPT: Parallel scheduling to minimize mean slowdown}.
\newblock \bibinfo{journal}{\emph{ACM SIGMETRICS Performance Evaluation Review}} \bibinfo{volume}{48}, \bibinfo{number}{3} (\bibinfo{year}{2021}), \bibinfo{pages}{35--36}.
\newblock


\bibitem[Berg et~al\mbox{.}(2022)]%
        {berg2022case}
\bibfield{author}{\bibinfo{person}{Benjamin Berg}, \bibinfo{person}{Justin Whitehouse}, \bibinfo{person}{Benjamin Moseley}, \bibinfo{person}{Weina Wang}, {and} \bibinfo{person}{Mor Harchol-Balter}.} \bibinfo{year}{2022}\natexlab{}.
\newblock \showarticletitle{The case for phase-aware scheduling of parallelizable jobs}.
\newblock \bibinfo{journal}{\emph{ACM SIGMETRICS Performance Evaluation Review}} \bibinfo{volume}{49}, \bibinfo{number}{3} (\bibinfo{year}{2022}), \bibinfo{pages}{65--66}.
\newblock


\bibitem[Chen et~al\mbox{.}(2025)]%
        {chen2025improving}
\bibfield{author}{\bibinfo{person}{Zhongrui Chen}, \bibinfo{person}{Isaac Grosof}, {and} \bibinfo{person}{Benjamin Berg}.} \bibinfo{year}{2025}\natexlab{}.
\newblock \showarticletitle{Improving multiresource job scheduling with markovian service rate policies}.
\newblock \bibinfo{journal}{\emph{Proceedings of the ACM on Measurement and Analysis of Computing Systems}} \bibinfo{volume}{9}, \bibinfo{number}{2} (\bibinfo{year}{2025}), \bibinfo{pages}{1--36}.
\newblock


\bibitem[Dai et~al\mbox{.}(2010)]%
        {Dai_2010}
\bibfield{author}{\bibinfo{person}{J.~G. Dai}, \bibinfo{person}{Shuangchi He}, {and} \bibinfo{person}{Tolga Tezcan}.} \bibinfo{year}{2010}\natexlab{}.
\newblock \showarticletitle{Many-server diffusion limits for G/Ph/n+GI queues}.
\newblock \bibinfo{journal}{\emph{The Annals of Applied Probability}} \bibinfo{volume}{20}, \bibinfo{number}{5} (\bibinfo{date}{Oct.} \bibinfo{year}{2010}).
\newblock
\showISSN{1050-5164}
\href{https://doi.org/10.1214/09-aap674}{doi:\nolinkurl{10.1214/09-aap674}}


\bibitem[Delgado et~al\mbox{.}(2018)]%
        {delgado2018kairos}
\bibfield{author}{\bibinfo{person}{Pamela Delgado}, \bibinfo{person}{Diego Didona}, \bibinfo{person}{Florin Dinu}, {and} \bibinfo{person}{Willy Zwaenepoel}.} \bibinfo{year}{2018}\natexlab{}.
\newblock \showarticletitle{Kairos: Preemptive data center scheduling without runtime estimates}. In \bibinfo{booktitle}{\emph{Proceedings of the ACM Symposium on Cloud Computing}}. \bibinfo{pages}{135--148}.
\newblock


\bibitem[Edmonds(1999)]%
        {edmonds1999scheduling}
\bibfield{author}{\bibinfo{person}{Jeff Edmonds}.} \bibinfo{year}{1999}\natexlab{}.
\newblock \showarticletitle{Scheduling in the dark}. In \bibinfo{booktitle}{\emph{Proceedings of the thirty-first annual ACM symposium on Theory of Computing}}. \bibinfo{pages}{179--188}.
\newblock


\bibitem[Gast and Bruno(2010)]%
        {gast2010mean}
\bibfield{author}{\bibinfo{person}{Nicolas Gast} {and} \bibinfo{person}{Gaujal Bruno}.} \bibinfo{year}{2010}\natexlab{}.
\newblock \showarticletitle{A mean field model of work stealing in large-scale systems}.
\newblock \bibinfo{journal}{\emph{ACM SIGMETRICS Performance Evaluation Review}} \bibinfo{volume}{38}, \bibinfo{number}{1} (\bibinfo{year}{2010}), \bibinfo{pages}{13--24}.
\newblock


\bibitem[Gast et~al\mbox{.}(2012)]%
        {gast2012mean}
\bibfield{author}{\bibinfo{person}{Nicolas Gast}, \bibinfo{person}{Bruno Gaujal}, {and} \bibinfo{person}{Jean-Yves Le~Boudec}.} \bibinfo{year}{2012}\natexlab{}.
\newblock \showarticletitle{Mean field for Markov decision processes: from discrete to continuous optimization}.
\newblock \bibinfo{journal}{\emph{IEEE Trans. Automat. Control}} \bibinfo{volume}{57}, \bibinfo{number}{9} (\bibinfo{year}{2012}), \bibinfo{pages}{2266--2280}.
\newblock


\bibitem[Gast and Van~Houdt(2017)]%
        {gast2017refined}
\bibfield{author}{\bibinfo{person}{Nicolas Gast} {and} \bibinfo{person}{Benny Van~Houdt}.} \bibinfo{year}{2017}\natexlab{}.
\newblock \showarticletitle{A refined mean field approximation}.
\newblock \bibinfo{journal}{\emph{Proceedings of the ACM on Measurement and Analysis of Computing Systems}} \bibinfo{volume}{1}, \bibinfo{number}{2} (\bibinfo{year}{2017}), \bibinfo{pages}{1--28}.
\newblock


\bibitem[Ghanbarian et~al\mbox{.}(2024)]%
        {Arpan2024}
\bibfield{author}{\bibinfo{person}{Samira Ghanbarian}, \bibinfo{person}{Arpan Mukhopadhyay}, \bibinfo{person}{Ravi~R. Mazumdar}, {and} \bibinfo{person}{Fabrice~M. Guillemin}.} \bibinfo{year}{2024}\natexlab{}.
\newblock \showarticletitle{On Optimal Server Allocation for Moldable Jobs with Concave Speed-Up}. In \bibinfo{booktitle}{\emph{Proceedings of the Twenty-Fifth International Symposium on Theory, Algorithmic Foundations, and Protocol Design for Mobile Networks and Mobile Computing}} (Athens, Greece) \emph{(\bibinfo{series}{MobiHoc '24})}. \bibinfo{publisher}{Association for Computing Machinery}, \bibinfo{address}{New York, NY, USA}, \bibinfo{pages}{191–200}.
\newblock
\showISBNx{9798400705212}


\bibitem[Grosof et~al\mbox{.}(2019)]%
        {grosof2019srpt}
\bibfield{author}{\bibinfo{person}{Isaac Grosof}, \bibinfo{person}{Ziv Scully}, {and} \bibinfo{person}{Mor Harchol-Balter}.} \bibinfo{year}{2019}\natexlab{}.
\newblock \showarticletitle{SRPT for multiserver systems}.
\newblock \bibinfo{journal}{\emph{ACM SIGMETRICS Performance Evaluation Review}} \bibinfo{volume}{46}, \bibinfo{number}{2} (\bibinfo{year}{2019}), \bibinfo{pages}{9--11}.
\newblock


\bibitem[Grosof et~al\mbox{.}(2022)]%
        {grosof2022optimal}
\bibfield{author}{\bibinfo{person}{Isaac Grosof}, \bibinfo{person}{Ziv Scully}, \bibinfo{person}{Mor Harchol-Balter}, {and} \bibinfo{person}{Alan Scheller-Wolf}.} \bibinfo{year}{2022}\natexlab{}.
\newblock \showarticletitle{Optimal scheduling in the multiserver-job model under heavy traffic}.
\newblock \bibinfo{journal}{\emph{Proceedings of the ACM on Measurement and Analysis of Computing Systems}} \bibinfo{volume}{6}, \bibinfo{number}{3} (\bibinfo{year}{2022}), \bibinfo{pages}{1--32}.
\newblock


\bibitem[Grosof and Wang(2024)]%
        {grosof2024bounds}
\bibfield{author}{\bibinfo{person}{Isaac Grosof} {and} \bibinfo{person}{Ziyuan Wang}.} \bibinfo{year}{2024}\natexlab{}.
\newblock \showarticletitle{Bounds on M/G/k scheduling under moderate load improving on SRPT-k and tightening lower bounds}.
\newblock \bibinfo{journal}{\emph{ACM SIGMETRICS Performance Evaluation Review}} \bibinfo{volume}{52}, \bibinfo{number}{2} (\bibinfo{year}{2024}), \bibinfo{pages}{24--26}.
\newblock


\bibitem[Gupta and Walton(2019)]%
        {gupta2019load}
\bibfield{author}{\bibinfo{person}{Varun Gupta} {and} \bibinfo{person}{Neil Walton}.} \bibinfo{year}{2019}\natexlab{}.
\newblock \showarticletitle{Load balancing in the nondegenerate slowdown regime}.
\newblock \bibinfo{journal}{\emph{Operations Research}} \bibinfo{volume}{67}, \bibinfo{number}{1} (\bibinfo{year}{2019}), \bibinfo{pages}{281--294}.
\newblock


\bibitem[Halfin and Whitt(1981)]%
        {halfin1981heavy}
\bibfield{author}{\bibinfo{person}{Shlomo Halfin} {and} \bibinfo{person}{Ward Whitt}.} \bibinfo{year}{1981}\natexlab{}.
\newblock \showarticletitle{Heavy-traffic limits for queues with many exponential servers}.
\newblock \bibinfo{journal}{\emph{Operations research}} \bibinfo{volume}{29}, \bibinfo{number}{3} (\bibinfo{year}{1981}), \bibinfo{pages}{567--588}.
\newblock


\bibitem[Jayaram~Subramanya et~al\mbox{.}(2023)]%
        {jayaram2023sia}
\bibfield{author}{\bibinfo{person}{Suhas Jayaram~Subramanya}, \bibinfo{person}{Daiyaan Arfeen}, \bibinfo{person}{Shouxu Lin}, \bibinfo{person}{Aurick Qiao}, \bibinfo{person}{Zhihao Jia}, {and} \bibinfo{person}{Gregory~R Ganger}.} \bibinfo{year}{2023}\natexlab{}.
\newblock \showarticletitle{Sia: Heterogeneity-aware, goodput-optimized ML-cluster scheduling}. In \bibinfo{booktitle}{\emph{Proceedings of the 29th Symposium on Operating Systems Principles}}. \bibinfo{pages}{642--657}.
\newblock


\bibitem[Kaspi and Ramanan(2011)]%
        {kaspi2011law}
\bibfield{author}{\bibinfo{person}{Haya Kaspi} {and} \bibinfo{person}{Kavita Ramanan}.} \bibinfo{year}{2011}\natexlab{}.
\newblock \showarticletitle{Law of large numbers limits for many-server queues}.
\newblock  (\bibinfo{year}{2011}).
\newblock


\bibitem[Kurtz(1970)]%
        {kurtz1970solutions}
\bibfield{author}{\bibinfo{person}{Thomas~G Kurtz}.} \bibinfo{year}{1970}\natexlab{}.
\newblock \showarticletitle{Solutions of ordinary differential equations as limits of pure jump Markov processes}.
\newblock \bibinfo{journal}{\emph{Journal of applied Probability}} \bibinfo{volume}{7}, \bibinfo{number}{1} (\bibinfo{year}{1970}), \bibinfo{pages}{49--58}.
\newblock


\bibitem[Larra{\~n}aga et~al\mbox{.}(2014)]%
        {larranaga2014index}
\bibfield{author}{\bibinfo{person}{Maialen Larra{\~n}aga}, \bibinfo{person}{Urtzi Ayesta}, {and} \bibinfo{person}{Ina~Maria Verloop}.} \bibinfo{year}{2014}\natexlab{}.
\newblock \showarticletitle{Index policies for a multi-class queue with convex holding cost and abandonments}. In \bibinfo{booktitle}{\emph{The 2014 ACM international conference on Measurement and modeling of computer systems}}. \bibinfo{pages}{125--137}.
\newblock


\bibitem[Latouche and Ramaswami(1999)]%
        {latouche1999introduction}
\bibfield{author}{\bibinfo{person}{Guy Latouche} {and} \bibinfo{person}{Vaidyanathan Ramaswami}.} \bibinfo{year}{1999}\natexlab{}.
\newblock \bibinfo{booktitle}{\emph{Introduction to matrix analytic methods in stochastic modeling}}.
\newblock \bibinfo{publisher}{SIAM}.
\newblock


\bibitem[Leis et~al\mbox{.}(2014)]%
        {leis2014morsel}
\bibfield{author}{\bibinfo{person}{Viktor Leis}, \bibinfo{person}{Peter Boncz}, \bibinfo{person}{Alfons Kemper}, {and} \bibinfo{person}{Thomas Neumann}.} \bibinfo{year}{2014}\natexlab{}.
\newblock \showarticletitle{Morsel-driven parallelism: a NUMA-aware query evaluation framework for the many-core age}. In \bibinfo{booktitle}{\emph{Proceedings of the 2014 ACM SIGMOD international conference on Management of data}}. \bibinfo{pages}{743--754}.
\newblock


\bibitem[Li and Goldberg(2025)]%
        {li2025simple}
\bibfield{author}{\bibinfo{person}{Yuan Li} {and} \bibinfo{person}{David~A Goldberg}.} \bibinfo{year}{2025}\natexlab{}.
\newblock \showarticletitle{Simple and explicit bounds for multiserver queues with 1/1- $\rho$ scaling}.
\newblock \bibinfo{journal}{\emph{Mathematics of Operations Research}} \bibinfo{volume}{50}, \bibinfo{number}{2} (\bibinfo{year}{2025}), \bibinfo{pages}{813--837}.
\newblock


\bibitem[Li et~al\mbox{.}(2024)]%
        {li2024rentgpusbudget}
\bibfield{author}{\bibinfo{person}{Zhouzi Li}, \bibinfo{person}{Benjamin Berg}, \bibinfo{person}{Arpan Mukhopadhyay}, {and} \bibinfo{person}{Mor Harchol-Balter}.} \bibinfo{year}{2024}\natexlab{}.
\newblock \bibinfo{title}{How to Rent GPUs on a Budget}.
\newblock
\showeprint[arxiv]{2406.15560}~[cs.DC]
\urldef\tempurl%
\url{https://arxiv.org/abs/2406.15560}
\showURL{%
\tempurl}


\bibitem[Li et~al\mbox{.}(2025)]%
        {li2025improvinggeneralizedcmurule}
\bibfield{author}{\bibinfo{person}{Zhouzi Li}, \bibinfo{person}{Keerthana Gurushankar}, \bibinfo{person}{Mor Harchol-Balter}, {and} \bibinfo{person}{Alan Scheller-Wolf}.} \bibinfo{year}{2025}\natexlab{}.
\newblock \bibinfo{title}{Improving Upon the generalized c-mu rule: a Whittle approach}.
\newblock
\showeprint[arxiv]{2504.10622}~[cs.PF]
\urldef\tempurl%
\url{https://arxiv.org/abs/2504.10622}
\showURL{%
\tempurl}


\bibitem[Lin et~al\mbox{.}(2018)]%
        {lin2018model}
\bibfield{author}{\bibinfo{person}{Sung-Han Lin}, \bibinfo{person}{Marco Paolieri}, \bibinfo{person}{Cheng-Fu Chou}, {and} \bibinfo{person}{Leana Golubchik}.} \bibinfo{year}{2018}\natexlab{}.
\newblock \showarticletitle{A model-based approach to streamlining distributed training for asynchronous SGD}. In \bibinfo{booktitle}{\emph{2018 IEEE 26th International Symposium on Modeling, Analysis, and Simulation of Computer and Telecommunication Systems (MASCOTS)}}. IEEE, \bibinfo{pages}{306--318}.
\newblock


\bibitem[Ni{\~n}o-Mora(2022)]%
        {nino2022multi}
\bibfield{author}{\bibinfo{person}{Jos{\'e} Ni{\~n}o-Mora}.} \bibinfo{year}{2022}\natexlab{}.
\newblock \showarticletitle{Multi-gear bandits, partial conservation laws, and indexability}.
\newblock \bibinfo{journal}{\emph{Mathematics}} \bibinfo{volume}{10}, \bibinfo{number}{14} (\bibinfo{year}{2022}), \bibinfo{pages}{2497}.
\newblock


\bibitem[Ni{\~n}o-Mora(2023)]%
        {nino2023markovian}
\bibfield{author}{\bibinfo{person}{Jos{\'e} Ni{\~n}o-Mora}.} \bibinfo{year}{2023}\natexlab{}.
\newblock \showarticletitle{Markovian restless bandits and index policies: A review}.
\newblock \bibinfo{journal}{\emph{Mathematics}} \bibinfo{volume}{11}, \bibinfo{number}{7} (\bibinfo{year}{2023}), \bibinfo{pages}{1639}.
\newblock


\bibitem[Qiao et~al\mbox{.}(2021)]%
        {qiao2021pollux}
\bibfield{author}{\bibinfo{person}{Aurick Qiao}, \bibinfo{person}{Sang~Keun Choe}, \bibinfo{person}{Suhas~Jayaram Subramanya}, \bibinfo{person}{Willie Neiswanger}, \bibinfo{person}{Qirong Ho}, \bibinfo{person}{Hao Zhang}, \bibinfo{person}{Gregory~R Ganger}, {and} \bibinfo{person}{Eric~P Xing}.} \bibinfo{year}{2021}\natexlab{}.
\newblock \showarticletitle{Pollux: Co-adaptive cluster scheduling for goodput-optimized deep learning}. In \bibinfo{booktitle}{\emph{15th $\{$USENIX$\}$ Symposium on Operating Systems Design and Implementation ($\{$OSDI$\}$ 21)}}.
\newblock


\bibitem[Scully and Harchol-Balter(2021)]%
        {scully2021gittins}
\bibfield{author}{\bibinfo{person}{Ziv Scully} {and} \bibinfo{person}{Mor Harchol-Balter}.} \bibinfo{year}{2021}\natexlab{}.
\newblock \showarticletitle{The Gittins policy in the M/G/1 queue}. In \bibinfo{booktitle}{\emph{2021 19th International Symposium on Modeling and Optimization in Mobile, Ad hoc, and Wireless Networks (WiOpt)}}. IEEE, \bibinfo{pages}{1--8}.
\newblock


\bibitem[Stolyar(2004)]%
        {stolyar2004maxweight}
\bibfield{author}{\bibinfo{person}{Alexander~L Stolyar}.} \bibinfo{year}{2004}\natexlab{}.
\newblock \showarticletitle{Maxweight scheduling in a generalized switch: State space collapse and workload minimization in heavy traffic}.
\newblock \bibinfo{journal}{\emph{The Annals of Applied Probability}} \bibinfo{volume}{14}, \bibinfo{number}{1} (\bibinfo{year}{2004}), \bibinfo{pages}{1--53}.
\newblock


\bibitem[Tumanov et~al\mbox{.}(2016)]%
        {tumanov2016tetrisched}
\bibfield{author}{\bibinfo{person}{Alexey Tumanov}, \bibinfo{person}{Timothy Zhu}, \bibinfo{person}{Jun~Woo Park}, \bibinfo{person}{Michael~A Kozuch}, \bibinfo{person}{Mor Harchol-Balter}, {and} \bibinfo{person}{Gregory~R Ganger}.} \bibinfo{year}{2016}\natexlab{}.
\newblock \showarticletitle{TetriSched: global rescheduling with adaptive plan-ahead in dynamic heterogeneous clusters}. In \bibinfo{booktitle}{\emph{Proceedings of the Eleventh European Conference on Computer Systems}}. \bibinfo{pages}{1--16}.
\newblock


\bibitem[Van~Mieghem(1995)]%
        {van1995dynamic}
\bibfield{author}{\bibinfo{person}{Jan~A Van~Mieghem}.} \bibinfo{year}{1995}\natexlab{}.
\newblock \showarticletitle{Dynamic scheduling with convex delay costs: The generalized c| mu rule}.
\newblock \bibinfo{journal}{\emph{The Annals of Applied Probability}} (\bibinfo{year}{1995}), \bibinfo{pages}{809--833}.
\newblock


\bibitem[Verloop(2016)]%
        {Verloop_2016}
\bibfield{author}{\bibinfo{person}{I.~M. Verloop}.} \bibinfo{year}{2016}\natexlab{}.
\newblock \showarticletitle{Asymptotically optimal priority policies for indexable and nonindexable restless bandits}.
\newblock \bibinfo{journal}{\emph{The Annals of Applied Probability}} \bibinfo{volume}{26}, \bibinfo{number}{4} (\bibinfo{date}{Aug.} \bibinfo{year}{2016}).
\newblock
\showISSN{1050-5164}
\href{https://doi.org/10.1214/15-aap1137}{doi:\nolinkurl{10.1214/15-aap1137}}


\bibitem[Wagner et~al\mbox{.}(2021)]%
        {wagner2021self}
\bibfield{author}{\bibinfo{person}{Benjamin Wagner}, \bibinfo{person}{Andr{\'e} Kohn}, {and} \bibinfo{person}{Thomas Neumann}.} \bibinfo{year}{2021}\natexlab{}.
\newblock \showarticletitle{Self-tuning query scheduling for analytical workloads}. In \bibinfo{booktitle}{\emph{Proceedings of the 2021 international conference on management of data}}. \bibinfo{pages}{1879--1891}.
\newblock


\bibitem[Weber and Weiss(1990)]%
        {weber1990}
\bibfield{author}{\bibinfo{person}{Richard~R. Weber} {and} \bibinfo{person}{Gideon Weiss}.} \bibinfo{year}{1990}\natexlab{}.
\newblock \showarticletitle{On an Index Policy for Restless Bandits}.
\newblock \bibinfo{journal}{\emph{Journal of Applied Probability}} \bibinfo{volume}{27}, \bibinfo{number}{3} (\bibinfo{year}{1990}), \bibinfo{pages}{637--648}.
\newblock
\showISSN{00219002}
\urldef\tempurl%
\url{http://www.jstor.org/stable/3214547}
\showURL{%
\tempurl}


\bibitem[Whittle(1988)]%
        {whittle1988restless}
\bibfield{author}{\bibinfo{person}{Peter Whittle}.} \bibinfo{year}{1988}\natexlab{}.
\newblock \showarticletitle{Restless bandits: Activity allocation in a changing world}.
\newblock \bibinfo{journal}{\emph{Journal of applied probability}} \bibinfo{volume}{25}, \bibinfo{number}{A} (\bibinfo{year}{1988}), \bibinfo{pages}{287--298}.
\newblock


\bibitem[Whittle(2005)]%
        {Whittle_2005}
\bibfield{author}{\bibinfo{person}{P. Whittle}.} \bibinfo{year}{2005}\natexlab{}.
\newblock \showarticletitle{Tax problems in the undiscounted case}.
\newblock \bibinfo{journal}{\emph{Journal of Applied Probability}} \bibinfo{volume}{42}, \bibinfo{number}{3} (\bibinfo{year}{2005}), \bibinfo{pages}{754–765}.
\newblock
\href{https://doi.org/10.1239/jap/1127322025}{doi:\nolinkurl{10.1239/jap/1127322025}}


\bibitem[Yu and Scully(2024)]%
        {yu2024strongly}
\bibfield{author}{\bibinfo{person}{George Yu} {and} \bibinfo{person}{Ziv Scully}.} \bibinfo{year}{2024}\natexlab{}.
\newblock \showarticletitle{Strongly tail-optimal scheduling in the light-tailed M/G/1}.
\newblock \bibinfo{journal}{\emph{Proceedings of the ACM on Measurement and Analysis of Computing Systems}} \bibinfo{volume}{8}, \bibinfo{number}{2} (\bibinfo{year}{2024}), \bibinfo{pages}{1--33}.
\newblock


\end{thebibliography}

\appendix

\section{Almost no parallelism happens in asymptotic regimes that $\rho\to 1$}
\label{appendix:asym}



The following proposition shows that for the \ourproblem problem with  strictly sub-linear speedup functions under asymptotic regimes where $\rho\to1$, the fraction of work done in parallelism goes to 0. Most notation is defined in Section~\ref{sec:setting}.

\begin{proposition}
\label{lemma: app asym}
Define $\rho\inasym:=\sum_{i=1}^\numclass \lambda_i\inasym \E{X_i}$ to be the total load in the $\systemindex^{th}$ system. Assume that the speedup functions of all classes are strictly sub-linear and the asymptotic regime has $\lim_{\systemindex\to \infty}\rho^{(\systemindex)}\to 1$. 
For any $k_0>1$, further define $q\inasym(k_0)$ to be the time-average proportion of inherent work finished using more than $k_0$ cores. Then 
    \[\forall k_0>1, \lim_{\systemindex\to \infty} q\inasym(k_0)\to 0.\]    
\end{proposition}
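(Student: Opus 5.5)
We argue by a work-conservation (effective load) counting argument. Note that the proposition's definition of $\rho\inasym$ omits the division by $\numcore\inasym$; we read it as the normalized load $\rho\inasym = \frac{1}{\numcore\inasym}\sum_{i}\lambda_i\inasym\E{X_i}$, so that $\rho\inasym\to1$ is meaningful. Fix $k_0>1$ and any stable policy in the $\systemindex^{th}$ system. Let $W\inasym:=\sum_i\lambda_i\inasym\E{X_i}=\rho\inasym\numcore\inasym$ be the long-run rate at which inherent work arrives. By stability this also equals the long-run rate at which inherent work is completed. The key quantity is the time-average number of busy cores (core-seconds consumed per unit time). This number can never exceed $\numcore\inasym$, so we aim to lower bound it by $W\inasym(1+\delta q\inasym(k_0))$ for a constant $\delta>0$ depending only on $k_0$ and the speedup functions.

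First, define the per-core efficiency of a class-$i$ job on $k$ cores as $e_i(k):=s_i(k)/k$. This is the rate at which inherent work is done per core. By the concavity assumption, $e_i$ is nonincreasing in $k$, and $e_i(k)\le1$. Strict sub-linearity means $s_i(k)<k$ for $k>1$; together with monotonicity of $e_i$, this gives $\eta:=\max_i e_i(k_0)<1$, and hence $e_i(k)\le\eta$ for all $k\ge k_0$ and all $i$. A finite number of classes is needed for this maximum to be $<1$.

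Second, write the completed-work rate as an integral over time of $\sum_{\text{jobs}}s_{i_j}(k_j(t)) = \sum_j e_{i_j}(k_j(t))\,k_j(t)$. The core-seconds spent producing one unit of inherent work equal $1/e\ge1$ in general, and are at least $1/\eta$ when the job runs on more than $k_0$ cores. Taking long-run time averages, and using the definition of $q\inasym(k_0)$ as the fraction of inherent work completed with more than $k_0$ cores, we get
\[
\numcore\inasym \;\ge\; \text{(time-average busy cores)} \;\ge\; W\inasym\Big[(1-q\inasym(k_0)) + \frac{q\inasym(k_0)}{\eta}\Big] = \rho\inasym\numcore\inasym\Big[1+q\inasym(k_0)\Big(\frac1\eta-1\Big)\Big].
\]
Dividing by $\numcore\inasym$ gives $q\inasym(k_0)\le \frac{1/\rho\inasym-1}{1/\eta-1}$. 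The right-hand side tends to $0$ as $\rho\inasym\to1$, which proves the claim.

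The main obstacle is the rigor of the time-average step. We must justify that completed work equals arrived work, and that the ergodic averages of busy cores and of work by allocation level exist. This follows from the paper's standing assumptions of ergodicity and stationarity, together with stability; without stability the statement is vacuous or must be read along a subsequence. A smaller point is that allocations may vary within a job's lifetime. The accounting above handles this because it integrates instantaneously over time rather than per job.
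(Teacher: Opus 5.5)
Your proposal is correct and follows the paper's proof essentially step for step. The paper also splits inherent work into the fraction $q$ completed on more than $k_0$ cores, which costs at least $k_0/\max_i s_i(k_0) = 1/\eta$ core-seconds per unit, and the remainder, which costs at least one core-second per unit; it then compares the total with $n$ via stability and obtains the same bound $q \le (1/\rho - 1)/(1/\eta - 1)$, with your remark that $\rho$ must be read as normalized by the number of cores matching the paper's use of $\rho n$ as the inflow rate of inherent work.
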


\begin{proof}
Recall that each type-$i$ job brings $X_i$ inherent work to the system. Thus we have that the time average inflowing rate of the inherent work is
\begin{equation}
    \sum_{i=1}^{\numclass} \lambda_i\inasym \cdot \E{X_i}=  \rho\inasym \cdot \numcore\inasym.
\end{equation}

    If a class-$i$ job is running on $k\geq k_0$ cores, then a unit inherent work is finished in $\frac{1}{s_i(k)}$ units of time. Thus the {\em core-second} needed for a unit inherent work is $k\cdot \frac{1}{s_i(k)}> \frac{k_0}{s_i(k_0)}$ (because of the concavity of the speedup function $s_i$). 
    
    Thus, for the  $q\inasym(k_0)$ proportion of inherent work finished using more than $k_0$ cores, the total core-second needed is at least 
    \[q\inasym(k_0) \cdot \rho\inasym \numcore\inasym\cdot \frac{k_0}{\max_i s_i(k_0)}.\]

    Moreover, for the rest $1-q\inasym(k_0)$ proportion of inherent work, by sub-linearity of the speedup functions, the total core-second needed is at least $(1-q\inasym(k_0))\cdot\rho\inasym\numcore\inasym$.
    Note that for the system to be stable, the time average total core-second needed must be no more than $\numcore\inasym$. Thus we have 
    \[q\inasym(k_0) \cdot \rho\inasym\numcore\inasym \cdot \frac{k_0}{s_i(k_0)} + (1-q\inasym(k_0))\cdot\rho\inasym\numcore\inasym\leq \numcore\inasym.\]

    This inequality gives that 
    \[q\inasym(k_0)\leq \left(\frac{1}{\rho\inasym}-1\right)\cdot\left(\frac{1}{\frac{k_0}{\max_i s_i(k_0)}-1}\right).\]
    Since $\lim_{\systemindex\to \infty}\rho^{(\systemindex)}\to 1$, we have that for any $k_0>1$, $\lim_{\systemindex\to \infty} q\inasym(k_0)\to 0$.

\end{proof}

\section{Mean field optimal policies should not use exact job size information}
In this section, we prove that for any policy that is mean field optimal, it almost surely allocate the same number of cores to all class-$i$ jobs, regardless of their exact job size. The proof uses results and notation from Section~\ref{sec:relax}.

\begin{proposition}
\label{prop:app no job size}
Assume the speedup function for any class $i$ is strictly concave.
Define $k_i:=\kR{\lambda_i}{\numcore}$.
    For any policy $\pi$ that is mean field optimal, and for any class $i$,  we have that the fraction of class-$i$ inherent work completed using $k_i$ cores goes to 1 in mean field.
\end{proposition}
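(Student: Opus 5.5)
The plan is to lower-bound the cost of any policy $\pi$ by a relaxed optimization over \emph{distributions} of core allocations, in the same spirit as Theorem~\ref{thm: optimal policy for relax}. Strict concavity of the objective then forces any near-optimal distribution to concentrate at $k_i=\kR{\lambda_i}{\numcore}$.

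Fix $\pi$ and, for the $\systemindex^{th}$ system, let $\nu_i\inasym$ be the probability measure on $[1,\infty)$ describing the fraction of class-$i$ inherent work completed while the job holds $k$ cores. Work completed at allocation $k$ takes $1/s_i(k)$ seconds and $k/s_i(k)$ core-seconds per unit. Since queueing time is nonnegative, the mean class-$i$ response time satisfies
\[
\E{T_i}\ge \E{X_i}\int \frac{d\nu_i\inasym(k)}{s_i(k)}.
\]
This is an identity for service time, provided the allocation a job receives is weighted by its work, which is exactly how $\nu_i\inasym$ is defined. Stability requires the time-average core usage to be at most $\numcore\inasym$; after normalizing by $\systemindex$,
\[
\sum_i \lambda_i\E{X_i}\int \frac{k}{s_i(k)}\,d\nu_i\inasym(k)\le \numcore .
\]
Hence $\HC^\pi_{\ourproblem\inasym}\ge \Phi(\nu\inasym):=\frac1\lambda\sum_i\lambda_i c_i\E{X_i}\int \frac{d\nu_i\inasym}{s_i}$, where $\nu\inasym$ lies in the feasible set $\mathcal F$ of measures satisfying the constraint above.

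Next, reparametrize by $u=k/s_i(k)$, the effective work per unit of inherent work. Write $h_i(u)=1/s_i(k)$. Because $s_i$ is concave, $u$ is increasing in $k$, and $h_i$ is a decreasing convex function of $u$ (checking convexity is a short derivative computation). Strict concavity of $s_i$ makes $h_i$ strictly convex. In the $u$ variable the constraint is linear in the measures and the objective is $\sum_i w_i\int h_i\,d\tilde\nu_i$. Jensen's inequality gives $\int h_i\,d\tilde\nu_i\ge h_i(\bar u_i)$, where $\bar u_i$ is the mean of $u$ under $\tilde\nu_i$, and the vector of means $\bar u_i$ still satisfies the linear constraint. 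So $\Phi(\nu)\ge \VR{\lambda_i}{\numcore}$, with equality only if every $\tilde\nu_i$ is the point mass at $u_i^*=k_i/s_i(k_i)$. This step relies on the optimizer of \eqref{eq:online opt} being unique, which holds because the objective is strictly convex in the $u$ variables.

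The main obstacle is turning this into a quantitative stability statement. Suppose that along a subsequence the fraction of class-$i$ work done at allocation $k_i$ stays bounded away from 1. This has to mean that $\nu_i\inasym$ puts mass at least $\delta$ outside an $\epsilon$-neighborhood of $k_i$; I will interpret "using $k_i$ cores" in this asymptotic sense. From that, I need $\Phi(\nu\inasym)\ge \VR{\lambda_i}{\numcore}+\eta$ for some $\eta>0$ that does not depend on $\systemindex$. I would get this in two parts:
\begin{itemize}
\item A strict-convexity Jensen gap for $h_i$: mass $\delta$ at distance $\ge\epsilon$ from $\bar u_i$ forces a gap of at least $c(\epsilon,\delta)>0$. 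If instead $\bar u_i$ itself has drifted away from $u_i^*$, the gap comes from the unique optimum together with continuity of the finite-dimensional problem.
\item Handling the case where the mass escapes to $u\to\infty$. There the constraint cost blows up, so $\bar u_i$ is bounded and compactness of the feasible means applies.
\end{itemize}
With $\eta$ in hand, $\liminf_\systemindex \HC^\pi_{\ourproblem\inasym}\ge \VR{\lambda_i}{\numcore}+\eta$. By Theorem~\ref{thm:mean field optimal}, the cost of \policyone converges to $\VR{\lambda_i}{\numcore}$, so the ratio in the definition of mean field optimality would exceed 1. This contradicts the mean field optimality of $\pi$.
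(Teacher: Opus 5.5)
Your proof is correct, but it is organized differently from the paper's.

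\textbf{The paper's route.} The paper argues by cases. If class-$i$ work is spread over several allocations, it builds a policy $\pi'$ for the \budget problem that always gives class-$i$ jobs the time-averaged allocation $k'=\int k\,h_i(k)\,dk$. Jensen for the concave $s_i$, under the \emph{time}-weighted distribution, shows $\pi'$ is at least as fast while using a fraction $c>0$ fewer cores. The contradiction then comes from $\VR{\lambda_i}{\numcore(1-c)}>\VR{\lambda_i}{\numcore}$. If instead the work concentrates at a single $k_i'$, uniqueness of the optimizer of \eqref{eq:online opt} gives $k_i'=k_i$.

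\textbf{Your route.} You keep the core budget fixed and put all the strictness into the objective. With the \emph{work}-weighted measure in the coordinate $u=k/s_i(k)$, the constraint is linear. Jensen for the convex $h_i$ then lower-bounds every policy's cost by the value of \eqref{eq:online opt} at the vector of means. So you never construct $\pi'$ and never need strict monotonicity of $V_R^*$ in $\numcore$.

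A remark on convexity: since $dh_i/du=-f_i(k)$, the convexity you need is exactly Lemma~\ref{lemma:f decreasing}. If you want to assume only strict concavity rather than Assumption~\ref{assumption:si''}, note that $u=t\,s_i^{-1}(1/t)$ with $t=1/s_i(k)$ is the perspective of the convex function $s_i^{-1}$. It is therefore convex and decreasing in $t$, so its inverse $h_i$ is convex.

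\textbf{What each approach buys.} Your route gives control that is uniform in $\systemindex$. The paper treats $h_i$ and the saving $c$ as fixed objects, although both depend on $\systemindex$, and $c$ can tend to $0$ (for instance when the allocation distribution concentrates around a point that moves with $\systemindex$). Your $\eta$ covers every sequence: it comes from a strict Jensen gap when $\bar u_i$ is near $u_i^*$, and from compactness plus uniqueness when it is not.

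Your weak-convergence reading of ``using $k_i$ cores'' is also the one the statement needs. \policyone itself allocates $\kR{\lambda_i}{\numcore-\numcore^\beta\systemindex^{\beta-1}}$, which in general differs from $k_i$ for every $\systemindex$.

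The paper's route, in exchange, is more operational: it exhibits the concrete cheaper policy that a spread-out allocation leaves on the table.
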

\begin{proof}
    We prove the proposition by contradiction. Suppose the statement does not hold, then either (1) fraction of class-$i$ work completed using $k'\neq k_i$ goes to 1 in mean field, or (2) the fraction of class-$i$ work is distributed among multiple values (or a range) of $k$, such that the fraction does not converge to 1 for any single core count. 

    We first prove that (2) cannot hold for any class $i$. Suppose (2) holds for some class $i$. Let $h_i(k)$ be the fraction of time that a class-$i$ job is running on $k$ cores ($\int_{1}^\infty h_i(k)dk=1$). 
    Now we create a policy $\pi'$ for the \budget problem, which always allocate $k':=\int_{1}^\infty k\cdot h_i(k)dk$ cores to class-$i$ jobs. By concavity of the speedup function, we have that the mean response time of class-$i$ job under $\pi'$ is no worse than that under $\pi$ (see Lemma 1 in \cite{li2024rentgpusbudget}). Moreover, the time-average number of cores used under $\pi'$ is strictly better than that under $\pi$: the time-average number of cores used under $\pi'$ is $\frac{k' \lambda_i\E{X_i}}{s_i(k')}$, and that under $\pi$ is 
    \begin{align*}
        \int_{0}^k \frac{k \lambda_i\E{X_i} \cdot \frac{s_i(k)h_i(k)}{\int_{1}^\infty s_i(x)h_i(x) dx}}{s_i(k)} dk & = \int_{0}^k k \lambda_i\E{X_i} \cdot \frac{h_i(k)}{\int_{1}^\infty s_i(x)h_i(x) dx} dk \\
        &= \frac{1}{\int_1^\infty s_i(x)h_i(x) dx} \lambda_i \E{X_i} k'\\
        &> \frac{\lambda_i\E{X_i}k'}{s_i(k')}.
    \end{align*}
    Define the reduced ratio of the time-average number of cores used to be some $c>0$. 
    Then, we have that
    \[\lim_{\systemindex\to\infty}\HC_{\ourproblem\inasym}^\pi = \lim_{\systemindex\to\infty}\HC_{\budget\inasym}^\pi \geq \lim_{\systemindex\to\infty}\HC_{\budget\inasym}^{\pi'}
    \geq \VR{\lambda_i}{\numcore(1-c)}>\VR{\lambda_i}{\numcore}.\]

    Next, since (2) does not hold for any class $i$, under $\pi$ each class $i$ job only runs on a single number of cores in mean field, denoted by $k_i'$. This means that $k_i'$ is a feasible solution of the optimization problem~\eqref{eq:online opt}. Since the speedup functions are strictly concave, the optimization problem~\ref{eq:online opt} can be translated into a strictly convex optimization and the optimal solution is unique. Thus, $k_i'=k_i$.

\end{proof}

\section{Proof for Lemma~\ref{lemma: k converges}}
\label{app: k converge}
In this section, we prove Lemma~\ref{lemma: k converges}, which is listed below.

\begin{lemma}
    For any class $i$,
    $\lim_{\systemindex\to\infty}\kR{\lambda_i\inasym}{\numcore\inasym - {\left(\numcore\inasym\right)}^\beta}\to \kR{\lambda_i}{\numcore}$.
\end{lemma}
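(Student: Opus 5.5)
First I would reduce the statement to a continuity claim about the optimizer of \eqref{eq:online opt} in its budget parameter. Problem \eqref{eq:online opt} is invariant under scaling all arrival rates and the core count by the same factor: both the objective and the constraint depend on $\lambda_i$ only through $\lambda_i/\lambda$ and $\lambda_i/\numcore$. Hence, as in the proof sketch,
\[\kR{\lambda_i\inasym}{\numcore\inasym-(\numcore\inasym)^\beta}=\kR{\lambda_i}{\numcore_\systemindex},\qquad \numcore_\systemindex:=\numcore-\numcore^\beta\systemindex^{\beta-1}.\]
Since $\beta<1$, we have $\numcore_\systemindex\uparrow\numcore$. It therefore suffices to show that the map $b\mapsto\kR{\lambda_i}{b}$ is continuous at $b=\numcore$. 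For $\systemindex$ large, $\numcore_\systemindex>\sum_i\lambda_i\E{X_i}$, because $\rho<1$. So the problem is feasible, with Slater's condition holding at $k_i=1$.

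Next I would convexify the problem. Change variables to $u_i:=1/s_i(k_i)$, the per-unit-work response time. The objective becomes linear in $u$. The constraint term $k_i/s_i(k_i)=u_i\,s_i^{-1}(1/u_i)$ is the perspective-type function $u\,\phi(1/u)$ with $\phi=s_i^{-1}$ convex, so it is convex in $u_i$. Monotonicity of $s_i$ makes the map $k_i\leftrightarrow u_i$ continuous and monotone on the relevant range. The feasible set lies in the compact box $u_i\in[1/\bar s_i,1]$, where $\bar s_i:=\sup_k s_i(k)$.

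I must handle the case where $\bar s_i$ is approached only as $k\to\infty$. As the paper's footnote notes, non-perfect elasticity together with concavity forces $k/s_i(k)\to\infty$. The budget constraint then bounds each $k_i$ uniformly for $b$ in a neighbourhood of $\numcore$, so all optimizers lie in a fixed compact set $[1,K]^\numclass$.

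With compactness in hand, I would apply Berge's maximum theorem. The feasible correspondence $b\mapsto\{k\in[1,K]^\numclass:\sum_i\lambda_i\E{X_i}k_i/s_i(k_i)\le b\}$ has a continuous constraint function and is compact-valued. It is upper hemicontinuous, and it is lower hemicontinuous thanks to the Slater point. Berge's theorem then gives that the optimal value $\VR{\lambda_i}{b}$ is continuous and that the argmin correspondence is upper hemicontinuous in $b$.

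The main obstacle is uniqueness of the optimizer, which is needed to upgrade upper hemicontinuity of the argmin to convergence of $\kR{\lambda_i}{\numcore_\systemindex}$. Under mere concavity the minimizer could be non-unique; in that case I would interpret $k_i^*$ as a selection and claim convergence of the value only. For the lemma as stated, I would first invoke strict concavity (Assumption~\ref{assumption:si''}, or the strict concavity used in Proposition~\ref{prop:app no job size}). This makes $u\,s_i^{-1}(1/u)$ strictly convex. The constraint is tight at any optimum: if it were slack, increasing some $k_i$ would strictly lower the objective wherever $s_i$ is still increasing. Two distinct optima would then have a strictly feasible midpoint with the same linear objective, which can be improved, a contradiction. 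So the optimizer is unique, and by upper hemicontinuity every limit point of $\kR{\lambda_i}{\numcore_\systemindex}$ equals $\kR{\lambda_i}{\numcore}$. Compactness of $[1,K]^\numclass$ then yields convergence.
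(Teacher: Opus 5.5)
\textbf{Relation to the paper.} You follow the paper's route: rescale to the budget $\numcore_\systemindex=\numcore-\numcore^\beta\systemindex^{\beta-1}$, then invoke continuity of the convexified program (in $u_i=1/s_i(k_i)$) in its budget. Your Berge-plus-uniqueness argument is more careful than the paper's one-line appeal to continuity and convexity. You are also right that convergence of the optimizer, rather than just the value, needs strict concavity: where $s_i$ is affine, $u\,s_i^{-1}(1/u)$ is affine and the argmin can be a segment.

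\textbf{The gap: compactness.} Non-perfect elasticity plus concavity does not force $k/s_i(k)\to\infty$. For concave $s_i$ one has $s_i(k)/k\to\lim_{k\to\infty}s_i'(k)\ge 0$, and no assumption makes this limit zero. For example, $s_1(k)=\tfrac{1}{2}k+1-\tfrac{1}{2k}$ satisfies all the standing assumptions and Assumption~\ref{assumption:si''}, yet $k/s_1(k)\uparrow 2$.

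Pair it with $s_2(k)=\sqrt{k}$, $\lambda_1\E{X_1}=\lambda_2\E{X_2}=1$ and $\numcore=10$. For every budget $b\in(3,10]$, all points $(k_1,1)$ are feasible. So no box $[1,K]^\numclass$ contains the feasible set, and your Berge step, which silently truncates to such a box, is unjustified. Yet for $c_2$ large relative to $c_1$ the KKT conditions have a solution with finite $k_1$. So the optimizer exists and the lemma is meaningful and true in this case. The paper's footnote only asserts finiteness of the optimizer, which the statement presupposes; it gives no bound uniform in the budget.

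\textbf{Repair within your framework.}
\begin{enumerate}
\item Fix $K>\max_i\kR{\lambda_i}{\numcore}$ and run Berge plus uniqueness on the problem truncated to $[1,K]^\numclass$. Uniqueness at $b=\numcore$ is inherited from global uniqueness, because the global optimizer lies in the box.
\item The truncated optimizers at budget $\numcore_\systemindex$ then converge to $\kR{\lambda_i}{\numcore}$, so for large $\systemindex$ they satisfy $k_i<K$ for all $i$.
\item A truncated optimizer that does not touch the truncation is globally optimal. The feasible set is convex in $u$ and the objective is linear, so moving from it toward any feasible point stays in the box for a while. The objective therefore cannot decrease along that segment.
\item Global uniqueness then identifies this point with $\kR{\lambda_i}{\numcore_\systemindex}$.
\end{enumerate}

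\textbf{Other fixes.} You could use the KKT structure: the multiplier is non-increasing in the budget and $g_i$ is non-increasing. Solving the multiplier equation by the intermediate value theorem gives $\kR{\lambda_i}{b}\le\kR{\lambda_i}{\numcore}$ for $b\le\numcore$. Or simply add the hypothesis $s_i(k)/k\to 0$.
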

\begin{proof}
    By definition, $\kR{\lambda_i\inasym}{\numcore\inasym - {\left(\numcore\inasym\right)}^\beta}$ is the optimal solution of the following optimization problem:
    \begin{equation}
        \begin{aligned}
        & \underset{k_i} {\text{minimize}}
        & & 
        \sum_{i=1}^\numclass \frac{\lambda_i\cdot \systemindex}{\lambda\cdot \systemindex}\frac{ \E{X_i} c_i }{s_i(k_i)}
        \\
        & \text{subject to}
        & & 
        \sum_{i=1}^\numclass\frac{\lambda_i\cdot \systemindex \E{X_i} k_i}{s_i(k_i)}\leq \numcore\cdot \systemindex - (\numcore\cdot \systemindex)^\beta .\\
        \end{aligned}
\end{equation}

Dividing by $\systemindex$ on both sides of the constraint, we have that the optimization problem is equivalent to:
\begin{equation}
        \begin{aligned}
        & \underset{k_i} {\text{minimize}}
        & & 
        \sum_{i=1}^\numclass \frac{\lambda_i}{\lambda}\frac{ \E{X_i} c_i }{s_i(k_i)}
        \\
        & \text{subject to}
        & & 
        \sum_{i=1}^\numclass\frac{\lambda_i\E{X_i} k_i}{s_i(k_i)}\leq \numcore - \numcore^\beta\cdot \systemindex^{\beta-1} .\\
        \end{aligned}
        \label{eq:optim in app}
\end{equation}

Note that both the objective function and constraint function are continuous in $k_i$. Moreover, the optimization problem is a convex optimization problem (by applying changing of variables, as also stated in \cite{li2024rentgpusbudget}). Thus, since $\lim_{\systemindex\to \infty} d^{\beta-1}\to 0$, we have that both the solution and the value of the optimization problem \eqref{eq:optim in app} converge to those of the optimization problem~\eqref{eq:online opt}.

\end{proof}

\section{Verification of the conditions in \cite{li2025simple}}
\label{app:verification}
As is also assumed in \cite{li2025simple}, we assume that the steady-state number of jobs in system exists (see \cite{asmussen2003applied} for detailed discussion on this condition).
We now verify that all other conditions listed in Lemma~\ref{lemma:goldberg} hold in mean field. 

\begin{enumerate}
    \item[(i)] $\E{A_i^2}<\infty$: In the $\systemindex^{th}$ system, we have that $\E{(A_i\inasym)^2} = \frac{\E{A_i^2}}{d^2}$, which is finite for any $\systemindex$.
    \item[(ii)] $\E{S^3}<\infty$: In the $\systemindex^{th}$ system, we have that $\E{\left(S_i\inasym\right)^3}=\E{\left(\frac{X_i}{s_i(k_i\inasym)}\right)^3}$, which is finite for any $\systemindex$. 
    \item[(iii)] $\frac{1}{\E{A}}<\numcore\frac{1}{\E{S}}$: In the $\systemindex^{th}$ system, define $\rho_i\inasym:= \frac{\lambda_i\inasym\E{S_i\inasym}}{\left(n_i'\right)\inasym}$. Then, this inequality is equivalent to $\rho_i\inasym<1$, and it suffices to verify the inequality~\eqref{eq: FW CAM proof eq 6} below.
    \item[(iv)] $\numcore(1-\rho)^2\geq 10^6\left(\E{(\frac{A}{\E{A}})^2}\right)^2$: In the $\systemindex^{th}$ system, this inequality is equivalent to 
    \begin{equation}
        \left(n_i'\right)\inasym (1-\rho_i\inasym)^2 \geq 10^6 \left(\frac{\E{\left(A_i\inasym\right)^2}}{\E{A_i\inasym}^2}\right)^2.
        \label{eq: FW CAM proof eq 6}
    \end{equation}
    Since we only care about systems when $\systemindex\to\infty$, we can simplify the terms by considering their orders with respect to $\systemindex$. 
By definition, we have $\lambda_i\inasym = \Theta(\systemindex)$.
By Lemma~\ref{lemma: k converges}, we have that $\E{S_i\inasym} = \Theta(1)$ and $k_i\inasym = \Theta(1)$. Thus we have that 

\begin{align}
    \left(n_i'\right)\inasym&= \lfloor\numcore\inasym \cdot \frac{r_i\inasym}{\sum_{j=1}^\numclass r_j\inasym} \cdot \frac{1}{k_i\inasym}\rfloor \nonumber\\
    &=\numcore\inasym \cdot \frac{r_i\inasym}{\sum_{j=1}^\numclass r_j\inasym} \cdot \frac{1}{k_i\inasym} + O(1)\nonumber \\
    &= \numcore\inasym \cdot \frac{\lambda_i\inasym \E{S_i\inasym}}{\sum_{j=1}^\numclass r_j\inasym} + O(1)\nonumber\\
    &\geq\lambda_i\inasym \E{S_i\inasym} \cdot \frac{\numcore\inasym}{\numcore\inasym-\left(\numcore\inasym\right)^\beta} + O(1) && \text{by \eqref{eq:FW CAM total effective load}} \label{eq:ni' bound}\\
    &=\Theta(d).
\end{align}

Moreover, we have that $\left(n_i'\right)\inasym \leq n\inasym =\Theta(d)$. Hence,
\begin{equation}
    \left(n_i'\right)\inasym =\Theta(d).
    \label{eq: ni' order}
\end{equation}

We can now analyze the order of $1-\rho_i\inasym$:

\begin{align*}
    1-\rho_i\inasym &= \frac{\left(n_i'\right)\inasym - \lambda_i\inasym\E{S_i\inasym}}{\left(n_i'\right)\inasym} \\
    &\geq \frac{\lambda_i\inasym \E{S_i\inasym} \cdot \frac{\numcore\inasym}{\numcore\inasym-\left(\numcore\inasym\right)^\beta} + O(1) - \lambda_i\inasym \E{S_i\inasym}}{\Theta(d)}\\
    &= \frac{1}{\Theta(d)}\left(\lambda_i\inasym \E{S_i\inasym}\frac{\left(\numcore\inasym\right)^\beta}{\numcore\inasym-\left(\numcore\inasym\right)^\beta}   +O(1)\right)\\
    &= \frac{1}{\Theta(d)}\left(\Theta(d)\frac{\Theta(d^\beta)}{\Theta(d)}   +O(1)\right)\\
    &=\Theta(d^{\beta-1})
    \end{align*}

Thus, the left hand side of inequality~\eqref{eq: FW CAM proof eq 6} is $\Theta(d^{2\beta-1})$ where $2\beta-1>0$, while the right hand side of \eqref{eq: FW CAM proof eq 6} is of order $\Theta(1)$ by definition. Hence the inequality always holds when $\systemindex\to \infty$.

\end{enumerate}

\section{Omitted proof for optimal policy of the single-arm bandit problem}
\label{app: proof for SAB}

In this appendix we prove the following omitted lemma in the proof of Lemma~\ref{lemma:single arm}.

\begin{lemma}
\label{lemma: app single arm}
    For any discount factor $\alpha\to 0$ and service penalty $\ell$, the optimal policy for the single-arm bandit problem always chooses the same action throughout the arm's lifetime. In other words,  the optimal policy either always chooses action $k=0$ and leaves the arm forever incurring some cost, or keeps choosing some action $k\geq 1$ until the arm enters state $\bot$ and completes.
\end{lemma}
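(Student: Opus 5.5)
The plan is an exchange (averaging) argument on the single-armed bandit of Definition~\ref{def: SAB}, for a class-$i$ arm in state $x$. Since the dynamics are deterministic, any policy is an open-loop allocation path $k(t)\in\{0\}\cup[1,\infty)$. Let $\tau$ be the completion time, with $\tau=\infty$ if $\int_0^\infty s_i(k(t))\,dt<x$. The cost of the path is
\[
J(k)=\int_0^{\tau}\left(\alpha c_i+\ell k(t)\right)e^{-\alpha t}\,dt .
\]
If $\tau=\infty$, then $J(k)\ge c_i$, which is the cost of always choosing $k=0$, with equality only when no cores are used. So we may restrict to paths with $\tau<\infty$ and show that each can be weakly improved by a constant action $\bar k\ge 1$ run until completion.

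First, remove idling. Any interval with $k(t)=0$ before $\tau$ can be deleted by shifting the later part of the path earlier. Completion then happens sooner, and the holding cost strictly decreases. The service-cost integrand gets the factor $e^{-\alpha t}$ evaluated at earlier times, which is larger, so it increases. This is the delicate point. As $\alpha\to0$, the increase is $O(\alpha)\cdot\ell\int k$, while the holding-cost saving is $\alpha c_i\times(\text{idle length})$ to first order. I therefore plan to do all comparisons in the $\alpha\to 0$ expansion used in \eqref{eq:eq12}, in the two scalings $\ell=\Theta(1)$ and $\ell=\Theta(\alpha)$, since the lemma is stated for $\alpha\to0$.

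Second, with $k(t)\ge1$ on $[0,\tau)$, define the time average $\bar k=\frac{1}{\tau}\int_0^\tau k(t)\,dt$. By concavity of $s_i$ (Jensen), $s_i(\bar k)\ge\frac1\tau\int_0^\tau s_i(k(t))\,dt = x/\tau$, so the constant policy $\bar k$ finishes by time $\tau'=x/s_i(\bar k)\le\tau$. Its undiscounted core usage is $\bar k\tau'\le\bar k\tau=\int k$. To leading order in $\alpha$, the costs are $\ell\int k + \alpha c_i\tau+o(\alpha)+o(\ell)$ versus $\ell\bar k\tau'+\alpha c_i\tau'+\dots$, so the constant action is no worse. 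This is the same averaging step as Lemma~1 of \cite{li2024rentgpusbudget}, which the paper already invokes in Proposition~\ref{prop:app no job size}.

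The main obstacle is making the error terms uniform. The $o(\alpha)$ remainders depend on $\tau$ and $\int k$, which could be unbounded over paths. I would handle this by first bounding $\tau$ a priori: a path with $\tau$ larger than some $T(\alpha)$ with $\alpha T\to\infty$ already costs about $c_i$, no better than $k=0$. A path using enormous $\int k$ costs at least $\ell\int k$ times a factor near one, again dominated. On the remaining compact set of $(\tau,\int k)$, the expansions hold uniformly. Alternatively, I would avoid the expansions by an exact version: replace the time average by the discount-weighted average $\bar k_\alpha=\int_0^\tau k e^{-\alpha t}dt\big/\int_0^\tau e^{-\alpha t}dt$ and apply Jensen with the probability measure $e^{-\alpha t}dt$ on $[0,\tau]$. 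I would check that the constant action $\bar k_\alpha$ finishes no later than under the original path (using $\int_0^\tau s_i(k)\,dt\le$ the corresponding weighted bound) and has no larger discounted service cost. Together these would give the claim, that an optimal policy either always chooses $0$ or uses one constant $k\ge1$ until completion.
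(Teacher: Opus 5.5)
Your main route is sound, and it differs from the paper's. The paper handles the $k=0$ branch by a state argument: if $0$ is optimal at $x$, the state never moves, so $0$ is chosen forever. For the serving branch it replaces $k(\cdot)$ on $[0,t']$ by the discount-weighted average $k'$ (your $\bar k_\alpha$; its normalising factor should be $\alpha/(1-e^{-\alpha t'})$), so the discounted service penalties agree exactly, and then appeals to $\alpha\to0$ and strict concavity. You take the plain time average and compare undiscounted costs. Written out, that is an exact inequality with an explicit error factor:
\[
J(\bar k)\le(\alpha c_i+\ell\bar k)\tau'\le\alpha c_i\tau+\ell\int_0^\tau k(t)\,dt\le e^{\alpha\tau}J(k).
\]
Jensen here needs genuine concavity, i.e.\ Assumption~\ref{assumption:si''}; the ratio condition called concavity in Section~\ref{sec:setting} would not suffice. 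You can drop the idle-removal step entirely: average over the busy set $B=\{t<\tau: k(t)\ge 1\}$ instead of $[0,\tau)$, so that $\tau'\le |B|\le\tau$ and the chain holds verbatim. If you keep that step, note that it is exact. An idle gap of length $L$ at time $t_0$, before a continuation whose own discounted cost is $W$, changes the total by $e^{-\alpha t_0}(1-e^{-\alpha L})(c_i-W)$. This settles the sign you left open when $\ell=\Theta(1)$, where both first-order terms are of order $\alpha L$.

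Your fallback \emph{exact version} is precisely the paper's construction, and it cannot be completed; the reason shows what the lemma can mean. Fix $\tau$ and $\alpha>0$, and minimise $\int_0^\tau \ell k(t)e^{-\alpha t}\,dt$ subject to $\int_0^\tau s_i(k(t))\,dt=x$. The pointwise condition $\ell e^{-\alpha t}=\mu s_i'(k(t))$, with $s_i''<0$, makes the optimal $k(t)$ strictly increasing wherever it exceeds $1$: discounting rewards deferring core usage. For example, with $s_i(k)=\sqrt{k}$, $\ell=\gamma\alpha$, $\gamma<c_i$ and small $\alpha$, the optimal path has $\sqrt{k(t)}\propto e^{\alpha t}$. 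Any optimal path must solve this problem for its own completion time. Hence, at every fixed $\alpha>0$ in the interior case $\ell/(\alpha c_i)<f_i(1)$ of Lemma~\ref{lemma:single arm}, no constant action is optimal. On the optimal path your step ``$\bar k_\alpha$ finishes no later'' must therefore fail. Jensen under $e^{-\alpha t}dt$ only bounds $s_i(\bar k_\alpha)$ below by the discount-weighted mean of $s_i(k(t))$, which is smaller than $x/\tau$ when $k$ increases. So the lemma holds only asymptotically: a constant action is optimal up to a factor $1+O(\alpha)$. Your closing sentence should be stated that way, and the display above proves exactly this once $\alpha\tau\to0$ on competitive paths.

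For that last step, your cutoff $T(\alpha)$ with $\alpha T\to\infty$ targets the wrong thing. It leaves the range $1\ll\tau\le T$, where $e^{\alpha\tau}$ is not close to $1$. Instead, compare $J(k)\ge c_i(1-e^{-\alpha\tau})+e^{-\alpha\tau}\ell x$ (from $s_i(k)\le k$) with the cost of $k\equiv1$, which is at most $(\alpha c_i+\ell)x$. This gives $(1-e^{-\alpha\tau})(c_i-\ell x)\le\alpha c_i x$, hence $\tau=O(1)$ in both scalings away from the tie $\ell x=c_i$. When $\ell x\ge c_i$, the same bound shows that $k\equiv0$ is exactly optimal.
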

\begin{proof}
    First, if the optimal action at state $x$ is to choose $k=0$, since the state will not change, we have that the optimal policy will choose $k=0$ forever. 

    Otherwise, the lemma holds by concavity of the speedup function $s_i$. Mathematically, suppose the optimal policy $\pi$ chooses action $k(t)\geq 1$ at time $t$ for $t\in [0,t']$. One can create another policy $\pi'$, which use the same action throughout this $t'$ time, incurring the same service penalty but does more work. Mathematically, define
    \[k'=\frac{1}{1-e^{-\alpha t'}}\int_{0}^{t'} k(t) e^{-\alpha t} dt.\]
    The policy $\pi'$ allocates $k'$ cores to the job throughout time $[0,t']$. Compared with $\pi$, the two policies incur exactly the same total discounted service penalty for any $\alpha$ and $\ell$. However, the total work $\pi'$ does is $s_i(k') t'$, while the total work $\pi$ does is $\int_{0}^{t'} k(t) dt$. Note that when $\alpha\to 0$, $k'$ goes to the time-average value of $k(t)$ within time $[0,t']$. Thus, by strict concavity of the speedup function, we have that $\pi'$ is strictly better than $\pi$ when $\alpha\to 0$.

\end{proof}

We also prove the following lemma.
\begin{lemma}
    For any class $i$, the function $f_i(k):=\frac{s_i'(k)}{s_i(k)-ks_i'(k)}$ is strictly decreasing with $k$.
\label{lemma:f decreasing}
\end{lemma}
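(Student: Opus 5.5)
We prove that $f_i$ is strictly decreasing by differentiating it directly on $k>1$, using Assumption~\ref{assumption:si''} ($s_i$ twice differentiable with $s_i''<0$). First we check that the denominator $D(k):=s_i(k)-ks_i'(k)$ is strictly positive. Note that $D'(k)=s_i'(k)-s_i'(k)-ks_i''(k)=-ks_i''(k)>0$, so $D$ is strictly increasing on $(1,\infty)$. At $k=1^+$, $D(1)=1-s_i'(1)$, and sub-linearity with $s_i(1)=1$ gives $s_i'(1)\le 1$, so $D(1)\ge 0$. Hence $D(k)>0$ for all $k>1$. As a cross-check, concavity of $s_i(k)/k$ implies $\frac{d}{dk}\frac{s_i(k)}{k}=-\frac{D(k)}{k^2}\le 0$, which also gives $D\ge 0$, and strict monotonicity of $D$ gives strictness. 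Also $s_i'(k)\ge 0$ by monotonicity.

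Next we compute the derivative by the quotient rule:
\[
f_i'(k)=\frac{s_i''(k)D(k)-s_i'(k)D'(k)}{D(k)^2}
=\frac{s_i''(k)\bigl(s_i(k)-ks_i'(k)\bigr)+ks_i'(k)s_i''(k)}{D(k)^2}
=\frac{s_i''(k)\,s_i(k)}{D(k)^2}.
\]
Since $s_i''(k)<0$ and $s_i(k)\ge s_i(1)=1>0$, we get $f_i'(k)<0$ for all $k>1$. So $f_i$ is strictly decreasing on $(1,\infty)$.

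It remains to handle the endpoint $k=1$, where $f_i(1)$ is defined through the right derivative $s_i'(1)$. We argue by continuity: $s_i'$ is continuous from the right at $1$ because $s_i'$ is monotone (as $s_i''<0$) and $s_i$ is concave. Therefore $f_i(1)=\lim_{k\to1^+}f_i(k)\ge f_i(k_0)$ for every $k_0>1$, and strictness follows by comparing against some intermediate point.

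The main obstacle is the degenerate case $D(1)=0$, i.e.\ $s_i'(1)=1$, where $f_i(1)=+\infty$. In that case we interpret $f_i(1)=\infty$, which is still consistent with $f_i$ being decreasing. Beyond this, the computation is routine; the key observation is the cancellation in the numerator, which leaves only $s_i''s_i$.
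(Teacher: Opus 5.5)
Your proof is correct and is essentially the paper's argument. The paper differentiates the reciprocal $1/f_i(k)=s_i(k)/s_i'(k)-k$ to get $-s_i(k)s_i''(k)/s_i'(k)^2>0$, while you differentiate $f_i$ directly to get $s_i(k)s_i''(k)/D(k)^2<0$; this is the same cancellation, with the sign coming from $s_i''<0$ and $s_i>0$. Your version is somewhat more complete than the paper's in three respects: you check explicitly that $D(k)>0$, which the paper's reciprocal step tacitly needs because concluding that $f_i$ decreases from $1/f_i$ increasing requires $f_i>0$; you handle the endpoint $k=1$; and you cover the case $s_i'(1)=1$, which really can occur (e.g.\ $s_i(k)=1+\ln k$).
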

\begin{proof}
    We only need to show that $\frac{1}{f_i(k)}$ is increasing, which is $\frac{s_i(k)}{s_i'(k)} - k.$

    We can take the derivative of this expression:
    \begin{align*}
        \frac{d(\frac{1}{f_i(k)})}{dk} &= \frac{s_i'(k)^2 - s_i(k)s_i''(k)}{s_i'(k)^2} - 1\\
        &= \frac{-s_i(k)s_i''(k)}{s_i'(k)^2} \\
        &> 0.
    \end{align*}
    The last inequality holds because the speedup functions are (strictly) concave (Assumption~\ref{assumption:si''}).
\end{proof}

\section{Proof for job sizes that follow phase-type distribution}
\label{app:ph type}

In this appendix, we present the proof for the mean field optimality of \policytwo under the assumption of phase-type distributed job sizes. The proof is similar to that presented in Section~\ref{sec: policy two proof}.

\begin{assumption}
\label{assumption:PH}
    Assume the job size distribution $X_i$ is a phase-type distribution defined by the following Markov chain: There are $K_i$ states. Let $\initp_i=(p_{iu})$ denote the vector of the initial probability to be in state $u$, and let $\transm_i$ be the sub-generator matrix of the Markov chain, i.e., $\transm_i$ is a $K_i\times K_i$ matrix, and $\transm_i(u,v)$ is the transition rate from $u$ to $v$ $u\neq v$, $\transm_i(u,u):=-\sum_{v\neq u} \transm_i(u,v) - q_i(u,0)$ where $q_i(u,0)$ is the transition rate from state $u$ to the absorbing state.  
\end{assumption}

Under the mean field regime, the limiting dynamics of the queueing system can be captured by an ODE~\cite{kurtz1970solutions,gast2017refined}. Noticeably, even for a GI arrival process, the first-order drift is the same as a Poisson arrival, thus it can be treated as if it is a Poisson arrival in the ODE (Theorem 3 in ~\cite{whittle1988restless}). Thus, the system under policy \policytwo in mean field can be described by the following ODE: 

\begin{align}
    &\dot{\massv}_i = \lambda_i \initp_i + \massv_i \transm_i \cdot s_i(g_i(\ell)), \label{eq:ODE} \\
    &\text{where $\ell$ is determined by }\sum_{i=1}^\numclass \massv_i \cdot \textbf{1}\cdot g_i(\ell) = \numcore,\text{ and $g_i$ is defined in Definition~\ref{def:policytwo}.}\nonumber
\end{align}
Here $\massv_i$ is the vector of the normalized number of class-$i$ jobs, and $\massv_i(u)$ is the normalized number of class-$i$ jobs in state $u$. For a better understanding of the ODE \eqref{eq:ODE}, $\lambda_i \initp_i$ is the arrival rate of class-$i$ jobs, $\massv_i \transm_i$ represents the rate of transition between the states of class-$i$ jobs (if running on 1 core), and $s_i(g_i(\ell))$ is the speedup that any class-$i$ job gets. Finally, by the definition of \policytwo, $\ell$ is picked such that the total number of cores used reaches $\numcore$, which is $\sum_{i=1}^\numclass \massv_i \cdot \textbf{1}\cdot g_i(\ell) = \numcore$.

We can now state our main theorem. Note that the asymptotic optimality of \policytwo relies on the assumption that ODE \eqref{eq:ODE} has a global attractor. Similar assumptions of the existence of a global attractor are also made in prior works (e.g., \cite{weber1990,Verloop_2016}, see \cite{Verloop_2016} for a detailed discussion).  

\begin{theorem}[mean field optimality of \policytwo]
    \label{thm:mean field optimality of policytwo}
    If ODE~\eqref{eq:ODE} has a global attractor, then \policytwo is mean field optimal. 
\end{theorem}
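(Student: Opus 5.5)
The plan mirrors the proof of Theorem~\ref{thm:mean field optimality of policytwo exp}, replacing the scalar exponential balance equation with a vector balance equation for the phase-type chain. Since ODE~\eqref{eq:ODE} has a global attractor, that attractor must be a stationary point. It therefore suffices to show two things: the stationary point of \eqref{eq:ODE} is unique, and at that point the normalized time-average holding cost equals $\VR{\lambda_i}{\numcore}$. The lower bound \eqref{eq: lower bound} then gives mean field optimality.

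First I fix $\ell$ and solve $\dot{\massv}_i=0$, i.e. $\lambda_i\initp_i + s_i(g_i(\ell))\,\massv_i\transm_i = 0$. Because $\transm_i$ is a sub-generator of a transient (absorbing) chain, it is invertible and $-\transm_i^{-1}$ has nonnegative entries. This gives
\[\massv_i = \frac{\lambda_i}{s_i(g_i(\ell))}\,\initp_i(-\transm_i)^{-1}.\]
The standard phase-type identity $\initp_i(-\transm_i)^{-1}\textbf{1}=\E{X_i}$ then yields $\massv_i\textbf{1} = \frac{\lambda_i\E{X_i}}{s_i(g_i(\ell))}$. This is exactly \eqref{eq:14 exp}, so the total class-$i$ mass at stationarity depends on the distribution only through its mean. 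That is the key observation making the phase-type case reduce to the exponential one. It is also the step I expect needs the most care: one must justify invertibility of $\transm_i$ and nonnegativity of the solution, and note that all phases of class $i$ share the same allocation $g_i(\ell)$, so the speedup factors out of the linear system.

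Next, substituting into the core constraint $\sum_i \massv_i\textbf{1}\,g_i(\ell)=\numcore$ gives precisely \eqref{eq:eq15 exp}. I reuse the monotonicity argument from the exponential proof. By Lemma~\ref{lemma:f decreasing}, $g_i$ is strictly decreasing on $\ell<f_i(1)$, and under Assumption~\ref{assumption:si''} the map $k\mapsto k/s_i(k)$ is strictly increasing. So the left side of \eqref{eq:eq15 exp} is strictly decreasing for $\ell<\max_i f_i(1)$, tends to $\infty$ as $\ell\to 0$ (since $g_i(\ell)\to\infty$ and $k/s_i(k)\to\infty$, as speedups are not perfectly elastic), and equals $\sum_i\lambda_i\E{X_i}<\numcore$ for $\ell\ge\max_i f_i(1)$. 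Hence there is a unique $\ell^*$, and the $\massv_i$ are then uniquely determined.

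Finally, at the attractor every class-$i$ job receives $g_i(\ell^*)$ cores throughout its life. Little's law applied to $\massv_i\textbf{1}$ gives mean response time $\E{X_i}/s_i(g_i(\ell^*))$, so the limiting cost is $\frac{1}{\lambda}\sum_i c_i\lambda_i\E{X_i}/s_i(g_i(\ell^*))$. By Lemma~\ref{lemma:appendix D}, obtained from the KKT conditions of \eqref{eq:online opt} whose stationarity condition is exactly $f_i(k_i)=\ell/c_i$, this equals $\VR{\lambda_i}{\numcore}$, which matches the lower bound \eqref{eq: lower bound}.
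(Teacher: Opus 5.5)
Your proposal is correct and follows the paper's proof essentially step for step: the attractor is a stationary point; you invert $\transm_i$ and use $\initp_i(-\transm_i)^{-1}\textbf{1}=\E{X_i}$ to reduce to \eqref{eq:eq15}; strict monotonicity gives uniqueness of $\ell^*$; and Lemma~\ref{lemma:appendix D} identifies the cost with $\VR{\lambda_i}{\numcore}$. The one addition, your existence step (the left side of \eqref{eq:eq15} blowing up as $\ell\to 0$), is something the paper skips, but it does not follow from non-perfect elasticity: $s(k)=k/2+1-1/(2k)$ satisfies all the stated assumptions, including $s''<0$, yet $k/s(k)\to 2$ and $f(k)\to 1/2$, so it needs an extra condition such as $s_i(k)/k\to 0$, which the paper also tacitly assumes.
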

\begin{proof}
Suppose ODE~\eqref{eq:ODE} has a global attractor, then it must be a stationary point. We next prove that there exists a unique stationary point for ODE~\eqref{eq:ODE}, and the resulting normalized time-average total holding cost is $\VR{\lambda_i}{\numcore}$. Note that if this statement holds, we have that \policytwo is mean field optimal because its time-average total holding cost converges to the lower bound ~\eqref{eq: lower bound}.

    For the stationary point $\massv_i$, since $\dot{\massv}_i=0$, we have that 
    \[\massv_i \transm_i \cdot s_i(g_i(\ell))= - \lambda_i \initp_i.\]
    Since matrix $\transm_i$ is inversible (Lemma 2.2.1 in \cite{latouche1999introduction}), we have that 
    \begin{equation}
        \massv_i = - \frac{\lambda_i}{s_i(g_i(\ell))} \initp_i\left(\transm_i\right)^{-1}.
        \label{eq:eq14}
    \end{equation}
    Moreover, we have that $\E{X_i} = -\initp_i\left(\transm_i\right)^{-1}\textbf{1}$ (Theorem 2.2.3 in \cite{latouche1999introduction}). Thus, by multiplying $\textbf{1}$ on both sides of \eqref{eq:eq14}, we have that 
    \[\massv_i\textbf{1} = \frac{\lambda_i}{s_i(g_i(\ell))}\E{X_i}.\]
    Now we can find the stationary point by solving the equation 
    \[\sum_{i=1}^\numclass \massv_i\textbf{1}\cdot g_i(\ell)=\numcore,\]
    which is equivalently
    \begin{equation}
        \sum_{i=1}^\numclass \frac{\lambda_i g_i(\ell)}{s_i(g_i(\ell))}\E{X_i} = n.
        \label{eq:eq15}
    \end{equation}

    Note that $g_i$ is strictly decreasing for any $\ell< f_i(1)$, also $\frac{k}{s_i(k)}$ is strictly increasing with $k$ (by Assumption~\ref{assumption:si''}), thus we have that the left hand side is strictly decreasing with $\ell<\max_i f_i(1)$. Moreover, for any $\ell\geq \max_i f_i(1)$, we have $g_i(\ell)=1$ for any $i$, and the left hand side is equal to $\sum_{i}\lambda_i \E{X_i}<n$. This means there exists a unique $\ell^*$ solving \eqref{eq:eq15}. Then, the stationary point $\massv_i$ can be solved from \eqref{eq:eq14}.
    
    Since the dynamics of the system under \policytwo in mean field is captured by ODE~\eqref{eq:ODE} whose global attractor is $\massv_i$, we have that in the limit, any class-$i$ job is assigned $g_i(\ell^*)$ cores. Thus
    the mean response time of a class-$i$ is $\frac{\E{X_i}}{s_i(g_i(\ell^*))}$, and the time-average total holding cost in the limit is
    \[\lim_{\systemindex\to\infty} \HC_{\ourproblem\inasym}^{\policytwo} = \frac{1}{\lambda}\sum_{i=1}^\numclass \frac{c_i\lambda_i\E{X_i}}{s_i(g_i(\ell^*))}.\]
    One can use Lagrange multiplier method to solve the optimization problem~\eqref{eq:online opt} and get $\VR{\lambda_i}{\numcore} = \frac{1}{\lambda}\sum_{i=1}^\numclass \frac{c_i\lambda_i\E{X_i}}{s_i(g_i(\ell^*))}$ (Lemma~\ref{lemma:appendix D}). Thus we have that \policytwo is mean field optimal by \eqref{eq: lower bound}.
\end{proof}

\section{Solution of the optimization problem~\ref{eq:online opt}}

\begin{lemma}
\label{lemma:appendix D}
We have that the optimal value of the optimization~\ref{eq:online opt} satisfies
    \[\VR{\lambda_i}{\numcore} = \frac{1}{\lambda}\sum_{i=1}^\numclass \frac{c_i\lambda_i\E{X_i}}{s_i(g_i(\ell^*))},\]
    where $\ell^*$ is defined by $\sum_{i=1}^\numclass \frac{\lambda_i g_i(\ell^*)}{s_i(g_i(\ell^*))}\E{X_i} = n.$
\end{lemma}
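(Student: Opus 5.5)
We plan to prove Lemma~\ref{lemma:appendix D} by writing down the KKT conditions for \eqref{eq:online opt} and identifying the Lagrange multiplier with the market-clearing $\ell^*$. First we pass to a convex reformulation: in the variables $y_i := 1/s_i(k_i)$ (equivalently $k_i = s_i^{-1}(1/y_i)$) the objective is linear and the constraint function $\frac{k_i}{s_i(k_i)}$ becomes convex in $y_i$, as in \cite{li2024rentgpusbudget}. Slater's condition holds because $k_i=1$ for all $i$ gives $\sum_i \lambda_i\E{X_i}<\numcore$. Hence KKT is necessary and sufficient, and we may work directly in the $k_i$ variables as long as we keep the boundary $k_i\ge 1$ in mind.

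Next we form the Lagrangian
\[
L=\sum_i \frac{\lambda_i}{\lambda}\frac{\E{X_i}c_i}{s_i(k_i)}+\eta\Big(\sum_i \frac{\lambda_i\E{X_i}k_i}{s_i(k_i)}-\numcore\Big),\qquad \eta\ge 0.
\]
This problem separates across classes. For class $i$ we minimize $\frac{\lambda_i\E{X_i}}{\lambda}\big[\frac{c_i}{s_i(k)}+\lambda\eta\frac{k}{s_i(k)}\big]$ over $k\ge1$. This is exactly the single-arm expression from Case 2 in the proof of Lemma~\ref{lemma:single arm}, with $\ell/\alpha$ replaced by $\lambda\eta$. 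The derivative computed there factors as
\[
\frac{c_i}{s_i(k)^2}\,(s_i(k)-ks_i'(k))\,\Big(\frac{\lambda\eta}{c_i}-f_i(k)\Big).
\]
Because $s_i(k)-ks_i'(k)>0$ by strict concavity and $s_i(1)=1$, and because $f_i$ is strictly decreasing (Lemma~\ref{lemma:f decreasing}), the minimizer is $k_i=1$ if $\lambda\eta/c_i\ge f_i(1)$ and $k_i=f_i^{-1}(\lambda\eta/c_i)$ otherwise. Setting $\ell:=\lambda\eta$, we get $k_i=g_i(\ell)$. There is a small mismatch in the threshold of Definition~\ref{def:policytwo}, which reads ``$\ell\ge f_i(1)$'' instead of ``$\ell/c_i\ge f_i(1)$''. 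We will read it as the latter, since that is what makes $g_i$ continuous, and note this.

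Complementary slackness then fixes $\ell$. We know $\eta>0$, since otherwise every $k_i\to\infty$ would be optimal, contradicting the finite optimum forced by the speedups not being perfectly elastic. So the constraint is tight: $\sum_i \frac{\lambda_i g_i(\ell)\E{X_i}}{s_i(g_i(\ell))}=\numcore$. By the monotonicity argument already given in the proof of Theorem~\ref{thm:mean field optimality of policytwo exp}, this equation has a unique root, and that root is $\ell^*$. Substituting $k_i=g_i(\ell^*)$ into the objective gives the claimed value of $\VR{\lambda_i}{\numcore}$.

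The main obstacle is rigor around the boundary $k_i=1$ and the use of $f_i(1)$ as a one-sided derivative. We will handle it by checking directly that the one-variable function is decreasing on $[1,f_i^{-1}(\ell/c_i)]$ and increasing afterwards (or increasing throughout), which yields the piecewise minimizer without invoking any interior stationarity at $k=1$.
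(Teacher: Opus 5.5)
This is correct and is essentially the paper's proof: a Lagrangian/KKT argument in which the sign of $\gamma/c_i - f_i(k)$ gives $k_i = g_i(\gamma)$ for each class, the constraint is tight, and uniqueness of the root identifies the multiplier with $\ell^*$. Your $\lambda\eta$ is the paper's $\gamma$, since the paper drops the $1/\lambda$ from the objective. Your additions supply rigor the paper leaves implicit: the convex reformulation with Slater, the unimodality check at the boundary $k_i=1$, and reading the threshold as $\ell/c_i \ge f_i(1)$, which is exactly what the paper's proof uses when it writes that $k_i=1$ iff $f_i(1)\le \gamma/c_i$.
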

\begin{proof}
Using Lagrange multiplier method, we define 
\[L:=\sum_{i=1}^\numclass \frac{ \lambda_i\E{X_i} c_i }{s_i(k_i)} + \gamma \cdot \sum_{i=1}^\numclass\frac{\lambda_i\E{X_i} k_i}{s_i(k_i)}.\]
Then by the KKT condition, we have that the optimal policy $k_i$ satisfies:  
\begin{align*}
    & \frac{\partial L}{\partial k_i}=0 \quad \text{or} \quad \big( k_i=1 \ \text{and}\ \frac{\partial L}{\partial k_i} \Big|_{k_i=1}\geq 0 \big)\\
    & \sum_{i=1}^\numclass\frac{\lambda_i\E{X_i} k_i}{s_i(k_i)} = n.
\end{align*}

Note that \[\frac{\partial L}{\partial k_i} = \lambda_i\E{X_i}\left(\frac{c_i\cdot (-1)s_i'(k_i)}{s_i(k_i)^2} + \gamma \frac{s_i(k_i) - ks_i'(k)}{s_i(k_i)^2}\right).\]

Thus, $\frac{\partial L}{\partial k_i}=0$ is equivalent to 
\[\frac{\gamma}{c_i} = \frac{s_i'(k_i)}{s_i(k_i)-ks_i'(k_i)} = f_i(k_i),\]
and $k_i=1$ if and only if $f_i(1)\leq \frac{\gamma}{c_i}$. Thus we have $k_i=g_i(\gamma)$.

Now since both $\ell^*$ and $\gamma$ are solution of the equation $\sum_{i=1}^\numclass \frac{\lambda_i g_i(\ell)}{s_i(g_i(\ell))}\E{X_i} = n,$ which only has a unique solution, we have that $\gamma=\ell^*$. Thus the optimal $k_i=g_i(\ell^*)$, and the optimal value of the optimization problenm~\eqref{eq:online opt} is \[\VR{\lambda_i}{\numcore} = \frac{1}{\lambda}\sum_{i=1}^\numclass \frac{c_i\lambda_i\E{X_i}}{s_i(g_i(\ell^*))}.\]

\end{proof}

\section{Simulation Details}
\label{app:sim}
The workload used for simulations is as follows:

\noindent\textbf{Class 1:} Poisson arrivals ($\lambda_1=1.0$), Exponential job sizes with rate 5.0, holding cost $c_1=1$, speedup function $s_1(k)=k^{0.3}$; 

\noindent\textbf{Class 2:} Poisson arrivals ($\lambda_2=2.0$), Hyper-Exponential job sizes (Exponential with rate 1.0 with probability 0.5, Exponential with rate 9.0 otherwise), holding cost $c_2=2$, speedup function $s_2(k)=k^{0.5}$;

\noindent\textbf{Class 3:} Poisson arrivals ($\lambda_3=5/3$), Exponential job sizes with rate 3.0, holding cost $c_3=1$, speedup function following Amdahl's law with 20\% sequential portion (i.e., $s_3(k)=\frac{1}{0.2+0.8/k}$).

In our experiments,  s the number of cores, $n$, scales, the arrival rate of each class scales proportionally to keep the system load fixed at $\mathbf{\rho=0.25}$.

\end{document}